\documentclass[10pt,twocolumn]{article}
\usepackage{cite}
\usepackage{amsmath,amssymb,amsfonts,amsthm}
\usepackage{multicol}
\usepackage[table,xcdraw]{xcolor}
\setlength{\columnsep}{0.75cm}
\usepackage{caption}
\usepackage{graphicx}
\graphicspath{{./img/}}
\usepackage{csquotes}
\usepackage{todonotes}
\usepackage{subfigure}
\usepackage{csquotes}
\usepackage{lipsum}

\def\BibTeX{{\rm B\kern-.05em{\sc i\kern-.025em b}\kern-.08em
    T\kern-.1667em\lower.7ex\hbox{E}\kern-.125emX}}
\usepackage{hyperref}
\usepackage[margin=1.75cm]{geometry}
\usepackage[hpos=0.17\paperwidth,vpos=0.97\paperheight,angle=0,scale=0.3]{draftwatermark}
\SetWatermarkText{Preprint}
\SetWatermarkLightness{0.5}
\SetWatermarkText{Open Access License: CC-BY-SA 4.0}

\newtheorem{definition}{Definition}[section]
\newtheorem{theorem}{Theorem}[section]
\newtheorem{corollary}{Corollary}[theorem]
\newtheorem{example}{Example}[theorem]
\newtheorem{lemma}[theorem]{Lemma}

\newcommand{\longhookrightarrow}{\lhook\joinrel\longrightarrow}

\DeclareMathOperator{\EE}{\mathbb{E}}

\title{A Mathematical Theory of Payment Channel Networks}

\author{Rene Pickhardt\thanks{Due to personal circumstances related to my mental health I have not been able to receive a PhD - despite being a researcher since 2011. If you can give someone a PhD and think this work and my prior work about reliability in payment channel networks since 2018 reaches the merits of a PhD, then please feel free to get in contact with me. Thank you!} (r.pickhardt@gmail.com)}
\begin{document} 
\maketitle

\begin{abstract}
We introduce a geometric theory of payment channel networks that centers the polytope $W_G$ of feasible wealth distributions, with liquidity states $L_G$ naturally projecting onto $W_G$ via strict circulations. A payment is feasible iff the post-transfer wealth remains in $W_G$. As every infeasible payment needs at least one on-chain transaction to change the topology of the network this yields a simple throughput law linking off-chain scale to feasibility: If $\zeta$ is on-chain settlement bandwidth and $\rho$ the expected fraction of infeasible payments, the sustainable off-chain bandwidth satisfies
\[
\mathcal{S}=\frac{\zeta}{\rho}
\]

We demonstrate for every $S\subset V$, the total wealth $\omega(S)$ of $S$ must lie in an interval whose width equals the cut capacity $C(\delta(S))$. This is used to quantify how multi-party channels (aka coinpools / channel factories) expand $W_G$: Modeling a $k$-party channel as a $k$-uniform hyperedge strictly widens every cut in expectation, so $W_G$ grows monotonically in $k$. In particular, a single node’s expected accessible wealth scales linearly with $k/n$.

Separately, we analyze depletion: Under linear (asymmetric) fees, cost-minimizing flow within a wealth fiber pushes cycles to the boundary, generically depleting channels except for a residual spanning forest. The theory suggests three mitigation levers: (i) \textbf{symmetric fees} per direction (simple but likely unpopular) (ii) \textbf{convex/tiered fees} (local flow control but at odds with source routing without broadcasting realtime liquidity information), and (iii) \textbf{coordinated replenishment} (choose an optimal circulation within a fiber that is closest to the wishes of the participants).

The theory in this paper explains why two-party channel networks will struggle to scale off-chain payment bandwidth and why multi-party primitives are structurally more capital-efficient and yield higher expected payment bandwidth. We furthermore show how subtle protocool choices like the structure of the fee function or pricing and coordination keep operation inside the feasible region, improving reliability of feasible payments.
\end{abstract}

\section{Introduction}
The Bitcoin protocol faces significant limitations, notably its support for only about seven transactions per second and the requirement for users to wait several minutes on average for transaction confirmation. To address these issues, the Lightning Network was developed as a second-layer protocol. Barring exploitation of known DoS attacks \cite{harris2020flood,shikhelman2022unjamming,tochner2020route}, the Lightning Network allows users to conduct near real-time Bitcoin payments. Additionally, it technically supports a higher payment throughput compared to the Bitcoin protocol.

However, these enhancements come at the cost of reduced payment reliability on the Lightning Network, particularly as the desired payment amounts increase. To quantify this reduced reliability, we examine the geometry of two-party payment channel networks and their generalization to \( k \)-party payment channel networks. By analyzing the high-dimensional geometric properties of these networks, we explain the emergence of two phenomena frequently observed by Lightning Network participants, often associated with reduced payment reliability:

\begin{enumerate}
\item \textbf{Channel Depletion}: Most of the liquidity in payment channels is likely to be controlled by one of the two peers maintaining the channel.
\item \textbf{Infeasible Payments}: Even with numerous payment attempts and optimal routing techniques, a fraction of payments cannot be fulfilled.
\end{enumerate}

Node operators often cite two primary reasons for channel depletion. First, the lack of a circular economy among network users results in the existence of source and sink nodes, leading to natural channel drains \cite{pickhardt2022valves}. Second, it is believed that node operators have not yet optimized routing fee settings \cite{zhang2023rethinking}, which affect senders' payment route selections. Routing fees are thought to be useful for better flow control and channel balance \cite{ren2018optimal}. Payment failure rates are often attributed to channel depletion \cite{alscher2023price}, suboptimal payment routing strategies, liquidity management issues within channels \cite{pickhardt2020imbalance}, and users' incomplete information about the network's liquidity state \cite{pickhardt2021security}.

We observe a contradiction between these reasons for channel depletion. While routing fees impact flow control, they cannot alter the distribution of payment requests among network participants, which leads to the emergence of sources and sinks. Regarding payment failure rates, the uncertainty about channel states has been addressed \cite{pickhardt2021security}, and optimal methods for making routing decisions despite incomplete information are known \cite{pickhardt2021optimally}. Probability density functions for the depleted channel model also exist \cite{bitromortac2024blazing,rossi2024channel}.

\section{Review of some Graph Theory to describe Payment Channel Networks}


We begin by reviewing the standard mathematical model used to describe the Lightning Network.
For simplicity, we ignore routing fees and other metadata of channels throughout this document.
The Lightning Network is typically modeled as an undirected weighted multigraph.
However, for our considerations, multi-edges can be combined into one larger edge.
Thus, we start with a weighted and undirected graph \(G(V,E,cap)\).
\(V\) corresponds to the set of \(n = |V|\) vertices (also known as peers) in the network.
\(E\) contains \(m = |E|\) elements of \(V \times V\).
These are the \(m\) edges (also known as payment channels).
The weighting of the edges corresponds to the \textbf{capacity} of the channel via the function \(cap: E \longrightarrow \mathbb{N}\).
We write \(c_e = cap(e)\) and call \(C = \sum_{e \in E} c_e\) the total capacity of the network.
This is also called the total liquidity or the number of coins in the payment channel network.

The protocol specifies how many coins are owned by each of the peers within each channel.
\begin{definition}
  Let \(e = (u, v) \in E\). 
  We call \(e_u, e_v \in \{0, \dots, cap(e)\}\) the liquidity that \(u\) and \(v\) respectively own in the channel \(e\).
\end{definition}

We can represent the assignment of liquidity in each channel (or the network state) to its peers using the following liquidity function:
\begin{definition}
  \label{def:liquidityFunction}
  The liquidity function of the network $\lambda: E\times V \longrightarrow \{0,\dots,C\}$ is defined through:
    \begin{equation}
    \lambda(e,x)= \begin{cases}
      e_x & \text{if $x\in e$}\\
      0 & \text{otherwise}
    \end{cases}
    \end{equation}
\end{definition}



Due to the design and properties of payment channels the channel states, and thus the \(2 \cdot m\) variables \(e_x\) with \(e \in E\) and \(x \in e\), are not independent of each other.
The constraints on the liquidity function are so crucial for the remainder of this document that we assign them a specific term: 

\begin{definition}
  The constraints that the protocol imposes on the liquidity function
\begin{equation}
  \label{eq:conservationOfLiquidity}
  \begin{split}
  \lambda(e,u) + \lambda(e,v) & = cap(e) \\
  \Leftrightarrow e_u + e_v & = c_e
  \end{split}
\end{equation}
is called \textbf{conservation of liquidity}.
\end{definition}




The principle of conservation of liquidity implies that coins attached to a channel cannot leave it without an on-chain transaction.
Despite this restriction, the ownership of coins can be routed through the network of channels as a network flow from one peer to another.
However, not all payments are feasible due to the constraints imposed by the network's topology and liquidity function on the feasible flows.

\begin{definition}
Given a fixed network \(G(V,E,cap)\) and a fixed liquidity state of \(G\) defined by a liquidity function \(\lambda\), we can create the associated liquidity network \(\mathcal{L}(G,\lambda)\).
\(\mathcal{L}(G,\lambda)\) is a directed flow network \((V',E',c)\) with a capacity function \(c: E' \longrightarrow \mathbb{N}\).
We set \(V' = V\).
For every \(e = (u,v) \in E\), we add two directed edges \((u,v)\) and \((v,u)\) to \(E'\) such that \(c(u,v) = \lambda(e,u)\) and \(c(v,u) = \lambda(e,v)\).
\end{definition}

The liquidity network \(\mathcal{L}(G,\lambda)\) thus encodes the possible flows of coins through the network \(G\) in a given state \(\lambda\).
A payment of amount \(a\) between users \(i\) and \(j\) is feasible if and only if the minimum \(ij\)-cut in \(\mathcal{L}(G,\lambda)\) is greater than \(a\).
Determining the feasibility of a payment in a given state \(\lambda\) can be achieved in almost linear time \cite{chen2022maximum,chen2023almost,van2023deterministic}


\section{Polytope of Liquidity States of a Payment Channel Network}
\label{sec:statePolytope}

The liquidity function is generally unknown to the peers within the network.
Peers may have partial knowledge of the liquidity function in the region surrounding their own position within the network.
For instance, a peer \(v \in V\) is aware of the liquidity in their own channels.
Also peers may learn about the current liquidity function while attempting to deliver payments or through actively probing the network\cite{tikhomirov2020probing}.
Rather than focusing on or estimating a specific liquidity function, we address this uncertainty by examining the set of feasible liquidity functions and the geometry of that set.

\begin{definition}
For a fixed payment channel network \( G(V,E, \text{cap}) \), we define the set of all feasible liquidity functions, or the set of liquidity states of the network, as:
\[ L_G = \{\lambda: E \times V \longrightarrow \{0, \dots, C\} \mid \lambda(e,u) + \lambda(e,v) = \text{cap}(e) \}. \]
\end{definition}
This definition motivates the use of integer linear programming to examine \(L_G\) because the constraints are linear equations.
\begin{lemma}
The set \( L_G \) of all feasible liquidity states can be embedded into \(\mathbb{Z}^{2m}\).
\end{lemma}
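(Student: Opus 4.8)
\section*{Proof proposal}

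The plan is to exhibit an explicit injective map from $L_G$ into $\mathbb{Z}^{2m}$ obtained by forgetting the structurally-zero coordinates of $\lambda$. First I would fix once and for all an enumeration $E = \{e_1, \dots, e_m\}$ of the edges and, for each $e_i = (u_i, v_i)$, an ordering of its two endpoints. This lets me define $\Phi \colon L_G \longrightarrow \mathbb{Z}^{2m}$ by
\[
\Phi(\lambda) = \bigl(\lambda(e_1,u_1),\,\lambda(e_1,v_1),\,\dots,\,\lambda(e_m,u_m),\,\lambda(e_m,v_m)\bigr).
\]
Well-definedness is immediate: by definition each coordinate lies in $\{0,\dots,C\} \subseteq \mathbb{Z}$, so $\Phi(\lambda) \in \mathbb{Z}^{2m}$ for every $\lambda \in L_G$.

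The one substantive point is injectivity, and it rests entirely on the ``otherwise'' branch of Definition~\ref{def:liquidityFunction}. Suppose $\Phi(\lambda) = \Phi(\lambda')$ for $\lambda, \lambda' \in L_G$. Then $\lambda(e,x) = \lambda'(e,x)$ for every pair $(e,x)$ with $x \in e$, since these are exactly the coordinates recorded by $\Phi$; and for every pair $(e,x)$ with $x \notin e$ we have $\lambda(e,x) = 0 = \lambda'(e,x)$ by Definition~\ref{def:liquidityFunction}. Since every pair in $E \times V$ falls into one of these two cases, $\lambda = \lambda'$ as functions, so $\Phi$ is injective and hence an embedding.

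Finally I would record the refinement that makes the ``polytope'' terminology of the section title meaningful: the image $\Phi(L_G)$ is not all of $\mathbb{Z}^{2m}$ but precisely the integer points of the product over $i$ of the segments $\{(a,b) \in \mathbb{R}^2 : a+b = c_{e_i},\ 0 \le a,\ 0 \le b\}$, i.e.\ the lattice points of a bounded polytope of affine dimension $m$ cut out by the conservation-of-liquidity equations~\eqref{eq:conservationOfLiquidity}. This is not needed for the embedding itself but sets up the projection onto $W_G$ used later.

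I expect no real obstacle here; the lemma is essentially a bookkeeping statement, and the only thing to be careful about is not to conflate the ambient dimension $mn = |E \times V|$ of the naive function table with the effective dimension $2m$, which is exactly what the structural zeros let us discard. (If one wished to allow loop edges $e = (u,u)$, a single coordinate would suffice for such an $e$ and the bound $2m$ would still hold; the paper's convention of merging multi-edges, together with its implicit exclusion of loops, makes even this case unnecessary.)
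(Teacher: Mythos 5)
Your map $\Phi$ is exactly the paper's embedding $\iota$ written in coordinates rather than as a sum over basis vectors $b_{e,x}$, so the approach is the same. You are in fact more careful than the paper, which simply exhibits $\iota$ and asserts it is an embedding, whereas you explicitly verify injectivity via the structural zeros of Definition~\ref{def:liquidityFunction}.
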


\begin{proof}
Consider a basis \( \beta = \{ b_{e,x} \in \mathbb{Z}^{2m} \mid e \in E, x \in e \} \) where \( b_{e,x} \) are the \(2m\) basis vectors that span \(\mathbb{Z}^{2m}\). There is a natural embedding:
\[
\begin{aligned}
\iota: L_G & \longhookrightarrow \mathbb{Z}^{2m}, \\
\lambda & \stackrel{\iota}{\longmapsto} \sum_{e \in E, x \in e} e_x \cdot b_{e,x} \\
& = \sum_{e=(u,v) \in E} \left( e_u \cdot b_{e,u} + e_v \cdot b_{e,v} \right).
\end{aligned}
\]
\end{proof}

Let \( e_i = (u_i, v_i) \) be the \( i \)-th edge, then we can fix an order of \( \beta \) via:
\( \beta = \langle b_{e_1, u_1}, b_{e_1, v_1}, \dots, b_{e_m, u_m}, b_{e_m, v_m} \rangle. \)
In particular, we write:\( c_i = \text{cap}(e_i). \)

\begin{lemma}
\label{lem:dimensionOfStatePolytope}
The set \( L_G \) of feasible liquidity functions on a network \( G \) is isomorphic to an \( m \)-dimensional hyperbox \( H_G \subset \mathbb{Z}^{m} \). In particular:
\[ H_G = \{ 0, \dots, c_1 \} \times \dots \times \{ 0, \dots, c_m \}. \]
\end{lemma}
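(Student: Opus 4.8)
The plan is to use the conservation-of-liquidity constraint \eqref{eq:conservationOfLiquidity} to eliminate one of the two liquidity variables attached to each edge. Having fixed the order $\beta = \langle b_{e_1,u_1}, b_{e_1,v_1}, \dots, b_{e_m,u_m}, b_{e_m,v_m}\rangle$, I regard each $e_i=(u_i,v_i)$ as oriented from $u_i$ to $v_i$ and define
\[
\pi: L_G \longrightarrow \mathbb{Z}^m, \qquad \lambda \longmapsto \bigl(\lambda(e_1,u_1),\dots,\lambda(e_m,u_m)\bigr).
\]
Under the embedding $\iota$ of the previous lemma, $\pi$ is precisely the restriction to $\iota(L_G)$ of the linear coordinate projection $\mathbb{Z}^{2m}\to\mathbb{Z}^m$ onto the odd-indexed coordinates (those carrying $b_{e_i,u_i}$), so it automatically respects the integer lattice. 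The claim to be proved is that $\pi$ is a bijection onto $H_G=\{0,\dots,c_1\}\times\cdots\times\{0,\dots,c_m\}$ with affine inverse, which is exactly the asserted isomorphism.

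The argument then splits into three routine verifications. \emph{Image lands in $H_G$:} for $\lambda\in L_G$ the value $\lambda(e_i,u_i)$ is the amount $u_i$ owns in $e_i$, hence lies in $\{0,\dots,\text{cap}(e_i)\}=\{0,\dots,c_i\}$, so $\pi(\lambda)\in H_G$. \emph{Injectivity:} if $\pi(\lambda)=\pi(\lambda')$, then $\lambda$ and $\lambda'$ agree on every pair $(e_i,u_i)$ by hypothesis; on every pair $(e_i,v_i)$ conservation of liquidity forces $\lambda(e_i,v_i)=c_i-\lambda(e_i,u_i)=c_i-\lambda'(e_i,u_i)=\lambda'(e_i,v_i)$; and on every non-incident pair $(e,x)$ with $x\notin e$ both equal $0$ by \ref{def:liquidityFunction}. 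Hence $\lambda=\lambda'$. \emph{Surjectivity and affine inverse:} given $a=(a_1,\dots,a_m)\in H_G$, define $\lambda_a$ by $\lambda_a(e_i,u_i)=a_i$, $\lambda_a(e_i,v_i)=c_i-a_i$, and $\lambda_a(e,x)=0$ for $x\notin e$. Since $0\le a_i\le c_i$ we also have $0\le c_i-a_i\le c_i$, so $\lambda_a$ is a liquidity function satisfying \eqref{eq:conservationOfLiquidity}, i.e. $\lambda_a\in L_G$, and $\pi(\lambda_a)=a$. Each coordinate of $a\mapsto\iota(\lambda_a)$ is $a_i$, $c_i-a_i$, or $0$, so this map is affine in $a$ and is the inverse of $\pi$.

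I do not expect a genuine obstacle: the entire content is the observation that the $2m$ coordinates decouple into $m$ independent pairs, each pinned down by a single equation. The only care needed is bookkeeping. First, one must genuinely use that elements of $L_G$ are liquidity functions in the sense of \ref{def:liquidityFunction}, i.e. that they vanish on non-incident pairs, since otherwise the claimed dimension $m$ fails. Second, a consistent orientation $u_i\mapsto v_i$ of each edge must be fixed once and for all, so that $\pi$ is well defined; the model's merging of parallel edges and exclusion of self-loops means no edge contributes a degenerate constraint. Finally, one should pin down the sense of ``isomorphic'': the construction delivers an affine isomorphism of lattice polytopes (a bijection restricting a linear map, with affine inverse); if only a set bijection is desired, injectivity together with the cardinality count $|L_G|=\prod_{i}(c_i+1)=|H_G|$ already suffices.
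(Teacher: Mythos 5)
Your proof is correct and takes essentially the same route as the paper: both project onto the $m$ coordinates $\lambda(e_i,u_i)$ (the paper's ``forgetful'' map $f$) and verify bijectivity onto $H_G$ using conservation of liquidity to reconstruct the $v_i$-coordinates. Your version is in fact slightly more careful, since the paper's injectivity step tacitly assumes that two distinct liquidity functions must already differ on some $u_i$-coordinate, which is exactly the conservation-of-liquidity argument you spell out.
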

\begin{proof}
We prove this by showing that the forgetful function \( f: \iota(L_G) \longrightarrow H_G \) defined as
\[ \iota(\lambda) = \sum_{e=(u,v) \in E} \left( e_u \cdot b_{e,u} + e_v \cdot b_{e,v} \right) \stackrel{f}{\longmapsto} \sum_{e=(u,v) \in E} e_u \cdot b_{e,u} \]
is bijective. This involves showing that \( f \) is both injective and surjective.

\textbf{Injectivity}: Let \( \lambda \neq \mu \in L_G \). There must be at least one edge \( e=(u,v) \in E \) such that \( \lambda(e,u) \neq \mu(e,u) \). Consequently, by definition of \( f \), we have \( f(\iota(\lambda)) \neq f(\iota(\mu)) \). This is sufficient to show that \( f \) is injective.

\textbf{Surjectivity}: To show that \( f \) is surjective, take an arbitrary point \( x = (x_1, \dots, x_m) \in H \). We choose \( \lambda \) such that for all \( e_i \in E \):
\( \lambda(e_i, u_i) = x_i. \)
Because of the conservation of liquidity, we have:
\( \lambda(e_i, v_i) = \text{cap}(e_i) - \lambda(e_i, u_i) = c_i - x_i. \)
Since \( x_i \in \{0, \dots, c_i\} \), we can conclude that both \( \lambda(e_i, u_i) = x_i \) and \( \lambda(e_i, v_i) = c_i - x_i \) take values from \(\{0, \dots, c_i\}\). Thus, we have constructed \( \lambda \) such that \( f(\lambda) = x \), showing that \( f \) is also surjective.

This concludes our proof that \( f \) is bijective and that \( L_G \) is isomorphic to the \( m \)-dimensional hyperbox \( H_G \).
\end{proof}

We can see a low dimensional visualization of the introduced concepts in figure \ref{fig:statePolytopeExample}.

\begin{figure}[h]
  \centering
  \subfigure[]{\includegraphics[width=0.38\textwidth]{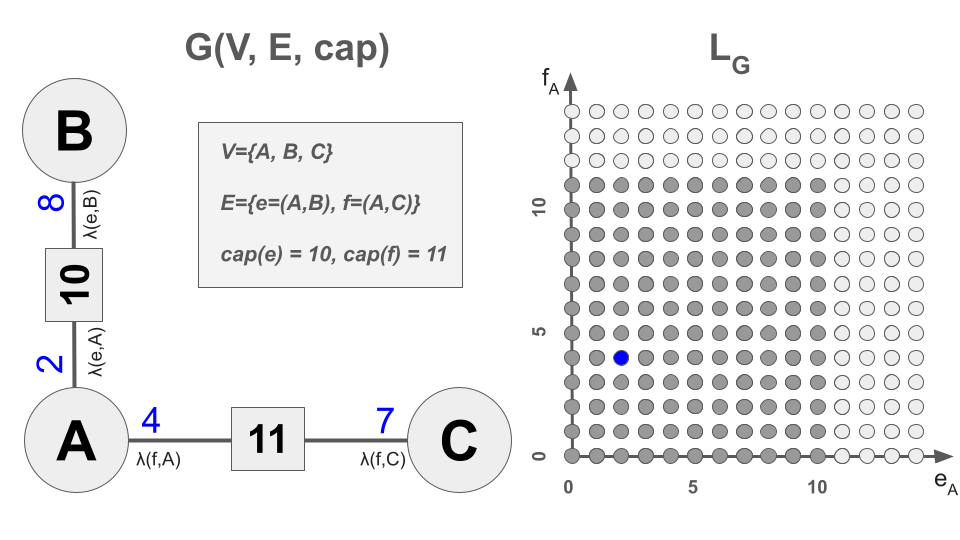}}
  \subfigure[]{\includegraphics[width=0.38\textwidth]{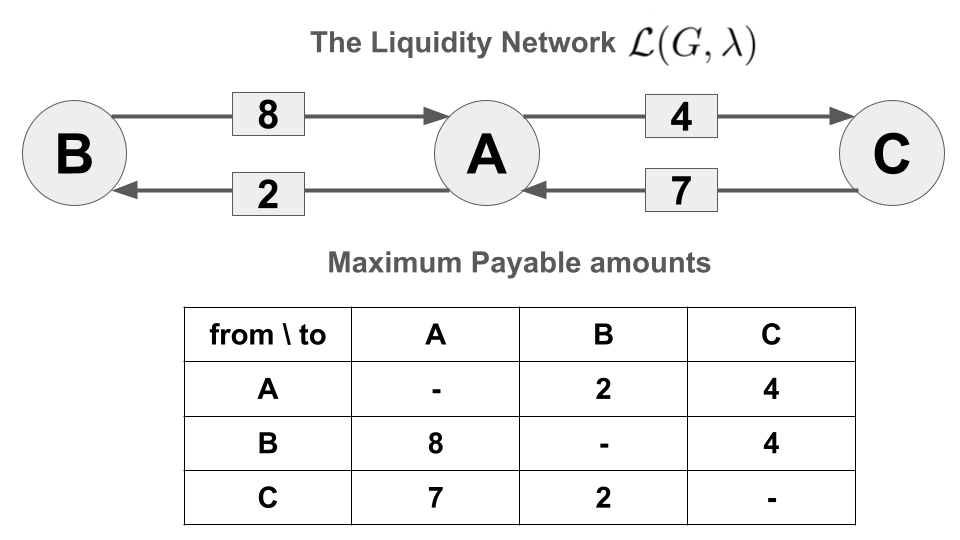}}
  \caption{An Example Network \( G \) with corresponding Polytope \( L_G \) of liquidity states and liquidity network \( \mathcal{L}(G,\lambda) \) for a fixed \( \lambda \in L_G \).}
  \label{fig:statePolytopeExample}
\end{figure}

For two-party channels, as currently deployed on the Lightning Network, knowing the liquidity of one peer is sufficient to determine the entire state of the channel. Typically, \( e_v \) is only known to the peers \( u \) and \( v \) who maintain the channel \( e \). The state \( e_v \) can change over time as the node conducts payments or fulfills routing requests from other nodes.



\begin{corollary}
The number of feasible network states is given by the volume of \( L_G \) which can be computed as:
\[ \text{vol}(L_G) := |L_G| = \prod_{i=1}^m (c_i + 1). \]
\end{corollary}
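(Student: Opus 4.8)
The plan is to derive this directly from Lemma~\ref{lem:dimensionOfStatePolytope}, which already does all the structural work: it exhibits a bijection (the composite $f \circ \iota$) between $L_G$ and the discrete hyperbox $H_G = \{0,\dots,c_1\} \times \dots \times \{0,\dots,c_m\}$. Since a bijection preserves cardinality, it suffices to count $|H_G|$.

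First I would note that $H_G$ is a finite Cartesian product of the sets $\{0,\dots,c_i\}$ for $i = 1,\dots,m$. Each factor $\{0,\dots,c_i\}$ is a set of consecutive integers from $0$ to $c_i$ inclusive, hence has exactly $c_i + 1$ elements. Second, by the multiplicative property of cardinality over Cartesian products, $|H_G| = \prod_{i=1}^m |\{0,\dots,c_i\}| = \prod_{i=1}^m (c_i + 1)$. Chaining this with the bijection from the lemma gives $|L_G| = |H_G| = \prod_{i=1}^m (c_i+1)$, which is the claimed identity once one adopts $\mathrm{vol}(L_G) := |L_G|$ as the definition.

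There is essentially no hard step here; the corollary is a bookkeeping consequence of the lemma. The one point worth flagging for the reader is the distinction between this lattice-point count and the Euclidean volume of the solid box $[0,c_1]\times\dots\times[0,c_m]$, which would be $\prod_{i=1}^m c_i$; the ``$+1$'' per coordinate is precisely the correction for counting integer states rather than measuring continuous volume. If anything, the subtlety is purely definitional: one must be explicit that feasibility constraints force the state to live on the integer lattice (each $e_x \in \{0,\dots,\mathrm{cap}(e)\}$), so that ``number of feasible states'' is genuinely the lattice-point count and not a continuous measure.
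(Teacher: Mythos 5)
Your proposal is correct and matches the paper's (implicit) reasoning exactly: the corollary is stated as an immediate consequence of Lemma~\ref{lem:dimensionOfStatePolytope}, and counting the lattice points of the hyperbox $H_G$ factor by factor gives $\prod_{i=1}^m (c_i+1)$. Your remark distinguishing the lattice-point count from the continuous volume $\prod c_i$ is a sensible clarification but not a departure from the paper's argument.
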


Assuming the number of channels \( m \) divides the total number of coins \( C \), and using the inequality of arithmetic and geometric means \cite{steele2004cauchy}, it can be shown that the network in which all channels have the same capacity \( \frac{C}{m} \) maximizes the volume of the corresponding polytope \( L_G \) of liquidity states. 
In particular, the maximum number of feasible network states is given by:
\[ \text{vol}(L_G) = \left( \frac{C}{m} + 1 \right)^m, \]
and is depicted in figure \ref{fig:statePolytope} for various number of coins and channels.

\begin{figure}[h]
\centering
\includegraphics[width=0.5\textwidth]{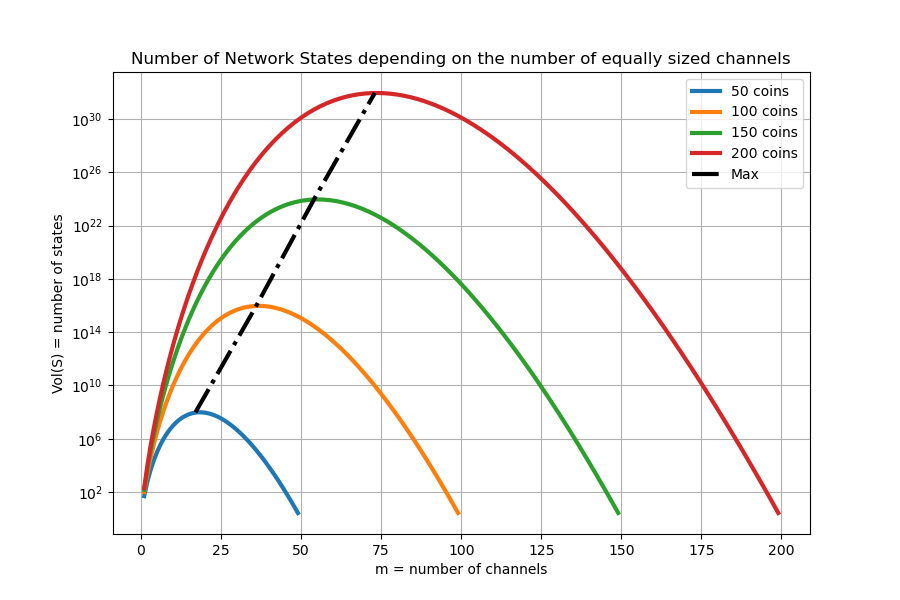}
\caption{Volumes \( \text{vol}(L_G) = \left( \frac{C}{m} + 1 \right)^m \) of state polytopes with equally sized channels for various numbers \( C \) of coins and various numbers \( m \) of channels.}
\label{fig:statePolytope}
\end{figure}

\section{Polytope of feasible Wealth Distributions}


We have focused on the liquidity states of payment channel networks so far. However, from a user's perspective, their overall wealth might be of more interest than how this wealth is allocated across different channels. Before exploring how payment channel networks impose additional constraints on the polytope of feasible wealth distributions, we will briefly examine the on-chain situation.

\begin{definition}
Let \( V = \{v_1, \dots, v_n\} \) be the set of \( n \) Bitcoin users who collectively own \( C \) coins. We define a function \( \omega: V \longrightarrow \mathbb{N}_0 \) as a wealth distribution if and only if \( C = \sum_{i=1}^n \omega(v_i) \). We denote the set of Bitcoin wealth distributions as
$$
W(C,n) = \{\omega: V \longrightarrow \mathbb{N}_0 \mid C = \sum_{i=1}^n \omega(v_i)\}.
$$
\end{definition}

We use the same methodology as with liquidity functions to show that \( W(C,n) \) is isomorphic to the polytope \( \mathcal{W}(C,n) \), which is obtained by intersecting the \( n \) half-spaces \( \mathbb{Z}_i = \{x \in \mathbb{Z}^n \mid x_i \geq 0\} \) with the \( n-1 \) dimensional hyperplane \( P_C \) defined by the linear equation:
$$
P(C,n) = \{w \in \mathbb{Z}^n \mid w_1 + \dots + w_n = C\}.
$$
Specifically,
$$
\mathcal{W}(C,n) = P(C,n) \cap \left( \bigcap_{i=1}^n \mathbb{Z}_i \right).
$$

To demonstrate this, we choose a basis \( \beta = \{b_{v} \in \mathbb{Z}^{n} \mid v \in V\} \) with \( b_{u} \) being the \( n \) base vectors that span \( \mathbb{Z}^{n} \). Similar to the case with liquidity functions, there is a natural embedding:

\begin{equation*}
  \begin{split}
\iota: W(C,n) \longhookrightarrow & \mathbb{Z}^{n} \\
\omega\stackrel{\iota}{\longmapsto} & \sum_{v\in V}\omega(v)\cdot b_{v}
\end{split}
\end{equation*}

\begin{lemma}
  The image of $\iota$ is $\mathcal{W}(C,n)$.
\end{lemma}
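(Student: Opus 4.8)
The plan is to establish the set equality $\iota(W(C,n)) = \mathcal{W}(C,n)$ by proving the two inclusions separately, in each case simply translating between the functional description of a wealth distribution and the coordinate description of a lattice point. Throughout I write $x = (x_1,\dots,x_n)$ for a generic element of $\mathbb{Z}^n$ expanded in the basis $\beta$, so that $\iota(\omega)$ is the point whose $i$-th coordinate is $\omega(v_i)$.

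For the inclusion $\iota(W(C,n)) \subseteq \mathcal{W}(C,n)$, I would take an arbitrary $\omega \in W(C,n)$ and check the defining constraints of $\mathcal{W}(C,n)$ one at a time. Since $\omega(v_i) \in \mathbb{N}_0$, each coordinate of $\iota(\omega)$ is non-negative, so $\iota(\omega) \in \mathbb{Z}_i$ for every $i$; and since $\omega$ is a wealth distribution, $\sum_{i=1}^n \omega(v_i) = C$, which is exactly the statement $\iota(\omega) \in P(C,n)$. Hence $\iota(\omega)$ lies in the intersection $P(C,n) \cap \bigcap_{i=1}^n \mathbb{Z}_i = \mathcal{W}(C,n)$.

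For the reverse inclusion, I would start from an arbitrary lattice point $x \in \mathcal{W}(C,n)$ and construct a preimage: define $\omega \colon V \to \mathbb{N}_0$ by $\omega(v_i) = x_i$. This is well defined with codomain $\mathbb{N}_0$ because $x \in \mathbb{Z}_i$ forces $x_i \geq 0$, and $\omega$ is a wealth distribution because $x \in P(C,n)$ gives $\sum_{i=1}^n \omega(v_i) = \sum_{i=1}^n x_i = C$. By construction $\iota(\omega) = x$, so $x \in \iota(W(C,n))$. Combining the two inclusions yields the claim; and since $\iota$ is manifestly injective (distinct wealth distributions differ in some value $\omega(v_i)$), this in fact exhibits $W(C,n)$ as isomorphic to the integer polytope $\mathcal{W}(C,n)$, mirroring the argument for Lemma~\ref{lem:dimensionOfStatePolytope}.

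I do not expect a genuine obstacle here: the argument is a direct unpacking of the definitions of \emph{wealth distribution}, of the half-spaces $\mathbb{Z}_i$, and of the hyperplane $P(C,n)$, with no combinatorial or geometric content beyond bookkeeping. The only point requiring a token of care is the surjectivity step, where one must verify that the constructed $\omega$ really lands in $\mathbb{N}_0$ — that is, that non-negativity of the coordinates, encoded precisely by the half-spaces $\mathbb{Z}_i$, is what upgrades an arbitrary integer-valued function summing to $C$ to an honest wealth distribution.
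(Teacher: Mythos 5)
Your proof is correct and follows essentially the same route as the paper's: both argue the two inclusions directly, constructing the preimage $\omega(v_i) = x_i$ for a point of $\mathcal{W}(C,n)$ and observing that points outside $\mathcal{W}(C,n)$ violate either non-negativity or the sum constraint. Yours simply spells out the definition-checking that the paper's terser version leaves implicit.
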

\begin{proof}
  Let $w\in \mathcal{W}(C,n)$ with
  $$w = \sum_{i=1}^nw_i\cdot b_{u_i}$$
  We select $\omega\in W(C,n)$ such that $\omega(v_i)=w_i$
  This is the preimage of $\iota^{-1}(w)$.
  If $w'\in \mathbb{Z}^n\backslash\mathcal{W}(C,n)$ then no $\omega$ with $\iota(\omega)=w'$ exists.
\end{proof}






Because of the embedding, we may refer to wealth distributions as wealth vectors. From the lemma and the stars and bars theorem\footnote{\url{https://en.wikipedia.org/wiki/Stars_and_bars_(combinatorics)}}, it follows that the number \( |\mathcal{W}(C,n)| \) of wealth distributions for \( n \) users with a fixed total of \( C \) coins is given by:

\begin{equation}
\text{vol}(\mathcal{W}(C,n)) := |\mathcal{W}(C,n)| = \binom{C + n - 1}{n - 1}
\end{equation}

For example, assume the \( n=3 \) users Alice, Bob, and Carol have \( C=21 \) coins to distribute among themselves. They would have a total of \( 253 \) ways to do so. These distributions can be represented in a \( 2 \)-dimensional polytope:

\begin{figure}[h]
\centering
\includegraphics[width=0.5\textwidth]{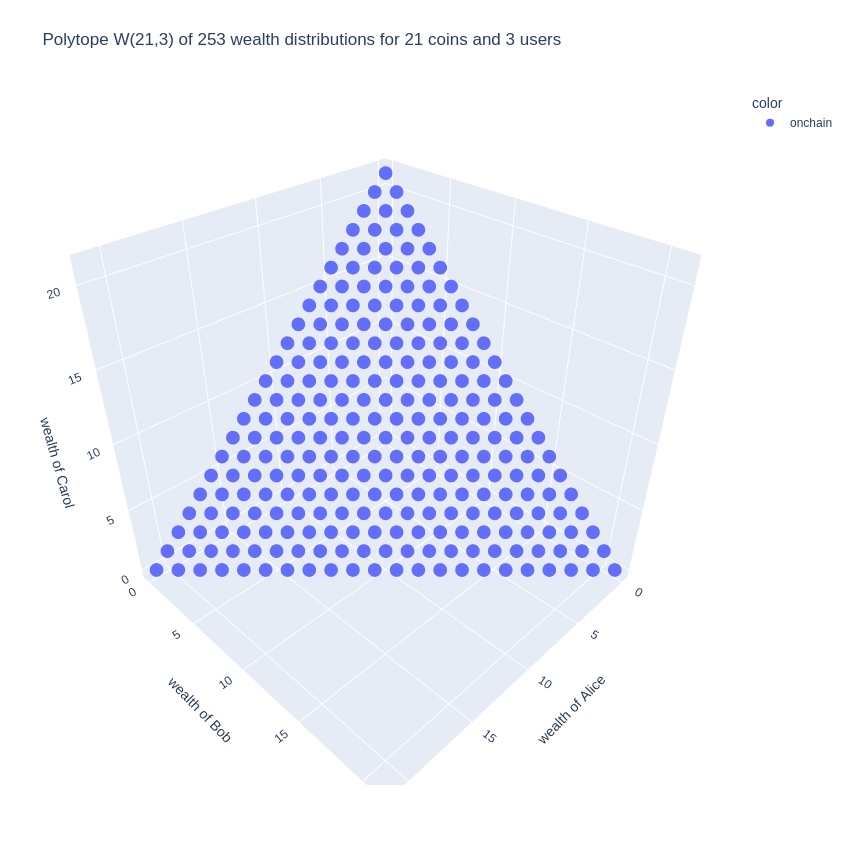}
\caption{$3$-dimensional embedding of the $2$-dimensional polytope of wealth distributions of $C=21$ coins among $n=3$ users.}
\label{fig:wealthPolytope}
\end{figure}

Knowing the volume of the polytope of feasible Bitcoin wealth distributions will serve as a reference value when studying the feasible wealth distributions \( W_G \) in a payment channel network \( G \).

\subsection{The Polytope $W_G$ of feasible Wealth Distributions of a Payment Channel Network $G$}

Instead of examining all wealth distributions of \( C \) coins among \( n \) users, we focus on those distributions that are feasible within a given payment channel network \( G(V,E,cap) \).

\begin{definition}
Given a payment channel network \( G(V,E,cap) \) with \( |V|=n \) users and \( C \) coins, a wealth distribution \( \omega: V \longrightarrow \{0,\dots,C\} \) is called feasible in \( G \) if there exists a liquidity function \( \lambda \) such that for all \( v \in V \):
\begin{equation}
\sum_{e \in E: v \in e} \lambda(e,v) = \omega(v)
\end{equation}
We denote the set of wealth distributions that are feasible in \( G \) as:
$$\{\omega \in W(C,n) | \exists \lambda \in L_G: \sum_{e \in E: v \in e} \lambda(e,v) = \omega(v) \forall v \in V\}$$
This set is referred to as \( W_G \).
\end{definition}

Next, we will study how the conservation of liquidity in payment channels reduces the volume of feasible wealth distributions \( W_G \) for a given payment channel network.

\begin{lemma}
\label{lem:subset}
Let \( G(V,E,cap) \) be a payment channel network. For \( n > 2 \), there exists \( \omega \in W(C,n) \) such that \( \omega \notin W_G \).
\end{lemma}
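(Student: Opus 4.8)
The plan is to bound the wealth a single vertex can ever accumulate and then hand that vertex more than the bound allows. First I would record the elementary observation that for any $\omega \in W_G$ with witnessing $\lambda \in L_G$ and any $v \in V$,
\[
\omega(v) \;=\; \sum_{e \in E : v \in e} \lambda(e,v) \;\le\; \sum_{e \in E : v \in e} cap(e),
\]
since each summand satisfies $\lambda(e,v) \le cap(e)$. Abbreviate the right-hand side as $\kappa(v)$, the total channel capacity incident to $v$. It then suffices to construct some $\omega \in W(C,n)$ together with a vertex $v$ for which $\omega(v) > \kappa(v)$: such an $\omega$ cannot lie in $W_G$.

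Next I would locate a vertex whose incident capacity falls strictly short of the total capacity $C = \sum_{e\in E} c_e$. This is the only place the hypothesis $n > 2$ is used, and it is essentially pigeonhole: pick any edge $e_0 = (a,b) \in E$ — a network with no channels is the degenerate case $C=0$, where $W_G = W(C,n)$ and there is nothing to prove — and, since $|V| = n > 2$, choose a vertex $w \in V \setminus \{a,b\}$. Because $e_0$ is not incident to $w$, we get $\kappa(w) \le C - cap(e_0) \le C - 1 < C$.

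Finally I would exhibit the offending distribution explicitly: set $\omega(w) = \kappa(w) + 1$, put the remaining coins on $a$, i.e. $\omega(a) = C - \kappa(w) - 1 \ge 0$, and $\omega(x) = 0$ for every other $x \in V$. All values are then nonnegative integers bounded by $C$ and they sum to $C$, so $\omega \in W(C,n)$; but $\omega(w) > \kappa(w)$, so by the first step $\omega \notin W_G$, which is what was claimed.

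I do not expect a genuine obstacle here — the argument is immediate once $\kappa(v)$ is isolated — but the one substantive point is the pigeonhole step, and it is instructive to see why it fails at $n=2$: there both endpoints of the (combined) channel meet every edge, so $\kappa(v) = C$ for each vertex, which is precisely why two-party channels realize every split of their capacity (cf. Lemma~\ref{lem:dimensionOfStatePolytope}) and why the statement must assume $n>2$. An equivalent framing, foreshadowing the cut-width principle advertised in the abstract, is that $\omega(\{v\})$ is confined to the interval $[0,\kappa(v)]$, which for the chosen $w$ has width $\kappa(w) < C$ and hence misses the value $\kappa(w)+1$.
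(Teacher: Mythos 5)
Your proposal is correct and is essentially the paper's argument: both pick a vertex $w$ off some positive-capacity channel, concentrate wealth on it, and invoke conservation of liquidity to rule the resulting distribution out of $W_G$. The only difference is cosmetic --- you phrase the obstruction as the upper bound $\omega(w)\le\kappa(w)$ on the singleton $\{w\}$, whereas the paper places all $C$ coins on $w$ and uses the complementary lower bound $\omega(u)+\omega(v)\ge c_e$ on the channel's endpoints; both are instances of the cut-interval inequality of Lemma~\ref{lem:cut-interval}, and both rest on the same implicit nondegeneracy assumption that some channel has positive capacity.
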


\begin{proof}
We prove this by contradiction. Let \( e = (u,v) \in E \) be an arbitrary channel with capacity \( c_e > 0 \). Since \( n > 2 \), there is a user \( w \) such that \( u \neq w \neq v \). We define \( \omega: V \longrightarrow \mathbb{N} \) as:

\begin{equation}
\omega(x) = \begin{cases}
C & \text{if } x = w \\
0 & \text{otherwise}
\end{cases}
\end{equation}

By construction, we have \( C = \sum_{x \in V} \omega(x) \). Thus, \( \omega \) is feasible in \( W(C,n) \), and specifically, \( \omega(u) = \omega(v) = 0 \).

To conclude the proof, assume \( \omega \in W_G \). Due to the conservation of liquidity (equation \ref{eq:conservationOfLiquidity}) and the definition of \( \omega \) as a feasible wealth function in \( G \), we know that \( \omega(u) + \omega(v) \geq c_e \). This contradicts the assumption that the capacity of the channel was non-zero.
\end{proof}

\begin{corollary}
The number \( |W_G| \) of feasible wealth distributions of \( C \) coins among \( n > 2 \) users in a payment channel network is strictly smaller than the number \( |\mathcal{W}(C,n)| \) of wealth distributions that are feasible with on-chain transactions.
\end{corollary}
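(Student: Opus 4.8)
The plan is to derive this corollary directly from Lemma \ref{lem:subset}, using only the fact that $W_G$ is by construction a subset of the (finite) set of all wealth distributions. First I would check the inclusion $W_G \subseteq \mathcal{W}(C,n)$. Any $\omega \in W_G$ lies in $W(C,n)$ by definition, but it is worth confirming that the defining identity $\sum_{e \in E: v \in e}\lambda(e,v) = \omega(v)$ actually forces the total-sum constraint: summing over all $v \in V$ and exchanging the order of summation gives $\sum_{v \in V}\omega(v) = \sum_{e=(u,v) \in E}\bigl(\lambda(e,u)+\lambda(e,v)\bigr) = \sum_{e \in E} c_e = C$ by conservation of liquidity (equation \ref{eq:conservationOfLiquidity}). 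Hence every element of $W_G$ is an honest wealth distribution, so $W_G \subseteq \mathcal{W}(C,n)$, identifying both with their images under $\iota$.

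Second, Lemma \ref{lem:subset} exhibits, for $n > 2$, an explicit $\omega \in W(C,n)$ with $\omega \notin W_G$ — namely the distribution concentrating all $C$ coins on a vertex $w$ not incident to a fixed channel $e=(u,v)$ with $c_e > 0$. This witnesses that the inclusion $W_G \subseteq \mathcal{W}(C,n)$ is \emph{proper}. Finally, both sets are finite: $|\mathcal{W}(C,n)| = \binom{C+n-1}{n-1}$ by the stars-and-bars count established earlier, and $W_G$ is contained in it. A proper inclusion of finite sets yields $|W_G| < |\mathcal{W}(C,n)|$, which is the assertion.

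I do not expect a real obstacle here; the corollary is essentially a restatement of Lemma \ref{lem:subset}. The only point needing a moment's care is the bookkeeping in the first step — verifying that the feasibility equations defining $W_G$ genuinely imply $\sum_{v}\omega(v) = C$, so that $W_G$ sits inside $\mathcal{W}(C,n)$ and not merely inside $\mathbb{Z}^n$ — after which strictness and the cardinality inequality follow immediately from Lemma \ref{lem:subset} and finiteness.
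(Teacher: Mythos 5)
Your proposal is correct and matches the paper's intent exactly: the corollary is stated without proof as an immediate consequence of Lemma \ref{lem:subset}, and your argument (inclusion $W_G \subseteq \mathcal{W}(C,n)$ via summing the feasibility equations, properness from the lemma's witness, strict cardinality inequality by finiteness) is precisely the implicit reasoning. The extra care you take in verifying that the feasibility equations force $\sum_v \omega(v) = C$ is a welcome bit of rigor the paper glosses over.
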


We are interested in how the topology of \( G \) impacts the relative volume \( r(G) \) that \( W_G \) has in comparison to \( \mathcal{W}(C,n) \):

\begin{equation}
r(G) = \frac{|W_G|}{|\mathcal{W}(C,n)|}
\label{eq:relVolume}
\end{equation}

For example, consider the network from the previous section in Figure \ref{fig:statePolytopeExample}. We can observe the feasible region \( W_G \) as a subpolytope of \( \mathcal{W}(21,3) \) in Figure \ref{fig:wealthPolytope}.

\begin{figure}[h]
\centering
\includegraphics[width=0.5\textwidth]{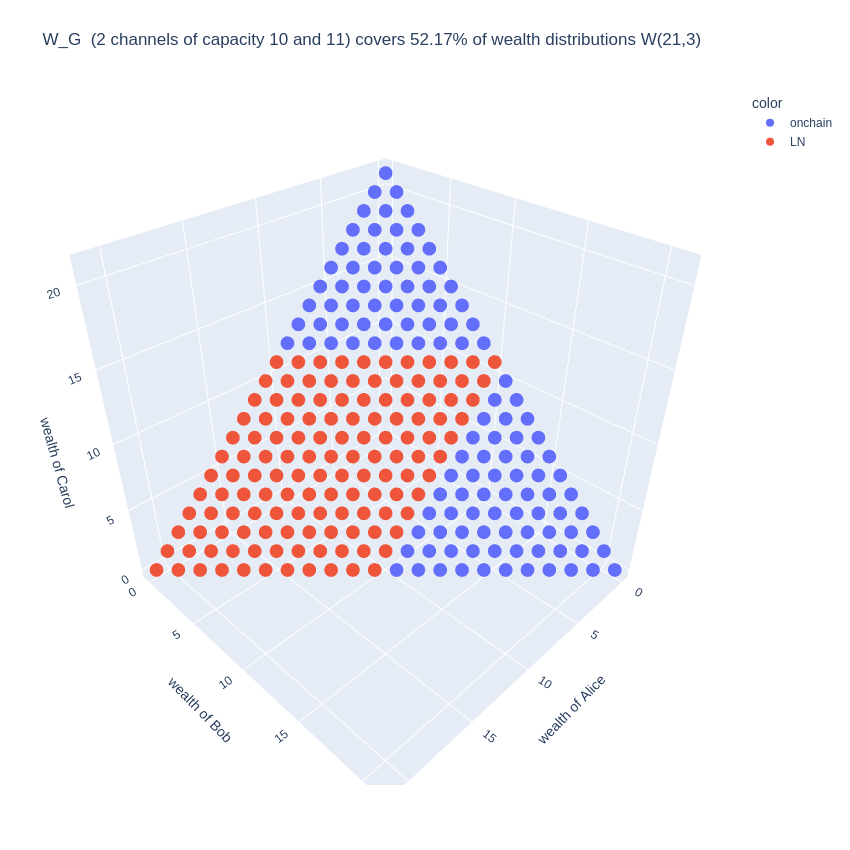}
\caption{Only 52.17\% of all wealth distributions in \( \mathcal{W}(21,3) \) are feasible if Alice, Bob, and Carol allocate the 21 coins into 2 channels of capacities 10 and 11.}
\label{fig:wealthPolytope}
\end{figure}

We Note that the feasible reagion changes with the topology of the network.
If Bob Spliced out 7 coins of his channel with Alice and created a new channel of 7 coins the feasible reagion would increase in size as can be seen in figure \ref{fig:feasibleLNWealthPolytopes}(a).
However the increase is relatively small.
It becomes larger if all channels had the same capcity as can be seen in figure \ref{fig:feasibleLNWealthPolytopes} (b).

\begin{figure}[h]
  \centering
  \subfigure[]{\includegraphics[width=0.23\textwidth]{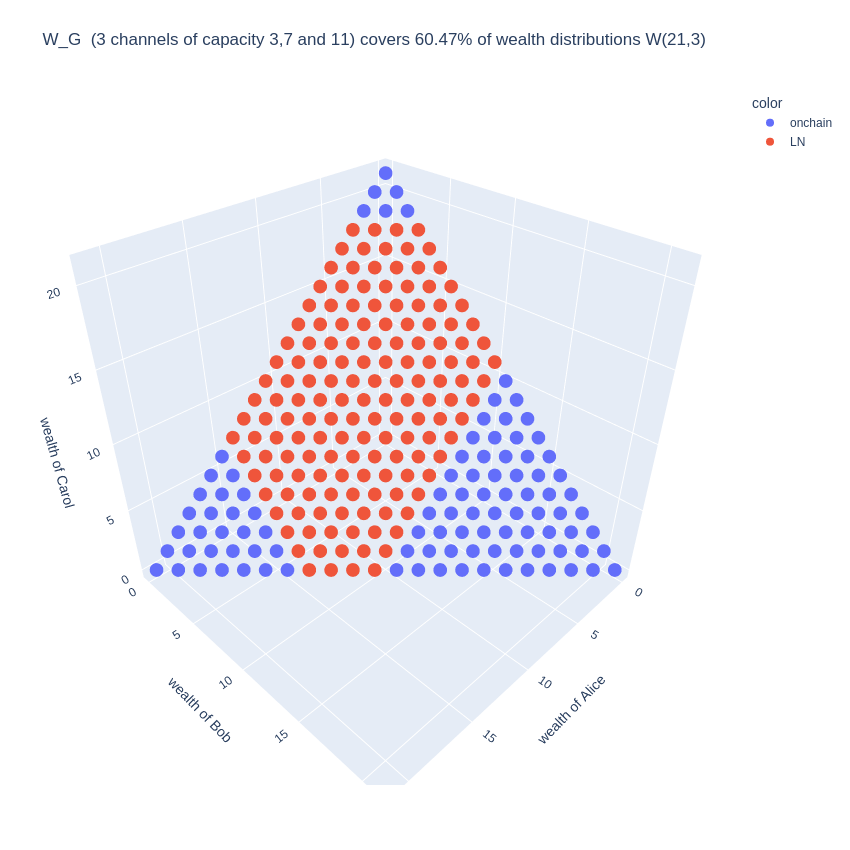}}
  \subfigure[]{\includegraphics[width=0.23\textwidth]{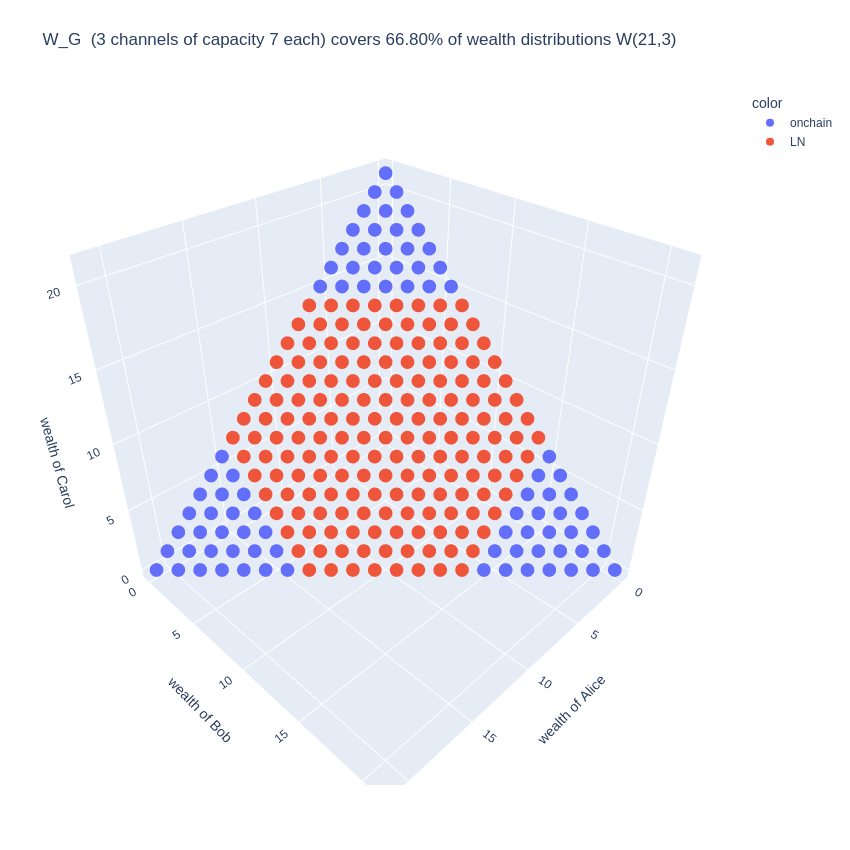}}
  \caption{$2$ feasible regions of $W_G$ for $2$ different networks between Alice Bob and Carol who own in total $21$ coins.}
  \label{fig:feasibleLNWealthPolytopes}
\end{figure}

We are not aware of an analytically closed formula that describes the volume \( |W_G| \) for a given network \( G \).
In particular we are not aware of a smart way to find the topology $G$ that maximizes $r(G)$.
However, we can estimate the volume through statistical sampling via Monte Carlo methods. Utilizing the Dirichlet Rescale Algorithm \cite{griffin2020generating}, we can uniformly and randomly select wealth distributions \( \omega \) from \( \mathcal{W}(C,n) \).\footnote{An open source implementation can be found at: \url{https://github.com/dgdguk/drs}} We can then check how often these distributions are feasible for a given network \( G \). This gives us an estimate for \( r(G) \):
\begin{equation}
r(G) \approx \frac{\text{number of \( \omega \) such that \( \omega \) is feasible in \( G \)}}{\text{number of sampled \( \omega \in \mathcal{W}(C,n) \)}}
\end{equation}
In order to be able to evaluate this we need to be able to decide if a given wealth distribution in $\mathcal{W}(C,n)$ is feasible in $W_G$.

\subsection{Deciding if a Wealth Distribution $\omega\in\mathcal{W}(C,n)$ is also feasible in $W_G$}
\label{sec:isomorphism}

Due to Lemma \ref{lem:subset}, we know there exists at least one vector \( \omega \in \mathcal{W}(C,n) \) that is infeasible and not in \( W_G \).  
For an arbitrary vector \( \omega \in \mathcal{W}(C,n) \), we can determine if it is also an element of \( W_G \) and thus a feasible wealth distribution on the payment channel network.  
According to the definition of a feasible wealth distribution in \( W_G \), we need to find \( \lambda \in L_G \) such that for all \( e \in E \) and \( v \in V \), the following \( n \) linear equations hold:  
\begin{equation}
  \label{eq:wealthConstraints}
  \sum_{e \in E: v \in e}\lambda(e,v) = \omega(v)
\end{equation}
Due to the conservation of liquidity, \( \lambda \) requires an additional \( m \) equations for all \( e \in E \):  
\begin{equation}
  \label{eq:channelConstraints}
  \lambda(e,v) + \lambda(e,u) = \text{cap}(e)
\end{equation}
Thus, we need to solve a system of \( n + m \) linear equations in \( 2 \cdot m \) variables:  
\[ \{e_{1,{u_1}}, e_{1,v_1}, \dots, e_{m,u_m}, e_{m,v_m} \} \]  
These equations are not fully independent because summing the \( n \) equations from Equation \ref{eq:wealthConstraints} yields the same value as summing the \( m \) equations from Equation \ref{eq:channelConstraints}:  
\[ \sum_{v \in V}\sum_{e \in E: v \in e}\lambda(e,v) = C = \sum_{e=(u,v) \in E}\left(\lambda(e,v) + \lambda(e,u)\right) \]  
Therefore, we can eliminate one of the constraints, resulting in a total of \( m + n - 1 \) independent equations.  
Let \( \sigma(G,\omega) \) be the solution space of feasible liquidity functions on \( G \) for the above system of linear equations over integers.  
If solved over \( \mathbb{Z}^{2 \cdot m} \), the dimension of the solution space \( \sigma(G,\omega) \) is at least \( 2 \cdot m - (m + n - 1) = (m - n + 1) \). 
However, we are seeking solutions over \( L_G \), which is isomorphic to the bounded hypercube:  
\[ H_G = \{0, \dots, c_1\} \times \{0, \dots, c_m\} \]  
Thus, for a wealth distribution to be feasible and a liquidity function \( \lambda \) to exist, we must require:  
\begin{equation}
  \sigma(G,\omega) \cap H_G \neq \emptyset
\end{equation}
Finding a solution for the system of linear equations over integers in a bounded region can be done through integer linear programming.  
\footnote{As discussed with Stefan Richter. Instead of solving a system of linear equations over a bounded region, one could solve a max flow problem to test feasibility. For this, one would take an arbitrary wealth vector \( \omega' \) and compute \( \omega - \omega' \). The components of the difference are the supply and demand of the nodes in the network. If a multi-source, multi-sink flow exists that fulfills the supply and demand, then \( \omega \in W_G \).}

\subsubsection{A Remark on Credit in Payment Channel Networks}
\label{sec:credit}
The previous geometric example illustrates that the system of linear equations that come from conservation of liquidity and the network topology always has at least one solution over $\mathbb{Z}^{2\cdot m}$.
However if the intersection of the solution space \(\sigma(G,\omega) \cap H_G\) is empty this means that all elements of $\sigma(G,\omega)$ have at least once component that is negative. 
This shows that some wealth distributions are infeasible on payment channel networks because the liquidity of the peers in the channel has to be always positive.
If credit was allowed in channels a feasible network state could be constructed for any wealth distribution.
A similar result has been shown before the Lightning Network was created \cite{dandekar2011liquidity}
Of course credit requirs trust and is consequently not wished for by the users and developers of peer to peer electronic payment systems.
However this geometric insight shows how difficult it is to create trustless electronic payment systems.

\subsection{Application: Deciding the Feasibility of Payments}

Making a payment is equivalent to changing the wealth vector.  
Let us assume \( w = (w_1, \dots, w_n) \in W_G \) is the current wealth distribution of the network.  
If the \( i \)-th user wishes to pay user \( j \) the amount \( a \), we can test the feasibility of the payment by checking if  
\begin{equation}
  \label{eq:payment}
  w' = w + a \cdot b_j - a \cdot b_i
\end{equation}
is still inside the polytope \( W_G \) of feasible wealth distributions.  
If \( w' \notin W_G \), then given our initial wealth distribution \( w \), the payment between \( i \) and \( j \) of amount \( a \) will fail.  

Remarkably, deciding whether a payment is feasible in a given network without conducting additional on-chain operations does not depend on the exact state of the network.  
Instead, the feasibility of a payment depends only on the network topology and the current wealth distribution.  
Of course, similar to the liquidity state of the network, the current wealth distribution is unknown. 

Thus we take a global perspective.
Similarly to estimating \( r(G) \), we can estimate the likelihood that a payment is feasible through Monte Carlo methods.
For this, we uniformly randomly sample several feasible wealth distributions in \( W_G \) and compute how often the resultant distribution after executing the payment is feasible.\footnote{This method has already been shared publicly before this paper was published.\url{https://delvingbitcoin.org/t/estimating-likelihood-for-lightning-payments-to-be-in-feasible/973}}
Node operators can use this to decide where to allocate liquidity an which channels to open or close.  

\begin{definition}
  We call $\rho$ the expected rate of infeasible payments.
\end{definition}

In particular, due to Lemma \ref{lem:subset}, it is noted that in payment channel networks, it is impossible for 100\% of all conceivable payment requests to be feasible.  

\begin{figure}[h]
\centering
 \includegraphics[width=0.40\textwidth]{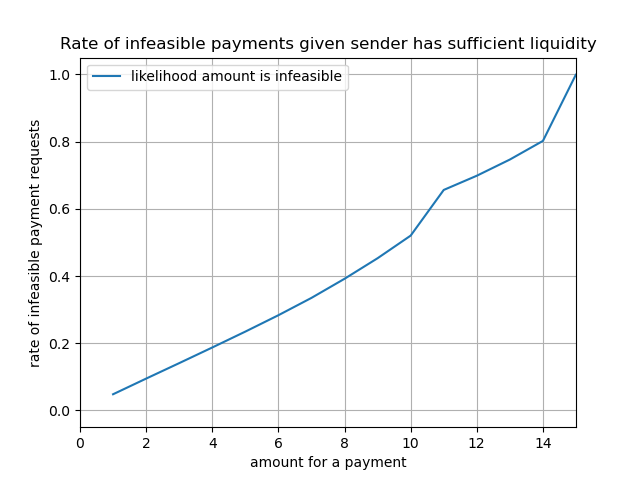}

\caption{Expected Rate \(\rho\) of infeasible payments in the example network between Alice, Bob, and Carol with channels of sizes 3, 7, and 11}
\label{fig:infeasible}
\end{figure}

\subsection{Limited Ability of Payment Channel Networks to scale Blockchains}

We emphasize that infeasible payment requests require at least one on-chain transaction to be executed successfully.  
This transaction can either change the wealth distribution and state of the payment channel network via an on-chain/off-chain swapping service or alter the network topology by opening and closing channels or through splicing.  
If \( \zeta \) is the number of possible on-chain transactions per second, we can derive the maximum bandwidth of supported Lightning Network payments as:  
\begin{equation}
  \label{eq:feasabilityRate}
\text{supported payments per second} = \mathcal{S} = \frac{\zeta}{\rho}
\end{equation}

In the Lightning Network whitepaper \cite{poon2016bitcoin}, the authors state that the Visa Network supports 47,000 payments per second during peak times.  
Solving the equation for \( \rho \) and noting that the Bitcoin network allows \( \zeta = 7 \) transactions per second, we conclude that only \( \rho = \frac{7}{47,000} \approx 0.0149\% \) of all conceivable payments can be infeasible if the Lightning Network is supposed have a comparable bandwidth to the visa network.  

Thus, for payment channel networks to scale the payment throughput of blockchains, it is crucial that the rate \( \rho \) of infeasible payments is close to zero.  
This happens when the feasible region of wealth distributions \( W_G \) is large within \( W(C,n) \) or if \( r(G) \) is close to $1$.  

\begin{figure}[h]
\centering
 \includegraphics[width=0.41\textwidth]{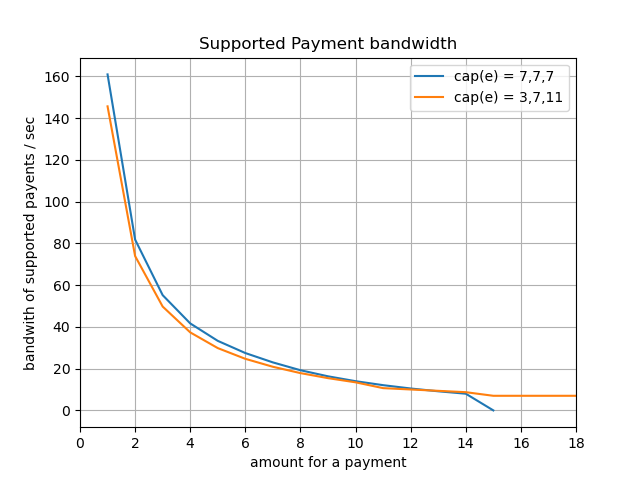}
\caption{Rate of infeasible payments \( \rho \) depending on the desired payment amount \( a \) in the example network between Alice, Bob, and Carol with channels of sizes 3, 7, and 11}
\label{fig:bandwidth}
\end{figure}

Using the data from Figure \ref{fig:infeasible} we can derive $\mathcal{S}$ for various networks and payment amounts which is depicted in figure \ref{fig:bandwidth}
The figure also demonstrates what users of payment channel networks have experienced before:
Such networks are mainly capable to scale the bandwidth of small amount payments.
There seems a small tradeoff between higher bandwidth but smaller supported maximum sendable amounts for networks with equally sized channels.
We will see later in this document how multiparty channel networks values of \( r(G) \) that are close to $1$.

\section{Lowering Expected Rate of Infeasible Payments via Multi Party Channels}
\label{sec:mpc-cuts}
We have seen in formular \ref{eq:feasabilityRate} that the supported payments per second in off-channel networks is proportional to the supported number of on-chain transactions multiplied with the inverse rate of expected infeasible payments. In particular we have seen annectotely that for two party channel networks the expected rate of infeasible payments is rather large. In the following we present an argument of how increasing the party size will allow more feasible wealth distributions and thus lower the expected rate of infeasible payments.

In order to do so we formalize the capital-efficiency advantage of $k$-party channels (sometimes refered to as coinpools or multiparty channels) over $2$-party channels.
The key observation is that for any node subset $S\subset V$, the maximal total wealth of $S$ must lie in an interval whose \emph{width equals the cut capacity} across $S$.
We start with the special case of $2$-party channels. Then we replace those $2$-party channels by $k$-party channels, which strictly increases these widths and thereby enlarges $W_G$. Obviously if the polytope of feasible wealth distributions increases more payments become feasible even if the on chain throughput is not changed.

\subsection{Cut intervals: A Proxy for the Number of Feasible Wealth Distributions}
\label{subsec:cut-intervals}

For $S\subset V$, let $E[S]:=\{\{u,v\}\in E:\ u,v\in S\}$ be internal edges and
$\delta(S):=\{\{u,v\}\in E:\ u\in S,\ v\notin S\}$ be crossing edges. We define the cut capacity
$C(\delta(S))$ as $C:=\sum_{e\in\delta(S)} c_e$.

\begin{lemma}
\label{lem:cut-interval}
For every $\lambda\in L_G$ with $\omega(v) = \sum_{e\in E,v\in e}\lambda(e,v)$ and every nonempty proper $S\subset V$,
\begin{equation}
\label{eq:cut-interval}
\sum_{e\in E[S]} c_e \;\;\le\;\; \sum_{v\in S}\omega(v) \;\;\le\;\; \sum_{e\in E[S]} c_e \;+\; \sum_{e\in \delta(S)} c_e.
\end{equation}
\end{lemma}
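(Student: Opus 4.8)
The plan is to rewrite the middle quantity $\sum_{v\in S}\omega(v)$ by expanding each term via its defining relation and reorganizing the resulting double sum by edge type. First I would substitute $\omega(v)=\sum_{e\in E:\,v\in e}\lambda(e,v)$ to obtain
\[
\sum_{v\in S}\omega(v)\;=\;\sum_{v\in S}\ \sum_{e\in E:\,v\in e}\lambda(e,v),
\]
and then swap the order of summation so that the outer index runs over edges. An edge $e=\{u,v\}$ contributes the term $\lambda(e,x)$ once for each of its endpoints $x\in e$ that also lies in $S$. Hence the edges split into three classes relative to $S$: an internal edge $e\in E[S]$ contributes both of its endpoints, a crossing edge $e\in\delta(S)$ contributes exactly one endpoint, and an edge with no endpoint in $S$ contributes nothing.

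Next I would evaluate each class. For an internal edge $e=\{u,v\}\in E[S]$ the contribution is $\lambda(e,u)+\lambda(e,v)$, which by conservation of liquidity (Equation~\ref{eq:conservationOfLiquidity}) equals $c_e$ exactly. For a crossing edge $e\in\delta(S)$, write $x_e$ for its unique endpoint lying in $S$; the contribution is the single term $\lambda(e,x_e)$, and by Definition~\ref{def:liquidityFunction} together with conservation of liquidity we have $0\le\lambda(e,x_e)\le c_e$. Collecting the three classes gives the identity
\[
\sum_{v\in S}\omega(v)\;=\;\sum_{e\in E[S]}c_e\;+\;\sum_{e\in\delta(S)}\lambda(e,x_e).
\]
Bounding the last sum below by $0$ (each $\lambda(e,x_e)\ge 0$) and above by $\sum_{e\in\delta(S)}c_e$ (each $\lambda(e,x_e)\le c_e$) then yields both inequalities of~\eqref{eq:cut-interval} at once.

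The argument is essentially a finite double-counting identity, so there is no deep obstacle; the two points that need care are (i) justifying the interchange of summations, which is immediate since all sums are finite and the summands are nonnegative, and (ii) confirming that each edge falls into exactly one of the three classes and is therefore not double-counted, which relies on $G$ being a simple graph without self-loops so that every edge has two distinct endpoints. I would also note, as a remark, that the displayed identity shows the feasible interval for $\omega(S):=\sum_{v\in S}\omega(v)$ has width exactly $C(\delta(S))$, and that both endpoints are attained by admissible liquidity states in $L_G$ — push all crossing liquidity onto the $S$-side to reach the upper bound, onto the complement to reach the lower bound — which is the structural fact the subsequent capital-efficiency discussion exploits.
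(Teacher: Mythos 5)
Your proposal is correct and follows essentially the same route as the paper's proof: expand $\sum_{v\in S}\omega(v)$, regroup by edges, collapse the internal-edge contributions to $\sum_{e\in E[S]}c_e$ via conservation of liquidity, and bound the crossing-edge terms between $0$ and $\sum_{e\in\delta(S)}c_e$. Your closing remark about both endpoints being attained matches the paper's observation that varying $\lambda(e,\cdot)$ on $e\in\delta(S)$ sweeps the full interval.
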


\begin{proof}
Compute
\begin{equation}
  \begin{split}
\sum_{v\in S}\omega(v) & =\sum_{v\in S}\sum_{v\in e}\lambda(e,v) \\
& =\underbrace{\sum_{e=(u,v)\in E[S]}\big(\lambda(e,u)+\lambda(e,v)\big)}_{=\sum_{e\in E[S]}c_e} \\
& +
\;\sum_{e\in\delta(S),u\in e\cap S}\lambda\big(e,u)
\end{split}
  \end{equation}
The last sum is between $0$ and $\sum_{e\in\delta(S)}c_e$, giving \eqref{eq:cut-interval}. Varying $\lambda(e,\cdot)$ on
$e\in\delta(S)$ sweeps the full interval while keeping $E[S]$ fixed.
\end{proof}

\begin{example}
  Let us look at the special case of $S$ being the set containing a single but arbitrary node $v\in V$ so $S=\{v\}$. In that case $E[S] = E[\{v\}]  = \emptyset$ as there are no internal edges for a single node. On the other hand the channels belonging to $v$ are the crossing edges $\delta(S) = \delta(\{v\})$. Applying Lemma \ref{lem:cut-interval} we see directly that the following holds for the total wealth $\omega(v)$ of $v$.
  \begin{equation}
    0 = \sum_{e\in E[S]}c_e \leq \omega(v) \leq \sum_{e\in\delta(\{v\})}c_e.
    \end{equation}
  The later is usually known as the total capacity of $v$ but in this case is also the same as the width of the cut interval of $S$. Thus the lemma describes in a more general setting what we already know and understand intuitively about the number of sats a node $v$ can own in given fixed topology.
  \end{example}

Lemma~\ref{lem:cut-interval} shows: \emph{every cut contributes a single scalar inequality} that constrains $\omega(S):=\sum_{v\in S}\omega(v)$ to a closed interval whose width is exactly $C(\delta(S))$.

\subsection{Expected cut width for random topologies in $2$-party channel networks.}
We need to study the cut width probabilistically for a more realistic picture.
Let $n=|V|$ and fix $S$ with $|S|=s$. We can pick a two party channel uniformly from from $\binom{n}{2}$ pairs of nodes and study how likely it is that the picked channel will contribute to the cut interval.

\begin{lemma}
\label{lem:2pc-exact}
A random undirected two party channel $e=(u,v)$ crosses $(S,\bar S)$ with probability $\frac{2s(n-s)}{n(n-1)}$. Hence the expected cut interval for a $2$-party chanenl entwork has a width of
\[
\EE[C(\delta(S))] \;=\; m\,c\cdot \frac{2s(n-s)}{n(n-1)}.
\]
In particular, for a single node with $|S| = s=1$,
\(
\EE[C(\delta(S))] = m\,c\cdot \frac{2}{n}.
\)
\end{lemma}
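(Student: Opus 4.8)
The plan is to compute the crossing probability by a direct counting argument, then apply linearity of expectation across all $m$ channels. First I would fix the subset $S$ with $|S|=s$ and count the undirected pairs $\{u,v\}$ that cross the cut: these are exactly the pairs with one endpoint in $S$ and one in $\bar S$, of which there are $s(n-s)$. Dividing by the total number $\binom{n}{2}=\tfrac{n(n-1)}{2}$ of undirected pairs gives the crossing probability $\dfrac{s(n-s)}{\binom{n}{2}} = \dfrac{2s(n-s)}{n(n-1)}$, which matches the claimed value.

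Next I would set up the expectation. Writing $C(\delta(S)) = \sum_{e\in E} c_e\,\mathbf{1}[e\text{ crosses }S]$ and using that every channel has the same capacity $c$ (the homogeneous case being assumed here, as in the earlier volume-maximizing discussion), linearity of expectation gives $\EE[C(\delta(S))] = c\sum_{e\in E}\Pr[e\text{ crosses }S] = m\,c\cdot\frac{2s(n-s)}{n(n-1)}$. Here I am treating the $m$ channel endpoints as chosen independently and uniformly from $\binom{V}{2}$; the per-edge probability is the same regardless of independence, so linearity applies without needing the channels to be independent. Finally, specializing to $s=1$ yields $\frac{2\cdot 1\cdot(n-1)}{n(n-1)} = \frac{2}{n}$, hence $\EE[C(\delta(S))] = m\,c\cdot\frac{2}{n}$, recovering the stated special case.

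The main obstacle is less a mathematical difficulty than a modeling subtlety: one must pin down the precise random model (are channels drawn with or without replacement among node pairs? are self-loops excluded? are all capacities equal to a common $c$, or should $m\,c$ be read as $\sum_e c_e = C$?). The cleanest route is to state explicitly that each of the $m$ channels is an independent uniform draw from the $\binom{n}{2}$ distinct node pairs with common capacity $c$, so that $mc = C$; then the per-edge crossing event is Bernoulli with the computed parameter and the result is immediate. A secondary point worth a sentence is that Lemma~\ref{lem:cut-interval} already identifies $C(\delta(S))$ as the width of the interval containing $\omega(S)$, so this lemma is exactly quantifying the \emph{expected} slack that a random $2$-party topology affords each cut — the quantity that the subsequent $k$-party comparison will show to be strictly smaller than in the hyperedge model.
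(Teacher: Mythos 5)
Your proof is correct and follows essentially the same route as the paper: count the $s(n-s)$ crossing pairs out of $\binom{n}{2}$ to get the per-channel probability, then sum indicator variables via linearity of expectation and multiply by the common capacity $c$. Your remark that linearity does not require independence is apt, since the paper draws the channels without replacement but relies on exactly this fact.
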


\begin{proof}
  For a random channel $e$ to cross the $(S,\bar S)$ cut one node needs to be in $S$ and the other one in $\bar S$.
  In total here are $\binom{n}{2}$ pairs which could build a channel out of those $s\cdot(n-s)$ unordered pairs fulfill the above criterium.
  Thus $P[e\in\delta(S)] = \frac{2s(n-s)}{n(n-1)}$

  To compute the expected cut capacity for $m$ channels we draw channels without replacement.
  Let $X_i=1$ if the $i$-th channel crosses the cut and otherwise $0$.
  The number of crossings is $X=\sum_{i=1}^m{X_i}$.
  The expected number of crossings is:
  \begin{equation}
    \EE[X] = \sum_{i=1}^m\EE[X_i] = \sum_{i=1}^mPr[X_i = 1] = m\cdot P[e_i\in\delta(S)] 
  \end{equation}
  Plugging in the probability and multiplying the expected number of edges in the cut with the fixed capacity $c$ we get the expected cut capacity of a two party channel network to be:
  \begin{equation}
m\cdot c\frac{2s(n-s)}{n(n-1)}
    \end{equation}
  which had to be shown. In the special case with $s=1$ we get:
  \begin{equation}
m\cdot c\frac{2(n-1)}{2n(n-1)} = m\cdot c\frac{2}{n}
    \end{equation}
  which concludes the proof.
\end{proof}

We note that for scalability reasons with respect to onchain bandwidth we cannot support densly connected graph but need a sparse network with $m\approx n$.
As we have seen the cut capacity is an upper bound for the intervall width of $\omega(S)$.
Thus the expected range for $\omega(v)$ for a peer $v$ in a two party channel network is twice the ratio of channels $m$ to peers $n$ multiplied with the average capacity of a channel.
If a modification of the network (like introducing multiparty channels) increases $C(\delta(S))$ for some $S$, it widens that interval and cannot shrink the volume of the feasible wealth set $W_G$.
Increasing many cuts simultaneously relaxes many inequalities at once, strictly enlarging $W_G$ generically.

\subsection{Expected Cut Capacity in Multiparty Channel Networks}
\label{subsec:hyperedge-cut}

We will now generalize these results and show that the expected cut capacity for $k$-party channels ist strictly larger than for two party channels. More generally the expected cut capacity increases monotonically with $k$.
This indicates that $k+1$-party channels are preferable to $k$-party channels from a liquidity perspective. 

We model a $k$-party channel (also sometimes refered to as a coinpool) $e^k = \{v_{i_1},\dots,v_{i_k}\}\subset V$ of size $k$ and total capacity $c_{e^k}$ as an hyperedge.
As with two party channels if $e^k$ intersects both $S$ and $\bar S$, then the multi party channel can shift all $c_{e^k}$ across the cut by internal reassignment.
Thus it contributes $c_{e^k}$ to the cut width of the hypergraph.
If $c_{e^k}\subset S$ or $c_{e^k}\subset\bar S$, it contributes $0$ to the width and $c_{e^k}$
to the corresponding internal sum in~\eqref{eq:cut-interval}.

For the following we randomly fix $m$ arbitrary $k$-party channels, each of capacity $c$.\footnote{The choice of a uniform hypergraph seems like a strong assumption. It was chosen to show the mathematical principle with rather simple formulars at hand. Yet without proof we believe a different topology could be choosen to carry out a similar argument.}


\begin{lemma}
\label{lem:kpc-exact}
A random $k$-set straddles $(S,\bar S)$ with probability
\[
q_k(s) \;:=\; 1 - \frac{\binom{s}{k}+\binom{n-s}{k}}{\binom{n}{k}}.
\]
Hence
\[
\EE[C(\delta(S))] \;=\; m\,c\, q_k(s).
\]
\end{lemma}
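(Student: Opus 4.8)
The plan is to mirror the structure of the proof of Lemma~\ref{lem:2pc-exact}, replacing the count of straddling pairs by a count of straddling $k$-subsets. First I would compute $\Pr[e^k \in \delta(S)]$ for a single $k$-party channel drawn uniformly from the $\binom{n}{k}$ possible $k$-subsets of $V$. A $k$-set fails to straddle the cut $(S,\bar S)$ exactly when it is entirely contained in $S$ or entirely contained in $\bar S$. These two events are disjoint (for $1 \le s \le n-1$ and $k \ge 1$ a nonempty $k$-set cannot be simultaneously inside $S$ and inside $\bar S$), so by inclusion the number of non-straddling $k$-sets is $\binom{s}{k} + \binom{n-s}{k}$, with the usual convention $\binom{a}{k}=0$ when $a<k$. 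Dividing by $\binom{n}{k}$ and taking the complement gives
\[
q_k(s) = 1 - \frac{\binom{s}{k}+\binom{n-s}{k}}{\binom{n}{k}} = \Pr[e^k \in \delta(S)].
\]

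Next I would transfer this to the expected cut capacity of the whole network exactly as in the two-party case. Fix $m$ $k$-party channels, each of capacity $c$, and let $X_i = \mathbf{1}[e^k_i \in \delta(S)]$; by the discussion preceding the lemma, channel $i$ contributes $c \cdot X_i$ to $C(\delta(S))$ (it can shift all of $c$ across the cut iff it straddles, and contributes $0$ otherwise). Hence $C(\delta(S)) = c\sum_{i=1}^m X_i$, and by linearity of expectation
\[
\EE[C(\delta(S))] = c\sum_{i=1}^m \EE[X_i] = c\,\sum_{i=1}^m \Pr[e^k_i \in \delta(S)] = m\,c\,q_k(s),
\]
where the crucial point is that linearity of expectation does not require independence, so it is irrelevant whether the $m$ channels are drawn with or without replacement (matching the remark in the proof of Lemma~\ref{lem:2pc-exact}).

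I would also include a sanity check that this specializes correctly: for $k=2$ we have $\binom{s}{2}+\binom{n-s}{2} = \frac{s(s-1)+(n-s)(n-s-1)}{2}$ and $\binom{n}{2}=\frac{n(n-1)}{2}$, and a short simplification shows $q_2(s) = \frac{2s(n-s)}{n(n-1)}$, recovering Lemma~\ref{lem:2pc-exact}. The calculation is entirely routine; the only genuinely delicate points are (i) being careful with the binomial convention $\binom{a}{k}=0$ for $a<k$ so the formula remains valid in edge cases such as $s<k$, and (ii) articulating precisely why a straddling hyperedge contributes its full capacity $c$ to the cut width — this is the hyperedge analogue of the two-party reassignment argument and is the step where the modeling assumption (a $k$-party channel can redistribute its liquidity arbitrarily among its $k$ participants) does the real work. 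Everything downstream is linearity of expectation, so I expect no substantive obstacle; the "hard part" is simply stating the modeling step cleanly rather than any computation.
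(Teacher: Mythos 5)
Your proposal is correct and follows essentially the same route as the paper: count the non-straddling $k$-sets as those entirely in $S$ or entirely in $\bar S$, take the complement over $\binom{n}{k}$, and then apply linearity of expectation over the $m$ channels exactly as in Lemma~\ref{lem:2pc-exact}. Your additional remarks on the convention $\binom{a}{k}=0$ for $a<k$ and on why a straddling hyperedge contributes its full capacity are welcome clarifications but do not change the argument.
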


\begin{proof}
  There are $\binom{n}{k}$ potential random hyperedges in total. 
  A hyperedge fails to cross the cut if and only if all its vertices lie in $S$ or all lie in $\bar S$.
  The probability that all vertices lie in $S$ is $\frac{\binom{s}{k}}{\binom{k}{n}}$. The probability that all vertices lie in $\bar S$ is $\frac{\binom{n-s}{k}}{\binom{n}{k}}$
  Substracting the sum of those probabilities from 1 is exactly the probability that a multi party channel straddles to the $(S,\bar S)$ cut.

  For the expecte value we use the same argument as in lemma \ref{lem:2pc-exact}
which concludes the proof.
\end{proof}

We emphasize that $q_2(s)$ mathches the formerly derrived probability $\frac{2s(n-s)}{n(n-1)}$ for two party channels. To see this plug $k=2$ into
\[
q_k(s)=1-\frac{\binom{s}{k}+\binom{n-s}{k}}{\binom{n}{k}}
\]
and simplify:

\begin{equation}
\begin{aligned}
q_2(s)
&=1-\frac{\binom{s}{2}+\binom{n-s}{2}}{\binom{n}{2}} \\
&=1-\frac{\frac{s(s-1)}{2}+\frac{(n-s)(n-s-1)}{2}}{\frac{n(n-1)}{2}}\\
&=1-\frac{s(s-1)+(n-s)(n-s-1)}{n(n-1)}\\
&=1-\frac{s^2-s+,(n^2-2ns+s^2)-n+s}{n(n-1)}\\
&=1-\frac{(2s^2-2ns)+n^2-n}{n(n-1)}\\
&=\frac{n(n-1)-(2s^2-2ns+n^2-n)}{n(n-1)}\\
&=\frac{2ns-2s^2}{n(n-1)}
=\boxed{\frac{2s(n-s)}{n(n-1)}}.
\end{aligned}
\end{equation}

\subsubsection{Monotonicity with increasing $k$}

in order to show the desired monotonicity we need a lemma first.
\begin{lemma}\label{lem:binom-step}
Let integers satisfy $0\le k < s \le n$. Then
\[
\frac{\binom{s}{k+1}}{\binom{n}{k+1}}
\;=\;
\frac{\binom{s}{k}}{\binom{n}{k}}\cdot \frac{s-k}{\,n-k\,}.
\]
Equivalently,
\[
\frac{\binom{s}{k+1}}{\binom{s}{k}}
\;=\;
\frac{s-k}{k+1}
\qquad\text{and}\qquad
\frac{\binom{n}{k}}{\binom{n}{k+1}}
\;=\;
\frac{k+1}{\,n-k\,}.
\]
\end{lemma}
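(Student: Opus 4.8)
The plan is to prove Lemma~\ref{lem:binom-step} by a direct computation with the factorial formula for binomial coefficients, since all three claimed identities are just rearrangements of the same elementary fact. First I would establish the single-variable step identity $\binom{N}{k+1} = \binom{N}{k}\cdot\frac{N-k}{k+1}$, valid for any integer $N\ge 0$ and $0\le k$: expand $\binom{N}{k+1} = \frac{N!}{(k+1)!(N-k-1)!}$ and $\binom{N}{k} = \frac{N!}{k!(N-k)!}$, take the ratio, and observe $\frac{(k+1)!(N-k-1)!}{k!(N-k)!}$ cancels to $\frac{k+1}{N-k}$. This immediately gives the two boxed equivalent statements: applying it with $N=s$ yields $\frac{\binom{s}{k+1}}{\binom{s}{k}} = \frac{s-k}{k+1}$, and applying it with $N=n$ and inverting yields $\frac{\binom{n}{k}}{\binom{n}{k+1}} = \frac{k+1}{n-k}$.

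Next I would combine these two to get the main displayed identity. Writing
\[
\frac{\binom{s}{k+1}}{\binom{n}{k+1}}
= \frac{\binom{s}{k+1}}{\binom{s}{k}}\cdot\frac{\binom{s}{k}}{\binom{n}{k}}\cdot\frac{\binom{n}{k}}{\binom{n}{k+1}}
= \frac{s-k}{k+1}\cdot\frac{\binom{s}{k}}{\binom{n}{k}}\cdot\frac{k+1}{n-k}
= \frac{\binom{s}{k}}{\binom{n}{k}}\cdot\frac{s-k}{n-k},
\]
where the factors $(k+1)$ cancel. This is exactly the asserted equation.

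The only things to be careful about are the hypotheses and well-definedness. The constraint $0\le k < s\le n$ guarantees $s-k\ge 1$ and $n-k\ge 1$, so no denominator vanishes, and it guarantees $\binom{s}{k}>0$ and $\binom{n}{k+1}>0$ (indeed $k+1\le s\le n$), so every quotient written above is a legitimate nonzero rational number. There is essentially no obstacle here — the lemma is a purely mechanical manipulation of factorials — the ``hard part'' is merely being explicit that the factor-of-$(k+1)$ cancellation is what drives the clean form of the conclusion, and that this lemma is the combinatorial engine that will make the subsequent monotonicity argument for $q_k(s)$ in $k$ go through. One could alternatively phrase the proof combinatorially (choosing an ordered list and counting which position is ``new''), but the factorial computation is shorter and leaves nothing to check.
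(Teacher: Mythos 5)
Your proposal is correct and follows essentially the same route as the paper: both derive the two single-variable ratio identities by expanding factorials and then multiply them so the $(k+1)$ factors cancel, yielding the main identity. Your added remarks on well-definedness under $0\le k<s\le n$ are a harmless (and welcome) bit of extra care.
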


\begin{proof}
By definition $\displaystyle \binom{t}{r}=\frac{t!}{r!\,(t-r)!}$ for integers $t\ge r\ge 0$.

First,
\[
\frac{\binom{s}{k+1}}{\binom{s}{k}}
=\frac{\frac{s!}{(k+1)!(s-k-1)!}}{\frac{s!}{k!(s-k)!}}
=\frac{k!(s-k)!}{(k+1)!(s-k-1)!}
=\frac{s-k}{k+1}.
\]
furthermore,
\[
\frac{\binom{n}{k}}{\binom{n}{k+1}}
=\frac{\frac{n!}{k!(n-k)!}}{\frac{n!}{(k+1)!(n-k-1)!}}
=\frac{(k+1)!(n-k-1)!}{k!(n-k)!}
=\frac{k+1}{\,n-k\,}.
\]
Multiply these two equalities gives:
\[
\frac{\binom{s}{k+1}}{\binom{s}{k}}\cdot\frac{\binom{n}{k}}{\binom{n}{k+1}} =\frac{s-k}{k+1}\cdot \frac{k+1}{\,n-k\,} \\
\Leftrightarrow
\frac{\binom{s}{k+1}}{\binom{n}{k+1}}
\;=\;
\frac{\binom{s}{k}}{\binom{n}{k}}\cdot \frac{s-k}{\,n-k\,}.
\]
which is the desired identity.
\end{proof}

\begin{corollary}\label{cor:binom-step-mirror}
For $0\le k < n-s \le n$,
\[
\frac{\binom{n-s}{k+1}}{\binom{n}{k+1}}
\;=\;
\frac{\binom{n-s}{k}}{\binom{n}{k}}\cdot \frac{(n-s)-k}{\,n-k\,}.
\]
\end{corollary}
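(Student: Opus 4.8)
The plan is to obtain this corollary as a direct specialization of Lemma~\ref{lem:binom-step}. The key observation is that the hypothesis of that lemma, $0\le k < s \le n$, imposes no structural requirement on $s$ beyond it being an integer lying strictly above $k$ and at most $n$; the variable $s$ there is really a bound symbol. Consequently, for \emph{any} integer $t$ with $0\le k < t \le n$, the identity
\[
\frac{\binom{t}{k+1}}{\binom{n}{k+1}}
=\frac{\binom{t}{k}}{\binom{n}{k}}\cdot\frac{t-k}{\,n-k\,}
\]
holds verbatim, by the same two factorial cancellations carried out in the proof of Lemma~\ref{lem:binom-step}.

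The only thing to check is that $t:=n-s$ is an admissible choice. By the hypothesis of the corollary we are given exactly $0\le k < n-s \le n$, which is precisely the condition $0\le k < t \le n$. Substituting $t=n-s$ into the specialized identity gives
\[
\frac{\binom{n-s}{k+1}}{\binom{n}{k+1}}
=\frac{\binom{n-s}{k}}{\binom{n}{k}}\cdot\frac{(n-s)-k}{\,n-k\,},
\]
which is the assertion of the corollary.

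I do not anticipate any real obstacle: the content is a relabeling of the bound variable in Lemma~\ref{lem:binom-step}, and the mild side condition $n-k\neq 0$ needed for the denominator is guaranteed by $k<n-s\le n$, hence $k<n$. If a fully self-contained derivation were preferred, one could instead re-run the two factorial computations from the proof of Lemma~\ref{lem:binom-step} with $n-s$ in place of $s$; this merely duplicates that argument, so invoking the lemma directly is cleaner.
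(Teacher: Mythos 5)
Your proposal is correct and is exactly the intended argument: the paper states this as an immediate corollary of Lemma~\ref{lem:binom-step}, obtained by substituting $n-s$ for the bound variable $s$, and your verification that $t=n-s$ satisfies the lemma's hypothesis $0\le k<t\le n$ is the only point that needed checking.
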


\begin{theorem}
\label{thm:qk-monotone}
For all $1\le s\le n-1$ and $2\le k\le n-2$,
\(
q_{k+1}(s)\ge q_k(s).
\)
Equivalently, the expected cut width is nondecreasing in $k$ for \emph{every} fixed cut.
\end{theorem}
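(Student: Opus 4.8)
The plan is to strip the statement down to two elementary term-by-term comparisons and then feed each into the step identities already proved. Unfolding the definition $q_k(s)=1-\frac{\binom{s}{k}+\binom{n-s}{k}}{\binom{n}{k}}$, the inequality $q_{k+1}(s)\ge q_k(s)$ is equivalent to
\[
\frac{\binom{s}{k+1}+\binom{n-s}{k+1}}{\binom{n}{k+1}}\;\le\;\frac{\binom{s}{k}+\binom{n-s}{k}}{\binom{n}{k}}.
\]
So it suffices to show the two separate inequalities $\dfrac{\binom{s}{k+1}}{\binom{n}{k+1}}\le\dfrac{\binom{s}{k}}{\binom{n}{k}}$ and $\dfrac{\binom{n-s}{k+1}}{\binom{n}{k+1}}\le\dfrac{\binom{n-s}{k}}{\binom{n}{k}}$, and add them.

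For the first, I would apply Lemma~\ref{lem:binom-step}, which for $k<s$ rewrites the left-hand side as $\bigl(\binom{s}{k}/\binom{n}{k}\bigr)\cdot\frac{s-k}{\,n-k\,}$. Since $s\le n$ and $n-k>0$ throughout the stated range, the multiplier $\frac{s-k}{n-k}$ lies in $[0,1]$, so the left-hand side is at most $\binom{s}{k}/\binom{n}{k}$. The second inequality is the exact mirror image: Corollary~\ref{cor:binom-step-mirror} supplies the contraction factor $\frac{(n-s)-k}{\,n-k\,}$, which lies in $[0,1]$ because $n-s\le n$. Summing the two bounds and subtracting from $1$ gives $q_{k+1}(s)\ge q_k(s)$, which is the claim.

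The one place that deserves care — and the only real obstacle — is the boundary behaviour when a binomial coefficient vanishes, since Lemma~\ref{lem:binom-step} is stated only for $k<s$. When $k\ge s$ one has $\binom{s}{k+1}=0$ (and $\binom{s}{k}=0$ for $k>s$), so the first comparison degenerates to $0\le\binom{s}{k}/\binom{n}{k}$, which is trivially true; the analogous remark handles the $n-s$ term when $k\ge n-s$. I would therefore either extend the step identity to all $k$ using the standard convention $\binom{a}{b}=0$ for $b>a$, or simply case-split on whether each binomial is active. I would also note in passing that the inequality is strict exactly when $k<s$, $k<n-s$, and $s<n$ — then at least one contraction factor is strictly below $1$ — which underpins the ``strictly enlarges $W_G$ generically'' statement in the surrounding discussion. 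The final line is just the observation that $\EE[C(\delta(S))]=m\,c\,q_k(s)$ from Lemma~\ref{lem:kpc-exact} converts monotonicity of $q_k$ into monotonicity of the expected cut width.
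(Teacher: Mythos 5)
Your proof is correct and follows essentially the same route as the paper: split $1-q_{k+1}(s)$ into the two ratios, apply Lemma~\ref{lem:binom-step} and Corollary~\ref{cor:binom-step-mirror} to extract the contraction factors $\frac{s-k}{n-k}$ and $\frac{(n-s)-k}{n-k}$, and bound each by $1$. Your explicit handling of the degenerate case $k\ge s$ (where the step lemma's hypothesis $k<s$ fails and the relevant binomial coefficient vanishes) is a point the paper's proof glosses over, and is worth keeping.
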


\begin{proof}
To show $q_{k+1}(s)\ge q_k(s)$ it is equivalent to show
\[
\frac{\binom{s}{k+1}+\binom{n-s}{k+1}}{\binom{n}{k+1}}
\;\le\;
\frac{\binom{s}{k}+\binom{n-s}{k}}{\binom{n}{k}}.
\]
Using lemma \ref{lem:binom-step} and corollary \ref{cor:binom-step-mirror} we can write:

\begin{equation}
  \begin{aligned}
1-q_{k+1}(s) &= \frac{\binom{s}{k+1}+\binom{n-s}{k+1}}{\binom{n}{k+1}}\\
&= \frac{\binom{s}{k+1}}{\binom{n}{k+1}} +\frac{\binom{n-s}{k+1}}{\binom{n}{k+1}}\\
&= \frac{\binom{s}{k}}{\binom{n}{k}}\cdot\underbrace{\frac{s-k}{n-k}}_{\leq 1} +\frac{\binom{n-s}{k}}{\binom{n}{k}}\cdot\underbrace{\frac{(n-s)-k}{n-k}}_{\leq 1}\\
&\leq \frac{\binom{s}{k}}{\binom{n}{k}} +\frac{\binom{n-s}{k}}{\binom{n}{k}}\\
& = 1- q_{k}(s)
  \end{aligned}
  \end{equation}

Multiplying $1-q_{k+1}(s) \leq 1-q_{k}(s)$ with $-1$ and adding $1$ we get the desired $q_k(s) \leq q_{k+1}(s)$ which concludes the proof.
\end{proof}

The monotonicity of the probability that an hyperedge straddles to the cut set transfers directly to the expected width of the expected cut capacity.
As with $2$-party channels we study what this means for the expected cut capacity of single node by setting $S = \{v\}$ to be a set of size $s=1$ with a single arbitrary node $v$. 

Recall
\[
q_k(s)=1-\frac{\binom{s}{k}+\binom{n-s}{k}}{\binom{n}{k}}.
\]

For $s=1$ and $k\ge 2$ we have
 $\binom{1}{k}=0$. Thus for $k\ge 2$,
  \[
  q_k(1)=1-\frac{\binom{n-1}{k}}{\binom{n}{k}}
  =1-\frac{n-k}{n}
  =\boxed{\frac{k}{n}}.
  \]

Summary:
\[
\boxed{q_k(1)=\begin{cases}
0,&k=1 \footnote{For the edge case of $k=1$ we have: $q_1(1)=1-\frac{\binom{1}{1}+\binom{n-1}{1}}{\binom{n}{1}}=1-\frac{1+(n-1)}{n}=0$. This makes sense as $1$-party channel means there are not channels which means the cutwidth should be $0$ for any cut.}
\\
k/n,&2\le k\le n
\end{cases}}
\]
So for $s=1$, the cross cut straddle probability is \textbf{linear} in $k$ (starting at $0$ for $k=1$, then $(2/n,3/n,\dots,n/n = 1)$.

This means that the maximum expected wealth of a node $v$ in a $k$-party channel network with $m$ channels of capacity $c$ is:

\[
m\cdot c \frac{k}{n}
\]

This is consistent with the previously derived formular for $2$-party channels. Staying with $k$ party channels and assuming again a sparse network with $m\approx n$ we can see that the expected maximum wealth a node can have in a $k$-party channel network is proportional to $c\cdot k$.

In particular for fixed $n$ and $m$ we can see that the expected cut interval for every nodes grows linearly with $k$.

This result aligns nicely with the extrem case of $m=1$ and $k=n$. In that case all peers are members of a single channel with capacity $c$. In that case every node can have access to all the capacity $c$. Thus all wealth distributions are feasible which means every payment is feasible.

\subsection{Impact on feasible payment flows via max flow/min cut theorem}

By formula \eqref{eq:payment}, a payment of amount $a$ from $s$ to $t$ is feasible iff the post-payment
wealth vector remains feasible. Equivalently in the liquidity graph, the payment is feasible iff the $s\!\to t$ maxflow is at least $a$.
By the max–flow/min–cut theorem \cite{elias1956note,FordFulkerson1956},
\[
\operatorname{maxflow}(s,t)\;=\;\min_{S\subset V:\ s\in S,\ t\notin S}\ C(\delta(S)),
\]

Thus any transformation that (weakly) increases every $s$–$t$ cut capacity cannot decrease
$\operatorname{maxflow}(s,t)$, so at least as many payments are feasible.
For entire wealth targets $\omega$, the same cut capacities appear on the right endpoints of the
cut–interval constraints
\[
\sum_{e\in E[S]} c_e \;\le\; \omega(S) \;\le\; \sum_{e\in E[S]} c_e \;+\; C(\delta(S)),
\]
so enlarging cut widths can only enlarge the feasible wealth set $W_G$.

\subsubsection{Coupling of $k$-party channels to two-party channels.}
For any event $\mathcal{E}$ we write its indicator as
\[
\mathbf{1}\{\mathcal{E}\}
=\begin{cases}
1,&\text{if $\mathcal{E}$ holds},\\
0,&\text{otherwise}.
\end{cases}
\]

Fix a node set $V$ with $|V|=n$ and an index set of channels $e=1,\dots,m$.
Each channel $e$ carries a (possibly heterogeneous) capacity $c_e>0$.

We construct, on a common probability space, a $k$-party model and a paired two-party model as follows:
for each channel $e$,
\begin{enumerate}
\item sample a $k$-subset $P_e\subset V$ uniformly from $\binom{n}{k}$ (interpreted as the endpoint set of a $k$-party channel)
\item conditional on $P_e$, sample one unordered pair $E_e\in\binom{P_e}{2}$ uniformly (interpreted as the endpoints of a two-party channel).
\end{enumerate}
For a cut $(S,\bar S)$ with $S\subset V$, define the event that the $k$-party channel straddles the cut as:
\[
A_e(S):=\big(P_e\cap S\neq\emptyset\ \text{and}\ P_e\cap\bar S\neq\emptyset\big)
\]
and the event that the corresponding two-party channel crosses the cut
\[
B_e(S):=\big(E_e\ \text{has one endpoint in $S$ and one in $\bar S$}\big)
\]

\begin{lemma}\label{lem:indicator-dominance}
For every cut $S\subset V$ and every channel index $e$,
\[
\mathbf{1}\{A_e(S)\}\ \ge\ \mathbf{1}\{B_e(S)\}.
\]
\end{lemma}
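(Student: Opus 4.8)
The plan is to prove the pointwise (deterministic) domination of indicators by a direct case analysis on the joint sample $(P_e, E_e)$, using that $E_e \subset P_e$ by construction. The statement is purely combinatorial: on the common probability space, for each realization we must show that whenever the two-party channel crosses the cut, so does the $k$-party channel. Since the right-hand side $\mathbf{1}\{B_e(S)\}$ only takes the value $0$ or $1$, it suffices to check the single nontrivial case $B_e(S) = 1$ and verify $A_e(S) = 1$ there.

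First I would unpack the definitions. If $B_e(S)$ holds, then the unordered pair $E_e = \{x, y\}$ has one endpoint in $S$ and the other in $\bar S$, say $x \in S$ and $y \in \bar S$. By step~(2) of the coupling construction, $E_e \in \binom{P_e}{2}$, so both $x, y \in P_e$. Hence $x \in P_e \cap S$ shows $P_e \cap S \neq \emptyset$, and $y \in P_e \cap \bar S$ shows $P_e \cap \bar S \neq \emptyset$. Together these two nonemptiness conditions are exactly the definition of $A_e(S)$, so $A_e(S)$ holds and $\mathbf{1}\{A_e(S)\} = 1 \ge 1 = \mathbf{1}\{B_e(S)\}$. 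In the remaining case $B_e(S) = 0$ the inequality $\mathbf{1}\{A_e(S)\} \ge 0$ is trivial since indicators are nonnegative. This exhausts all cases, establishing the claim for every realization of the probability space and hence pointwise (in particular almost surely).

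There is essentially no serious obstacle here: the content is just the set-theoretic observation that a subset $E_e$ of $P_e$ that meets both sides of the partition forces $P_e$ to meet both sides as well. The only thing to be careful about is keeping the logical direction straight -- we want $B_e \Rightarrow A_e$ (equivalently $\mathbf{1}\{B_e\} \le \mathbf{1}\{A_e\}$), not the converse, which would be false in general (a $k$-party channel can straddle the cut while the sampled pair inside it happens to land entirely on one side). I would emphasize in the write-up that the domination is deterministic on the coupled space, because that is precisely what makes it useful downstream: taking expectations of both sides immediately recovers $\EE[C(\delta(S))] = mc\,q_k(s) \ge mc\,q_2(s)$ relating the $k$-party and two-party cut widths, and more refined conclusions about $W_G$ follow by the cut-interval inequality of Lemma~\ref{lem:cut-interval} applied termwise.

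A clean way to present it is as a short two-line argument: state that $\mathbf{1}\{B_e(S)\} \le \mathbf{1}\{A_e(S)\}$ is equivalent to the implication $B_e(S) \subseteq A_e(S)$ as events, then verify that implication by the chain $E_e \subset P_e$ together with $E_e \cap S \neq \emptyset \neq E_e \cap \bar S$. If I wanted to be maximally explicit I could phrase it via monotonicity of the straddling predicate under taking supersets of the vertex set: the map $T \mapsto \mathbf{1}\{T \cap S \neq \emptyset \text{ and } T \cap \bar S \neq \emptyset\}$ is nondecreasing in $T$ with respect to inclusion, and $E_e \subseteq P_e$, which gives the result in one line. Either framing keeps the proof to a few sentences.
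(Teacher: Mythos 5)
Your proposal is correct and matches the paper's own proof essentially verbatim: both argue that $B_e(S)\Rightarrow A_e(S)$ because $E_e\subset P_e$, so a pair straddling the cut forces the containing $k$-set to straddle it, and the indicator inequality follows. No gaps; the extra remarks on monotonicity of the straddling predicate are a fine but optional embellishment.
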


\begin{proof}
If the two-party channel $E_e$ crosses the cut, then its two endpoints lie on opposite sides; since $E_e\subset P_e$, the $k$-party channel $P_e$ necessarily contains vertices on both sides, i.e., $A_e(S)$ holds. Thus $B_e(S)\Rightarrow A_e(S)$, which implies the indicator inequality.
\end{proof}

\begin{corollary}\label{cor:cut-capacity-dominance}
Let the cut capacity contributed by channel $e$ to $(S,\bar S)$ be
\[
C^{(k)}_e(S):=c_e\,\mathbf{1}\{A_e(S)\},\qquad
C^{(2)}_e(S):=c_e\,\mathbf{1}\{B_e(S)\}.
\]
Then for every cut $S$,
\[
\sum_{e=1}^m C^{(k)}_e(S)\ \ge\ \sum_{e=1}^m C^{(2)}_e(S).
\]
In particular, the minimum cut satisfies
\[
\min_{S\subset V}\ \sum_{e=1}^m C^{(k)}_e(S)\ \ge\ \min_{S\subset V}\ \sum_{e=1}^m C^{(2)}_e(S).
\]
\end{corollary}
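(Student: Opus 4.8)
The plan is to upgrade the per-channel indicator inequality of Lemma~\ref{lem:indicator-dominance} to an inequality between the two cut-capacity functions $S\mapsto\sum_{e=1}^m C^{(k)}_e(S)$ and $S\mapsto\sum_{e=1}^m C^{(2)}_e(S)$ by summing over channels, and then to pass from this pointwise (in $S$) comparison to a comparison of minima. Because the whole construction lives on the single common probability space of the coupling, the conclusion is not merely distributional: it holds on each sample path, i.e.\ for each realized family $(P_1,\dots,P_m)$ and $(E_1,\dots,E_m)$ simultaneously.

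First I would fix an arbitrary outcome of the coupling and an arbitrary cut $S\subset V$. For each channel index $e$, Lemma~\ref{lem:indicator-dominance} gives $\mathbf{1}\{A_e(S)\}\ge\mathbf{1}\{B_e(S)\}$; since the capacity satisfies $c_e>0$, multiplying through preserves the inequality, so $C^{(k)}_e(S)=c_e\mathbf{1}\{A_e(S)\}\ge c_e\mathbf{1}\{B_e(S)\}=C^{(2)}_e(S)$. Summing these $m$ inequalities yields $\sum_{e=1}^m C^{(k)}_e(S)\ge\sum_{e=1}^m C^{(2)}_e(S)$, which is the first assertion. For the second assertion I would take a minimizer $S^\star$ of the left-hand side over the finite, nonempty collection of cuts, and chain
\[
\min_{S}\ \sum_{e=1}^m C^{(k)}_e(S)
=\sum_{e=1}^m C^{(k)}_e(S^\star)
\ \ge\ \sum_{e=1}^m C^{(2)}_e(S^\star)
\ \ge\ \min_{S}\ \sum_{e=1}^m C^{(2)}_e(S),
\]
where the middle inequality is the per-cut bound just established at $S=S^\star$, and the last inequality is the definition of the minimum. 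This is just the elementary fact that $f\ge g$ pointwise on a finite set implies $\min f\ge\min g$.

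There is essentially no obstacle here: the statement is a bookkeeping consequence of Lemma~\ref{lem:indicator-dominance} plus monotonicity of finite sums and of the $\min$ operator. The only points requiring any care are (i) invoking $c_e>0$ so the indicator inequality survives multiplication, and (ii) noting that the minimum is over a finite nonempty family so a minimizer $S^\star$ exists. As a closing remark I would observe that taking expectations in $\sum_e C^{(k)}_e(S)\ge\sum_e C^{(2)}_e(S)$ gives $q_k(s)\sum_e c_e\ge q_2(s)\sum_e c_e$, reconciling this coupling with Lemmas~\ref{lem:2pc-exact}--\ref{lem:kpc-exact} and Theorem~\ref{thm:qk-monotone}; and that, combined with the max-flow/min-cut theorem used above, the min-cut domination implies that on this common space every payment feasible in the paired two-party model is also feasible in the $k$-party model, and likewise $W_G$ for the $k$-party model contains that of the two-party model.
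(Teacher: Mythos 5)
Your proof is correct and follows essentially the same route as the paper: sum the indicator inequality of Lemma~\ref{lem:indicator-dominance} over channels weighted by $c_e$, then observe that a pointwise inequality over the finite family of cuts is preserved under taking minima. Your explicit minimizer-chaining argument and closing remarks merely elaborate what the paper states in one line.
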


\begin{proof}
Sum the inequality in Lemma~\ref{lem:indicator-dominance} over $e$ and weight by $c_e$.
Taking the minimum over all cuts preserves the inequality.
\end{proof}

\begin{corollary}\label{cor:maxflow-feasibility}
For all ordered pairs $(s,t)\in V\times V$, $s\neq t$,
\[
\operatorname{maxflow}_{k\text{-party}}(s,t)\ \ge\ \operatorname{maxflow}_{\text{2PC}}(s,t)\footnote{more generally we could have conducted the same argument not from $k$-party channels to $2$-party but from $(k+1)$-party channels to $k$-party channels. All the set theoretic arguments stay the same in the general case.},
\]
by the max–flow/min–cut theorem. Moreover, the feasible-wealth set defined by the cut–interval constraints
\[
\sum_{e\in E[S]} c_e \;\le\; \omega(S) \;\le\; \sum_{e\in E[S]} c_e \;+\; \sum_{r=1}^m C^{(\cdot)}_r(S)\qquad(\forall S\subset V)
\]
is monotone in the cut widths, hence
\[
W_G^{(k\text{-party})}\ \supseteq\ W_G^{(\text{2PC})}.
\]
\end{corollary}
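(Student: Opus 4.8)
\section*{Proof proposal for Corollary~\ref{cor:maxflow-feasibility}}

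The bulk of the work is already packaged in Lemma~\ref{lem:indicator-dominance} and Corollary~\ref{cor:cut-capacity-dominance}; the plan is to feed the pointwise cut-capacity dominance into the max-flow/min-cut theorem for the flow claim, and into a Gale--Hoffman-type feasibility criterion for the wealth-set claim. First I would fix a single outcome of the coupling, i.e.\ a realization of the subsets $P_1,\dots,P_m$ together with the paired pairs $E_1,\dots,E_m$, so that everything below becomes a deterministic statement about the two networks built from that outcome; since the inequalities will hold for every outcome, they hold on the common probability space (and hence in distribution).

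For the first assertion, fix an ordered pair $(s,t)$ with $s\neq t$. For every vertex set $S$ with $s\in S$ and $t\notin S$, summing the indicator inequality of Lemma~\ref{lem:indicator-dominance} weighted by the capacities $c_e$ gives
\[
\sum_{e=1}^m C^{(k)}_e(S)\;\ge\;\sum_{e=1}^m C^{(2)}_e(S),
\]
that is $C^{(k)}(\delta(S))\ge C^{(2)}(\delta(S))$. Taking the minimum over the (common) family of $s$--$t$ separating cuts preserves the inequality, since evaluating the larger function at the cut minimising it already dominates the minimum of the smaller function. By the max-flow/min-cut theorem each side equals the $s\to t$ maxflow in the respective liquidity network, yielding $\operatorname{maxflow}_{k\text{-party}}(s,t)\ge\operatorname{maxflow}_{\text{2PC}}(s,t)$. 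Here I would insert one sentence noting that a $k$-party channel, treated as a hyperedge, behaves in any cut exactly like an ordinary arc of capacity $c_e$ straddling the cut (internal reassignment lets it route any amount in $[0,c_e]$ across the cut), so max-flow/min-cut applies verbatim.

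For the second assertion I would first record the characterisation that $\omega\in W_G$ is equivalent to solvability of the distribution system of Equations~\eqref{eq:wealthConstraints}--\eqref{eq:channelConstraints}, which, viewed as a feasibility problem on the bipartite edge--vertex incidence network (a source joined to each channel $e$ by an arc of capacity $c_e$, each channel joined to its incident vertices by arcs of infinite capacity, and each vertex $v$ joined to a sink by an arc of capacity $\omega(v)$), is by max-flow/min-cut feasible if and only if $\sum_{v\in S}\omega(v)\ge\sum_{e\in E[S]} c_e$ for every $S\subseteq V$; applying this also to $\bar S$ recovers exactly the two-sided cut intervals of Lemma~\ref{lem:cut-interval}, and integrality of max flow keeps everything in $\mathbb{Z}$. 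Under the coupling, $E_e\subseteq P_e$ forces $\{e:P_e\subseteq S\}\subseteq\{e:E_e\subseteq S\}$, so the internal sum of the $k$-party network is pointwise at most that of the two-party network for every $S$. Hence every defining inequality of $W_G^{(k\text{-party})}$ is implied by the corresponding inequality defining $W_G^{(\text{2PC})}$, giving $W_G^{(k\text{-party})}\supseteq W_G^{(\text{2PC})}$; the identical argument with $\binom{P_e}{k}$ in place of $\binom{P_e}{2}$ handles the general $(k{+}1)$-to-$k$ step of the footnote.

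The step I expect to be the real obstacle is establishing the \emph{sufficiency} of the cut inequalities for membership in $W_G$: Lemma~\ref{lem:cut-interval} only proves they are necessary, so the Gale--Hoffman / Hall-type converse (equivalently, the explicit min-cut computation on the edge--vertex network above, together with the verification that a minimising cut may be taken of the stated form) must be carried out with some care, particularly in the presence of hyperedges. Everything else --- the weighted summation of indicators, monotonicity of a minimum under a pointwise inequality, and the set implication $E_e\subseteq P_e\Rightarrow(P_e\subseteq S\Rightarrow E_e\subseteq S)$ --- is routine.
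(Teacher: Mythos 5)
Your proposal is correct. For the max-flow half it coincides with the paper's (essentially one-line) argument: Corollary~\ref{cor:cut-capacity-dominance} gives pointwise dominance of every $s$--$t$ cut capacity, the minimum over the common family of separating cuts preserves the inequality, and max-flow/min-cut converts this into dominance of the flow values; your remark that a straddling hyperedge acts in a cut like a single arc of capacity $c_e$ is exactly the modelling assumption the paper makes in Section~\ref{subsec:hyperedge-cut}.

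For the second half you go materially further than the paper, and usefully so. The paper's corollary only asserts that ``the feasible-wealth set \emph{defined by the cut--interval constraints}'' is monotone in the cut widths --- which is immediate, since each inequality is relaxed --- and then silently identifies that constraint polytope with the true $W_G$ of the earlier sections, even though Lemma~\ref{lem:cut-interval} establishes the cut inequalities only as \emph{necessary} conditions. You correctly flag this as the real gap and close it with a Gale--Hoffman-type converse: modelling feasibility of $\omega$ as a transportation problem on the source--channel--vertex--sink network and reading off from the min cut that $\sum_{v\in S}\omega(v)\ge\sum_{e\subseteq S}c_e$ for all $S$ (plus the total-sum identity, hence the two-sided intervals) is also \emph{sufficient}, with integrality of max flow keeping the witness $\lambda$ in $\mathbb{Z}$. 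Your cut analysis on that auxiliary network is sound (the infinite channel-to-vertex arcs force any finite cut to take the form you describe), it extends verbatim to hyperedges since the auxiliary network is an ordinary digraph, and the containment then follows from the two set inclusions $\{e:P_e\subseteq S\}\subseteq\{e:E_e\subseteq S\}$ and its complement-side mirror. In short: same skeleton as the paper, but your version actually proves the statement about $W_G$ itself rather than about its outer relaxation, which is the stronger and more honest claim.
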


\subsection{Implementing Multiparty Channels}
It seems as if there are three main burdens of multi party channels.
\begin{enumerate}
\item coordination overhead and interactivity requirements
\item unilateral exit costs - in particular on chain foot print
  \item membership problem: How can a node easily join a multi-party channel?
\end{enumerate}
At the time of writing we are aware of two reasonable approaches that come in question when looking at multi party channels. First, we have eltoo \cite{decker2018eltoo} which would require a softfork of the bitcoin protocol. Secondly, Ark\footnote{\url{https://bitcoinops.org/en/topics/ark/}} style systems which are actively being worked on have been proposed as they seem to work with a rather low interactivity requirements\cite{Pickhardt2025} and could thus work as a channel factory\cite{burchert2018scalable}. 

\section{Relation between the Polytope $L_G$ of Liquidity States and $W_G$ of feasible Wealth Distributions}
Given the topology of the lightning network as a weighted, undirected graph $G(V,E,cap)$.
There is a geometric relation between the corresponding polytopes $L_G$ of liquidity states and $W_G$ of feasible wealth distributions.
For every feasable state $\lambda \in L_G$ we have seen that the $n$ participants $\{v_1,\dots,v_n\}$ have a non negative wealth stored in their channels.
Thus we can project any liquidity state to a feasible wealth distribution via the projection:

\begin{equation}
  \begin{split}
  \label{eq:pi}
  \pi: L_G & \longrightarrow {W_G}\\
  \lambda & \stackrel{\pi}{\longmapsto}\sum_{e=(u,v)\in E}\left(\lambda(e,u)\cdot b_u + \lambda(e,v)\cdot b_v\right)
  \end{split}
\end{equation}

We proof that the image of $\pi$ is indeed a subset of $W_G$. To do this we show that $\pi(\lambda)$ is feasible in $W_G$.

\begin{lemma}
Let $w = \pi(\lambda)$. For a given base we can write $w=\sum_{v\in V}w_v\cdot b_v$. Then $\sum_{v\in V}w_v = C$ 
\end{lemma}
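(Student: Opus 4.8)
The plan is to compute $\sum_{v\in V} w_v$ directly from the definition of the projection $\pi$ in \eqref{eq:pi} and then invoke conservation of liquidity \eqref{eq:conservationOfLiquidity}. First I would observe that by definition of $\pi$, the coefficient $w_v$ of the basis vector $b_v$ in $w = \pi(\lambda)$ collects exactly the contributions $\lambda(e,v)$ from every channel $e$ incident to $v$; that is, $w_v = \sum_{e\in E:\, v\in e}\lambda(e,v) = \omega(v)$. This is the same quantity appearing in the definition of a feasible wealth distribution, so the lemma is really the statement that the wealth coordinates of $\pi(\lambda)$ sum to $C$.

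Next I would rewrite the total sum $\sum_{v\in V} w_v$ by regrouping the double sum over (vertex, incident channel) pairs into a sum over channels. Each channel $e = (u,v)$ is incident to exactly its two endpoints $u$ and $v$, so it contributes the term $\lambda(e,u) + \lambda(e,v)$ to the total. Formally,
\[
\sum_{v\in V} w_v \;=\; \sum_{v\in V}\sum_{e\in E:\, v\in e}\lambda(e,v) \;=\; \sum_{e=(u,v)\in E}\bigl(\lambda(e,u) + \lambda(e,v)\bigr).
\]
Here the reindexing is just Fubini/double-counting on the finite incidence structure; no term is missed or double-counted because every incidence $(v,e)$ with $v\in e$ is hit exactly once on each side.

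Finally, I would apply conservation of liquidity: since $\lambda\in L_G$, for every channel $e=(u,v)$ we have $\lambda(e,u) + \lambda(e,v) = \operatorname{cap}(e) = c_e$. Substituting gives $\sum_{v\in V} w_v = \sum_{e\in E} c_e = C$, which is the total capacity of the network by definition. This establishes the claim, and as a byproduct confirms that $\pi(\lambda)$ satisfies the defining linear constraint of $W(C,n)$; together with the non-negativity of each $w_v = \sum_{e:\,v\in e}\lambda(e,v)$ (a sum of non-negative liquidities) and the witness $\lambda$ itself, this is exactly what is needed to conclude $\pi(\lambda)\in W_G$. I do not anticipate a serious obstacle here — the only thing to be careful about is the bookkeeping in the reindexing step, namely ensuring that loops or parallel edges (already excluded or merged away in the model) do not spoil the clean ``two endpoints per edge'' count, and that the basis $\{b_v\}$ is used consistently so that coefficients of distinct $b_v$ are not conflated.
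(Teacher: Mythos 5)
Your proposal is correct and follows essentially the same route as the paper's proof: identify $w_v = \sum_{e\in E:\, v\in e}\lambda(e,v)$ from the definition of $\pi$, swap the order of summation to group by channel, and apply conservation of liquidity to obtain $\sum_{e\in E} c_e = C$. The additional remarks on non-negativity and the witness $\lambda$ are a harmless (and sensible) bonus beyond what the paper's lemma itself requires.
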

\begin{proof}
  We use the definition of our projection $\pi$ in equation \ref{eq:pi} as well as conservation of liquidity (equation \ref{eq:conservationOfLiquidity}) and the definition of $C$:

\begin{equation}
\begin{split}
  \sum_{v\in V}w_v & = \sum_{v\in V}\sum_{e\in E:v \in e}\lambda(e,v) \\
  & = \sum_{e\in E}\sum_{v\in e}\lambda(e,v) \\
  & = \sum_{e\in E} c_{e} \\
  & = C
\end{split}
\end{equation}
\end{proof}



  

Due to the methods introduced in Section \ref{sec:isomorphism}, we can find a preimage of \( w \in W_G \) under \( \pi \).  
Thus, \( \pi \) is surjective.  
In topology, any surjective map induces an equivalence relation, from which a quotient space can be constructed.  

\begin{definition}
We call two states \( \lambda, \mu \in L_G \) equivalent if and only if they are projected to the same wealth distribution, i.e., \( \pi(\lambda) = \pi(\mu) \).  
In this case, we write \( \lambda \sim_{\pi} \mu \).
\end{definition}

\begin{lemma}
The relation \( \sim_{\pi} \) is an equivalence relation. In particular, it is reflexive, symmetrical, and transitive.
\end{lemma}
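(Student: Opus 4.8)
The plan is to verify the three defining properties of an equivalence relation directly from the definition of $\sim_{\pi}$, using only that $\pi$ is a well-defined function. First I would observe that the statement $\lambda \sim_{\pi} \mu$ is by definition the assertion $\pi(\lambda) = \pi(\mu)$, i.e.\ the relation $\sim_{\pi}$ is nothing but the pullback (kernel pair) of the map $\pi$ along equality in $W_G$. Every such pullback of an equality relation is automatically an equivalence relation, so the proof amounts to transporting reflexivity, symmetry and transitivity of $=$ on $W_G$ back through $\pi$.

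Concretely I would argue as follows. For reflexivity: for any $\lambda \in L_G$ we trivially have $\pi(\lambda) = \pi(\lambda)$, hence $\lambda \sim_{\pi} \lambda$. For symmetry: if $\lambda \sim_{\pi} \mu$ then $\pi(\lambda) = \pi(\mu)$; since equality in $W_G$ is symmetric, $\pi(\mu) = \pi(\lambda)$, so $\mu \sim_{\pi} \lambda$. For transitivity: if $\lambda \sim_{\pi} \mu$ and $\mu \sim_{\pi} \nu$ then $\pi(\lambda) = \pi(\mu)$ and $\pi(\mu) = \pi(\nu)$; by transitivity of equality $\pi(\lambda) = \pi(\nu)$, so $\lambda \sim_{\pi} \nu$. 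Each of the three bullet items reduces to the corresponding property of $=$ on the (embedded) set $W_G \subset \mathbb{Z}^n$, which is available without any further hypotheses.

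There is essentially no obstacle here: the only thing one needs is that $\pi$ is genuinely a function $L_G \to W_G$ — i.e.\ that every $\pi(\lambda)$ is a single well-defined point of $W_G$ — and that has already been established (the preceding lemma shows the coordinate sum of $\pi(\lambda)$ equals $C$, and the nonnegativity of the wealth components, hence membership in $W_G$, follows from $\lambda \in L_G$ and conservation of liquidity). Thus the ``hard part'' is merely noting that the abstract fact ``the kernel pair of any map is an equivalence relation'' specializes to this situation; once that is recognized the three verifications are one line each. I would write the proof as three short labelled paragraphs (\textbf{Reflexivity}, \textbf{Symmetry}, \textbf{Transitivity}) mirroring the structure used earlier in the paper for the injectivity/surjectivity split, so that it reads uniformly with the surrounding text.
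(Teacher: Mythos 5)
Your proposal is correct and follows essentially the same route as the paper: the three properties are each reduced to the corresponding property of equality on $W_G$, exactly as in the paper's three-item verification. The extra framing via the kernel pair of $\pi$ is a nice observation but does not change the substance of the argument.
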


\begin{proof}
We need to show that the three properties of an equivalence relation are fulfilled:
\begin{enumerate}
    \item \textbf{Reflexivity:} Since \( \pi(\lambda) = \pi(\lambda) \), we have \( \lambda \sim_{\pi} \lambda \) for any \( \lambda \in L_G \).
    \item \textbf{Symmetry:} Let \( \lambda, \mu \in L_G \) with \( \lambda \sim_{\pi} \mu \), which means \( \pi(\lambda) = \pi(\mu) \). Since equality is symmetrical, it follows that \( \pi(\mu) = \pi(\lambda) \), which means \( \mu \sim_{\pi} \lambda \).
    \item \textbf{Transitivity:} Let \( \lambda, \mu, \nu \in L_G \) with \( \lambda \sim_{\pi} \mu \) and \( \mu \sim_{\pi} \nu \). We have \( \pi(\lambda) = \pi(\mu) = \pi(\nu) \), from which it follows that \( \lambda \sim_{\pi} \nu \).
\end{enumerate}
\end{proof}



\begin{definition}
We call \( [\lambda] = \{\mu \in L_G \mid \lambda \sim_{\pi} \mu\} \) the equivalence class of \( \lambda \).  
The quotient space \( L_G / \sim_{\pi} \) is the space of equivalence classes.
\end{definition}

In particular, the equivalence class \( [\lambda] = \pi^{-1}(\{w\}) \) is the preimage of \( w \in W_G \) under the projection \( \pi \).  
It follows that
\begin{equation}
  L_G / \sim_{\pi} \cong W_G
\end{equation}
for which we have explicitly provided the isomorphism.  
In one direction, it is the projection of the channel states to the wealth distribution as defined by \( \pi \), and in the other direction, it is the construction of a feasible state for a feasible wealth distribution as described in Section \ref{sec:isomorphism}.

\subsection{Rebalancing Payment Channels and Circulations}
\label{sec:rebalancing}

We aim to study the number of elements in the equivalence class \( |[\lambda]| \).  
Recall that for \( \lambda, \mu \in [\lambda] \), we have \( \pi(\lambda) = \pi(\mu) \).  
Thus, \( \lambda \) and \( \mu \) project to the same wealth distribution.  
When routing fees are ignored, circular self-payments, also known as channel rebalancings, are the only payments that do not change the wealth distribution.  
However, they do alter the liquidity state of the network.  

We will now prove that the size of the equivalence class \( |[\lambda]| \) is equal to the number of strict circulations that exist on \( \mathcal{L}(G, \lambda) \).  
Also, if \( \pi(\lambda) = w \), then \( |\pi^{-1}(\{w\})| \) is equal to the number of circulations.\footnote{We are not the first to note that circular self-payments do not change the wealth distributions of the participants in a payment channel network \cite{piatkivskyi2018rebalancing}.}  
Counting the number of circulations is possible with Ehrhart's theory \cite{ehrhart1962polyhedra} or by applying Barvinok's algorithm \cite{barvinok1994polynomial}.  
In particular, free open source software\footnote{\url{https://www.math.ucdavis.edu/~latte/}} exists \cite{de2004effective} that can achieve this.  

To follow these observations, we review a few elements from the theory of network flows.  

\begin{definition}
A flow network is a directed graph \( G = (V, E) \) with a capacity function \( c: E \rightarrow \mathbb{Z}^+ \) and a flow function \( f: E \rightarrow \mathbb{Z} \) such that \( 0 \leq f(e) \leq c(e) \) for all \( e \in E \).
\end{definition}

Certain flows are of particular interest:  

\begin{definition}
A flow \( f \) is called a circulation if at every vertex \( x \in V \) we have:
\begin{equation}
\sum_{(u,x) \in E} f(u,x) = \sum_{(x,v) \in E} f(x,v)
\end{equation}
\end{definition}




Due to the particular construction of the liquidity graph \(\mathcal{L}(G,\lambda)\), we require a more precise definition of circulations.

\begin{definition}
  In bidirectional flow networks, where for any \((u,v) \in E\) there is also \((v,u) \in E\), a circulation \(f\) is called strict if for all edges \((u,v) \in E\), we have either \(f(u,v) = 0\) or \(f(v,u) = 0\), or both \(f(u,v) = 0 = f(v,u)\).
\end{definition}

We now demonstrate that there is a one-to-one correspondence between strict circulations on \(\mathcal{L}(G,\lambda)\) and the elements of \([\lambda]\).  
First, let \(f\) be a circulation on \(\mathcal{L}(G,\lambda)\).  
We show that \(\lambda'\), defined pointwise for every \(e = (u,v)\) by
\begin{equation}
  \lambda'(e,u) = \lambda(e,u) + f(v,u) - f(u,v)
\end{equation}
and
\begin{equation}
  \lambda'(e,v) = \lambda(e,v) + f(u,v) - f(v,u),
\end{equation}
is indeed a member of \([\lambda]\).  
To establish this, we prove the following lemma.

\begin{lemma}
  Let \(G(V,E,cap)\) be a payment channel network in an arbitrary feasible state \(\lambda \in L_G\).  
  Then \(\lambda' \in L_G\) and, in particular, \(\pi(\lambda) = \pi(\lambda')\).
\end{lemma}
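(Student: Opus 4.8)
The plan is to verify two things: first that $\lambda'$ satisfies conservation of liquidity on every channel (hence $\lambda' \in H_G$, the hyperbox, after checking nonnegativity), and second that the per-vertex wealth totals are unchanged by the circulation, which gives $\pi(\lambda') = \pi(\lambda)$.

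First I would check conservation of liquidity. For a fixed channel $e = (u,v)$, add the two defining equations:
\[
\lambda'(e,u) + \lambda'(e,v) = \lambda(e,u) + \lambda(e,v) + \big(f(v,u) - f(u,v)\big) + \big(f(u,v) - f(v,u)\big) = \lambda(e,u) + \lambda(e,v) = c_e,
\]
so equation \eqref{eq:conservationOfLiquidity} holds. Next I would confirm $\lambda'$ takes values in $\{0,\dots,c_e\}$: since $f$ is a \emph{strict} circulation, on each channel at most one of $f(u,v), f(v,u)$ is nonzero, and the flow bound $0 \le f(u,v) \le c(u,v) = \lambda(e,u)$ (and symmetrically) forces $\lambda'(e,u) \in \{0,\dots,c_e\}$; together with conservation this pins down $\lambda'(e,v)$ in range as well. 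Hence $\lambda' \in L_G$ via the isomorphism of Lemma~\ref{lem:dimensionOfStatePolytope}.

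Then I would show $\pi(\lambda) = \pi(\lambda')$ by computing, for each vertex $x \in V$, the wealth $\sum_{e \in E: x \in e} \lambda'(e,x)$. Writing $\lambda'(e,x) = \lambda(e,x) + f(\text{into }x\text{ along }e) - f(\text{out of }x\text{ along }e)$ and summing over all channels incident to $x$, the correction term telescopes into $\sum_{(w,x)\in E'} f(w,x) - \sum_{(x,w)\in E'} f(x,w)$, which is exactly the flow-conservation balance at $x$ and therefore vanishes because $f$ is a circulation. So the wealth at every vertex is unchanged, giving $\pi(\lambda') = \pi(\lambda)$ as claimed.

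The only mild subtlety — and the place to be careful rather than a genuine obstacle — is matching the orientation bookkeeping between the undirected channel $e=(u,v)$ in $G$ and the two directed arcs $(u,v),(v,u)$ of $\mathcal{L}(G,\lambda)$ with capacities $\lambda(e,u),\lambda(e,v)$; one must be consistent about which of $f(u,v)$ subtracts from whose liquidity. The strictness hypothesis is what guarantees the update never drives a liquidity value negative (a non-strict circulation could route flow both ways on a channel and over-deplete one side), so I would flag explicitly where strictness is used.
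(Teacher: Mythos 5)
Your proof is correct and follows essentially the same route as the paper's: the $f$-terms cancel pairwise to give conservation of liquidity, the directed capacities of $\mathcal{L}(G,\lambda)$ bound $\lambda'(e,u)$ in $\{0,\dots,c_e\}$, and flow conservation at each vertex kills the correction term so that $\pi(\lambda')=\pi(\lambda)$. One small correction to your closing remark: strictness is \emph{not} what prevents over-depletion here. Since $0\le f(u,v)\le c(u,v)=\lambda(e,u)$ and $f(v,u)\ge 0$, we get $\lambda'(e,u)\ge \lambda(e,u)-f(u,v)\ge 0$ (and symmetrically $\lambda'(e,u)\le \lambda(e,u)+f(v,u)\le c_e$) for \emph{any} circulation respecting the capacities of the liquidity network; the paper accordingly states and proves this lemma for arbitrary circulations, and strictness is only needed later, for the injectivity of the correspondence between circulations and states in $[\lambda]$.
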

\begin{proof}
  We use the fact that \(\lambda \in L_G\) and conservation of liquidity to show that conservation of liquidity also holds for \(\lambda'\):

\begin{equation*}
    \begin{split}
      \lambda'(e,u) + \lambda'(e,v) & = \lambda(e,u) + f(v,u) - f(u,v) \\
       & + \lambda(e,v) + f(u,v) - f(v,u)\\
      & = \lambda(e,u) + \lambda(e,v) \\
      & + \underbrace{f(u,v) - \underbrace{f(v,u) + f(v,u)}_{=0} - f(u,v)}_{=0}\\
      & = \lambda(e,u) + \lambda(e,v) \\
      & = cap(e)
    \end{split}
  \end{equation*}
  Furhter more we have: 
  \begin{equation}
    \lambda'(e,u) = \underbrace{\lambda(e,u) - \underbrace{f(u,v)}_{\leq \lambda(e,u)}}_{\geq 0} +\underbrace{f(v,u)}_{\geq 0} \geq 0
  \end{equation}
and
  \begin{equation}
    \lambda'(e,u) = \underbrace{\lambda(e,u) + \underbrace{f(v,u)}_{\leq cap(e) - \lambda(e,u)}}_{\leq cap(e)} -\underbrace{f(u,v)}_{\geq 0} \leq cap(e)
  \end{equation}
  This shows that $0\leq \lambda'(e,u) \leq cap(e)$ indicating $\lambda'\in L_G$
  

To show that \(\pi(\lambda) = \pi(\lambda')\), we recall the definition of \(\lambda'\):
\[ \lambda' = \lambda + \underbrace{\sum_{(u,v) \in E} \left(f(u,v) \cdot b_{(u,v),u} - f(u,v) \cdot b_{(u,v),v}\right)}_{\mu} \]
Since \(\pi\) is linear, we have \(\pi(\lambda') = \pi(\lambda) + \pi(\mu)\). Thus, to show \(\pi(\lambda) = \pi(\lambda')\), it is sufficient to prove that \(\pi(\mu) = 0\).

\begin{equation}
    \begin{split}
    \pi(\mu) &= \sum_{(u,v) \in E} \left(f(u,v) \cdot b_u - f(u,v) \cdot b_v\right)
    \end{split}
\end{equation}

In particular, for an arbitrary $x\in V$ the \(x\)-th component of \(\pi(\mu)\) can be written as:
\[ (\pi(\mu))_x = \sum_{(x,v) \in E} f(x,v) - \sum_{(u,x) \in E} f(u,x) \]

Because of the conservation of flow and \(f\) being a circulation, this equation equals zero, so \((\pi(\mu))_x = 0\).

Therefore, \(\pi(\mu) = 0\) and \(\pi(\lambda) = \pi(\lambda')\).
\end{proof}

After establishing that a circulation \(f\) on \(\mathcal{L}(G,\lambda)\) results in a new liquidity state \(\lambda'\), we aim to show that the provided mapping is injective.

\begin{lemma}
  Let \(f \neq g\) be strict circulations on \(\mathcal{L}(G,\lambda)\). Then the associated liquidity states \(\lambda_f \neq \lambda_g\).
\end{lemma}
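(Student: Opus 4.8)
The plan is to prove the contrapositive-flavored statement directly: given two distinct strict circulations $f \neq g$ on $\mathcal{L}(G,\lambda)$, exhibit a channel $e=(u,v)$ on which the induced states $\lambda_f$ and $\lambda_g$ differ. Since $f \neq g$, there is at least one directed edge, say $(u,v)\in E'$, with $f(u,v)\neq g(u,v)$. The natural candidate is to show that $\lambda_f(e,u)\neq\lambda_g(e,u)$ for the corresponding undirected channel $e$. By the defining formula, $\lambda_f(e,u)-\lambda_g(e,u) = \bigl(f(v,u)-g(v,u)\bigr) - \bigl(f(u,v)-g(u,v)\bigr)$, so I need to rule out the possibility that the two directional differences cancel.

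First I would reduce to the case where the difference on the reverse edge vanishes or has a controlled sign, using \emph{strictness}. For a fixed undirected channel $e$, strictness of $f$ forces at least one of $f(u,v)$, $f(v,u)$ to be zero, and likewise for $g$. Enumerating the cases: if both $f$ and $g$ put their flow in the same direction on $e$ (say both have $f(v,u)=0=g(v,u)$), then $\lambda_f(e,u)-\lambda_g(e,u) = -(f(u,v)-g(u,v))\neq 0$ and we are done. If $f$ and $g$ use \emph{opposite} directions on $e$ with both nonzero, then one of $f(u,v)-g(u,v)$, $f(v,u)-g(v,u)$ is $\ge 0$ and the other is $\le 0$ with at least one strict, so the combination $\bigl(f(v,u)-g(v,u)\bigr)-\bigl(f(u,v)-g(u,v)\bigr)$ is a difference of a nonpositive and a nonnegative quantity — same sign, no cancellation — and is nonzero. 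The remaining case is when one of them, say $g$, is zero on both directions of $e$ while $f$ is nonzero in one direction; again no cancellation. So on the specific channel carrying the witnessing discrepancy, the $u$-liquidities genuinely differ.

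The one subtlety — and the step I expect to be the real obstacle — is that the witnessing directed edge $(u,v)$ with $f(u,v)\neq g(u,v)$ need not itself be a ``strict'' edge in a way that makes the above case analysis immediately apply; strictness is a property of $f$ and of $g$ separately, not of $f-g$. I would handle this by noting that the case split above is exhaustive precisely because strictness of $f$ and strictness of $g$ each independently kill one of the two directional variables on $e$, leaving at most one free directional value per circulation per channel. Writing $a = f(u,v)-g(u,v)$ and $b = f(v,u)-g(v,u)$, in every allowed configuration the pair $(a,b)$ lies in a quadrant where $b - a$ cannot be zero unless $a=b=0$, contradicting the choice of $(u,v)$. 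Assembling these cases gives $\lambda_f(e,u)\neq\lambda_g(e,u)$, hence $\lambda_f\neq\lambda_g$, which is what we wanted. Together with the previously established well-definedness and the surjectivity onto $[\lambda]$, this completes the bijection between strict circulations on $\mathcal{L}(G,\lambda)$ and the equivalence class $[\lambda]$.
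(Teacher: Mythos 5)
Your proposal is correct and is essentially the paper's own argument: both rest on the same four-case analysis of which directional components of $f$ and $g$ vanish by strictness, using nonnegativity to rule out cancellation in $\bigl(f(v,u)-g(v,u)\bigr)-\bigl(f(u,v)-g(u,v)\bigr)$. The only difference is presentational — you argue directly from a witnessing edge where $f\neq g$, while the paper assumes $\lambda_f=\lambda_g$ and derives $f=g$ edge by edge for a contradiction.
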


\begin{proof}
  We prove this by contradiction.
  Assume \(\lambda_f = \lambda_g\).
  This means that for all \(e = (u,v) \in E\), we have:
  \begin{equation*}
    \begin{split}
      \lambda(e,u) + f(v,u) - f(u,v) &= \lambda(e,u) + g(v,u) - g(u,v) \\
      \Leftrightarrow f(v,u) - f(u,v) &= g(v,u) - g(u,v)
    \end{split}
  \end{equation*}
  Because \(f\) and \(g\) are strict circulations, at least one term on each side of the equation equals zero. We consider all cases, noting that \(f,g \geq 0\):

  \textbf{1. Case:}
  \[ f(v,u) = g(v,u) = 0 \Rightarrow -f(u,v) = -g(u,v) \]

  \textbf{2. Case:}
  \[ f(u,v) = g(u,v) = 0 \Rightarrow f(v,u) = g(v,u) \]

  \textbf{3. Case:}
  \[ f(v,u) = g(u,v) = 0 \Rightarrow -f(u,v) = g(v,u) = 0 \]

  \textbf{4. Case:}
  \[ f(u,v) = g(v,u) = 0 \Rightarrow f(v,u) = -g(u,v) = 0 \]

  This shows that \(f = g\), which contradicts our assumption that \(f \neq g\). Therefore, \(\lambda_f \neq \lambda_g\).
\end{proof}

Finally, we prove that our correspondence between strict circulations and equivalent liquidity states is surjective.

\begin{lemma}
  For any \(\lambda' \in [\lambda]\), there exists a strict circulation \(f\) on \(\mathcal{L}(G,\lambda)\) such that \(\lambda'(u,v) = \lambda(u,v) + f(v,u) - f(u,v)\).
\end{lemma}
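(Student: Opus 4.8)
The plan is to build $f$ directly from the edgewise discrepancy between $\lambda'$ and $\lambda$, and then check that it is a valid strict circulation recovering $\lambda'$. For each channel $e=(u,v)\in E$ put $d_e:=\lambda'(e,u)-\lambda(e,u)$; conservation of liquidity for both $\lambda$ and $\lambda'$ forces $\lambda'(e,v)-\lambda(e,v)=-d_e$, so the entire deviation on $e$ is encoded by the single integer $d_e$. I would then route this discrepancy along the channel in the natural direction by setting $f(v,u):=\max(d_e,0)$ and $f(u,v):=\max(-d_e,0)$, so that at least one of the two values is zero (which gives strictness for free) and $f(v,u)-f(u,v)=d_e$.

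The first thing to verify is that $f$ respects the capacities of $\mathcal{L}(G,\lambda)$, namely $0\le f(u,v)\le c(u,v)=\lambda(e,u)$ and $0\le f(v,u)\le c(v,u)=\lambda(e,v)$. Nonnegativity is immediate; the upper bounds are exactly the two inequalities $0\le\lambda'(e,u)\le c_e$ coming from $\lambda'\in L_G$, applied according to the sign of $d_e$. Next I would check the circulation condition at an arbitrary vertex $x$: for each channel $e$ incident to $x$ the net inflow it contributes is $f(y,x)-f(x,y)=\lambda'(e,x)-\lambda(e,x)$ (with $y$ the other endpoint), so summing over all $e\ni x$ gives $(\pi(\lambda'))_x-(\pi(\lambda))_x$, which vanishes precisely because $\lambda'\in[\lambda]$ means $\pi(\lambda')=\pi(\lambda)$. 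Finally, substituting back, $\lambda_f(e,u)=\lambda(e,u)+f(v,u)-f(u,v)=\lambda(e,u)+d_e=\lambda'(e,u)$ and likewise $\lambda_f(e,v)=\lambda'(e,v)$, so $\lambda_f=\lambda'$.

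I expect no serious obstacle here; the work is bookkeeping rather than conceptual. The one point requiring care is that $d_e$ depends on the arbitrary labelling $e=(u,v)$: reversing the orientation negates $d_e$, but it also swaps the roles of $f(u,v)$ and $f(v,u)$, so the construction is consistent. Combined with the two preceding lemmas — that $f\mapsto\lambda_f$ is well defined into $[\lambda]$ and injective — this surjectivity statement establishes the claimed bijection between strict circulations on $\mathcal{L}(G,\lambda)$ and the fiber $[\lambda]=\pi^{-1}(\{\pi(\lambda)\})$.
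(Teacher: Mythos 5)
Your construction is exactly the paper's: you route the edgewise discrepancy $d_e=\lambda'(e,u)-\lambda(e,u)$ in one direction only (so strictness is automatic), bound $f$ by the capacities of $\mathcal{L}(G,\lambda)$ using $0\le\lambda'(e,\cdot)\le c_e$, and derive the circulation condition at each vertex from $\pi(\lambda')=\pi(\lambda)$. The argument is correct and, if anything, slightly more complete than the paper's version, since you explicitly verify $\lambda_f=\lambda'$ and note the independence from the edge orientation.
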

\begin{proof}
For any \(e = (x, y)\), we define:
\[
f(x, y) = 
\begin{cases}
    \lambda(e, x) - \lambda'(e, x) & \text{if } \lambda(e, x) \geq \lambda'(e, x) \\
    0 & \text{otherwise}
\end{cases}
\]
From the definition, it follows that:
\[
0 \leq f(x, y) \leq c(x, y)
\]

The \(x\)-th component of the wealth vector \(\pi(\lambda)\) is computed as:
\[
\left(\pi(\lambda)\right)_x = \sum_{e \in E: x \in e} \lambda(e, x)
\]
Since \(\lambda' \in [\lambda]\), we have \(\pi(\lambda') = \pi(\lambda)\). It follows for any \(x \in V\):
\[
0 = \sum_{e \in E: x \in e} \left(\lambda(e, x) - \lambda'(e, x)\right)
\]

If \(\lambda(e, x) - \lambda'(e, x) \geq 0\), then we have:
\[
f(x, y) = \lambda(e, x) - \lambda'(e, x)
\]

Otherwise, we use the conservation of liquidity to show:
\begin{equation*}
    \begin{split}
        \lambda(e, x) - \lambda'(e, x) &= c(e) - \lambda(e, y) - \left(c(e) - \lambda'(e, y)\right) \\
        &= c(e) - \lambda(e, y) - c(e) + \lambda'(e, y) \\
        &= -\lambda(e, y) + \lambda'(e, y) \\
        &= -\underbrace{\left(\lambda(e, y) - \lambda'(e, y)\right)}_{f(y, x)} \\
        &= -f(y, x)
    \end{split}
\end{equation*}

Therefore, we can replace \(\lambda(e, x) - \lambda'(e, x)\) with \(f(x, y) - f(y, x)\), yielding:
\begin{equation}
    \begin{split}
        0 &= \sum_{e \in E: x \in e} \left(\lambda(e, x) - \lambda'(e, x)\right) \\
        &= \sum_{e \in E: x \in e} \left(f(x, y) - f(y, x)\right) \\
        &= \sum_{e \in E: x \in e} f(x, y) - \sum_{e \in E: x \in e} f(y, x) \\
        &\Leftrightarrow \sum_{(y, x)} f(y, x) = \sum_{(x, y)} f(x, y)
    \end{split}
\end{equation}

This proves that \(f\) is a strict circulation on \(\mathcal{L}(G, \lambda)\).
\end{proof}

From these lemmas, the main theorem follows:

\begin{theorem}
  \label{thm:circulations}
    The fiber \(\pi^{-1}(\{w\})\) of any feasible wealth distribution \(w \in W_G\) is an equivalence class \([\lambda]\) of liquidity states. The number of strict circulations on \(\mathcal{L}(G, \lambda)\) is the same as the number \(|[\lambda]|\) of distinct liquidity states of the network with the same fixed wealth distribution \(w \in W_G\).
\end{theorem}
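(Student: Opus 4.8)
The plan is to assemble Theorem~\ref{thm:circulations} directly from the three lemmas just established, treating it essentially as a bookkeeping statement that packages the correspondence into the language of fibers and equivalence classes. First I would recall that the fiber $\pi^{-1}(\{w\})$ is, by the definition of $\sim_\pi$, exactly the equivalence class $[\lambda]$ of any liquidity state $\lambda$ with $\pi(\lambda)=w$; this is immediate since $\mu\sim_\pi\lambda \iff \pi(\mu)=\pi(\lambda)=w \iff \mu\in\pi^{-1}(\{w\})$. Since $\pi$ was shown to be surjective onto $W_G$ (via the constructive preimage of Section~\ref{sec:isomorphism}), such a $\lambda$ exists for every $w\in W_G$, so the fiber is always nonempty and coincides with a genuine equivalence class.

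Next I would fix a representative $\lambda$ with $\pi(\lambda)=w$ and consider the map $\Phi$ sending a strict circulation $f$ on $\mathcal{L}(G,\lambda)$ to the liquidity state $\lambda_f$ defined pointwise by $\lambda_f(e,u)=\lambda(e,u)+f(v,u)-f(u,v)$. The first lemma of this subsection shows $\Phi(f)\in L_G$ and $\pi(\Phi(f))=\pi(\lambda)=w$, hence $\Phi$ lands in $[\lambda]=\pi^{-1}(\{w\})$. The second lemma (the contradiction argument with the four cases) shows $\Phi$ is injective. The third lemma, which explicitly constructs, for any $\lambda'\in[\lambda]$, a strict circulation $f$ with $\lambda'=\lambda_f$, shows $\Phi$ is surjective onto $[\lambda]$. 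Therefore $\Phi$ is a bijection between the set of strict circulations on $\mathcal{L}(G,\lambda)$ and $[\lambda]$, which gives the cardinality statement $|[\lambda]|=\#\{\text{strict circulations on }\mathcal{L}(G,\lambda)\}$.

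Finally I would note the claimed independence of the count from the choice of representative: if $\pi(\lambda)=\pi(\mu)=w$ then $[\lambda]=[\mu]=\pi^{-1}(\{w\})$, so the three lemmas applied with base point $\mu$ give a bijection between strict circulations on $\mathcal{L}(G,\mu)$ and the same set $\pi^{-1}(\{w\})$; hence $\mathcal{L}(G,\lambda)$ and $\mathcal{L}(G,\mu)$ carry the same number of strict circulations, and this number equals $|\pi^{-1}(\{w\})|$. This is the content of the last sentence of the theorem. The only genuinely delicate point is ensuring that the third lemma's construction really does produce a \emph{strict} circulation on $\mathcal{L}(G,\lambda)$ specifically (not merely a circulation on some auxiliary graph), since the edge capacities there are $\lambda(e,\cdot)$ rather than $\operatorname{cap}(e)$; but the capacity check $0\le f(x,y)\le c(x,y)$ and the strictness (at most one of $f(x,y),f(y,x)$ nonzero) were already verified in that lemma, so the theorem itself requires no further work beyond citing them in the right order.
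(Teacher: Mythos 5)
Your proposal is correct and matches the paper's approach exactly: the paper simply states that the theorem ``follows from these lemmas,'' i.e.\ it assembles well-definedness, injectivity, and surjectivity of the map $f\mapsto\lambda_f$ into a bijection between strict circulations on $\mathcal{L}(G,\lambda)$ and the fiber $[\lambda]=\pi^{-1}(\{w\})$, just as you do. Your additional remarks on surjectivity of $\pi$ (so fibers are nonempty) and on representative-independence are sound and, if anything, slightly more careful than the paper's own one-line derivation.
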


Circular rebalancing of liquidity does not change the feasibility of a payment since circulations leave the wealth distribution invariant, and the feasibility of payments is determined by testing if the change in wealth distribution still results in a feasible wealth distribution. However, depending on which liquidity state \(\lambda \in \pi^{-1}(\{w\})\) the network is in, the speed at which payment planning strategies of nodes can find the necessary liquidity for feasible payments can be impacted.

\begin{figure}[h]
  \centering
  \hspace{1.5em}
\subfigure[]{\includegraphics[width=0.152\textwidth]{polytope_of_wealth_distributions_lninequal}}
\hspace{1.5em}
\subfigure[]{\includegraphics[width=0.23\textwidth]{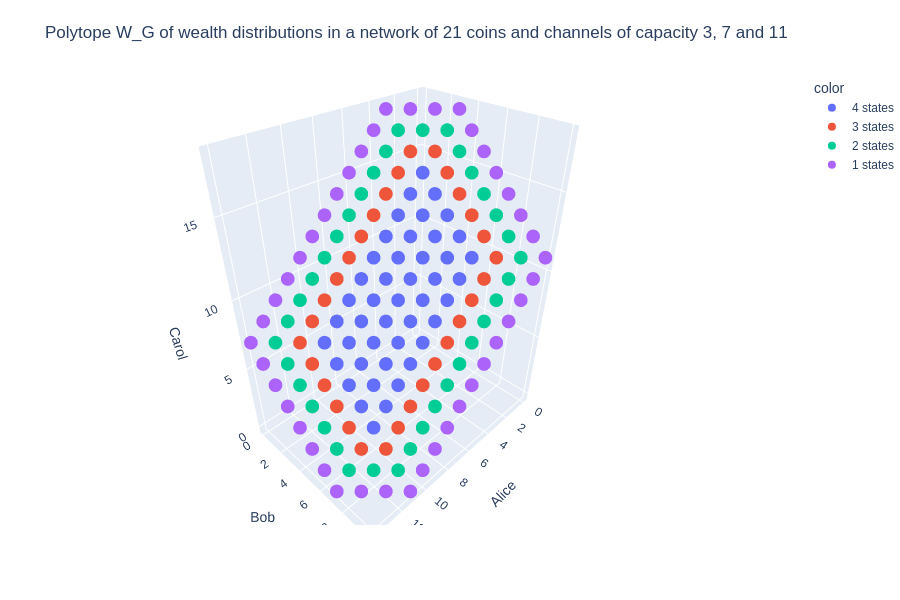}}
\subfigure[]{\includegraphics[width=0.23\textwidth]{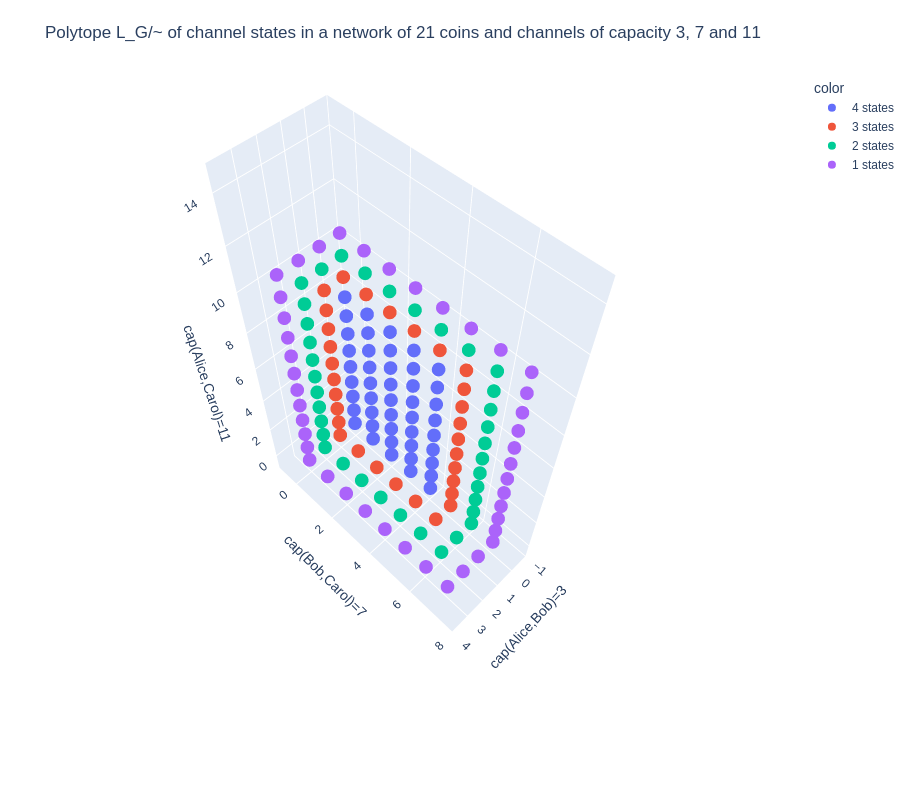}}
\subfigure[]{\includegraphics[width=0.23\textwidth]{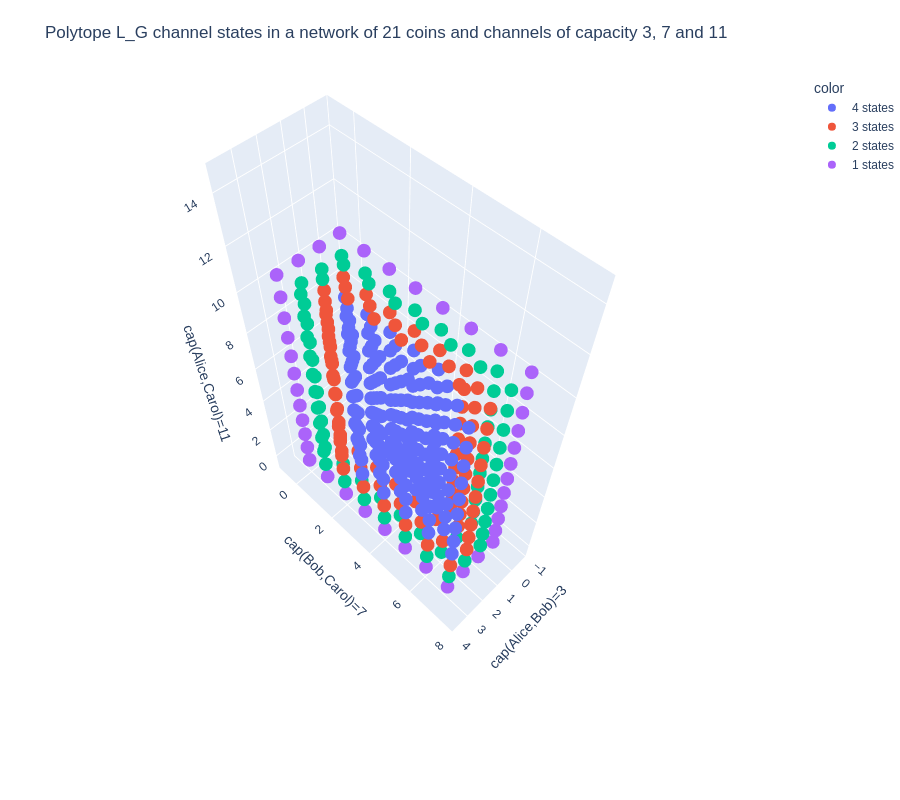}}
\caption{(a) shows the feasible region $W_G$ as in figure \ref{fig:feasibleLNWealthPolytopes} (b) depicts the size of the equivalence classe for each feasible wealth distribution. (c) Space $L_G/\sim_\pi$ of equivalance classes as a subset of the surface of $L_G$ (d) full polytope $L_G$ of all liquidity states}
\label{fig:equivalenceclasses}
\end{figure}

Figure \ref{fig:equivalenceclasses} demonstrates in $3$-dimensional geometry how the $L_G$ in combination with $\pi$ can be seen as a bundle over $W_G$ and how $W_G\cong L_G/\sim_\pi$.
In particular in subfigure (c) one can realize that the feasible region $W_G$ corresponds to just $3$ sides of the hypercube:
$$H_G=\{0,1,2,3\}\times\{0,2,3,4,5,6,7\}\times\{0,\dots,11\}$$
Each side corresponds to the subspace in which one dimension (state) equals $0$.
Meaning the liquidity state which is rebalanced so long until one channel is depleted. This obviously works only if cycles are present and not in trees.

\subsection{Special Case of Spanning Trees}


Let us assume \(G\) has the shape of a tree. In this case,
\[ m = |E| = |V| - 1 = n - 1 \]
We recall that the dimension of \(\sigma(G, \omega)\) was: \(m - n + 1\). Replacing \(m\) with \(n - 1\), we see that the dimension of \(\sigma(G, \omega)\) is:
\[ m - n + 1 = (n - 1) - n + 1 = 0 \]
A zero-dimensional space is just a single point. If \(\omega\) is feasible, this point lies in \(W_G\), and the corresponding liquidity function is unique. Thus, in tree-shaped networks, there is a one-to-one correspondence between the set \(W_G\) and the set of feasible liquidity functions \(L_G\). In particular, we can write:
\[ W_G \cong L_G \]
The careful reader may have noticed that the polytope of feasible liquidity states \(L_G\) in Figure \ref{fig:statePolytopeExample} had the same number of points as the feasible region \(W_G\) in Figure \ref{fig:wealthPolytope}. As we have seen, this is not surprising but to be expected.

This means that for a uniform distribution of feasible wealth distributions in a spanning tree, the liquidity in channels would also be uniformly distributed, preventing depletion. This is remarkable because it has been shown \cite{guidi2019steadystate} that under the assumption of balanced flows, there is a stable state in which the liquidity within most channels is depleted, but there is a spanning tree in which the liquidity is uniformly distributed.

\section{Channel Depletion as a Consequence of rational economic behavior of nodes}
\label{sec:depletion}
So far we have neglected the existence of routing fees to study the possible liquidity states for a feasible wealth distribution given a channel network.
In practice node operators will charge a fee to fulfill forwarding requests.
\begin{definition}
  A function $f:E\times V\longrightarrow \mathbb{N}_0$ is called a fee function. It assignes each channel the ppm that the node operator charges to forward money on that channel. It neglects upfront fees and the base fee.
\end{definition}
At the same time all node implementations will include the minimization of routing fees that need to be paid to their objective cost function when selecting a payment min cost flow (or payment paths).
We see how node operators set their fees and try to allocate liquidity in their channels to maximize their fee potential.\footnote{\url{https://github.com/DerEwige/speedupln.com/blob/main/docs/fee_potential_and_rebalancing.md}}
For the following we neglect upfront fees and the basefee that node operators charge to fulfill a routing request.
\begin{definition}
  The \textbf{fee potential of a node} is defined through:
  \[
  p_v=\sum_{e\in E:v\in e}f(e,v)\cdot\lambda(e,v)
  \]
\end{definition}

Routing nodes try to set fees and move liquidity in a way that maximizes their fee potential.
\begin{definition}
The \textbf{fee potential of the network} is defined similiarly as:

\begin{equation}
p_G=\sum_{e\in E}\sum_{v\in e}f(e,v)\cdot\lambda(e,v)
\end{equation}
\end{definition}
In section \ref{sec:isomorphism} we have seen that that the solution of an integer linear program will help us to decide if wealth distribution is feasible on the network.
In section \ref{sec:rebalancing} we saw that there are often many liquidity states that fulfill the wealth distribution if cycles exist on the network.
Given rational economic behavior of nodes we assume that the most likely liquidity state of a feasible wealth distribution is one that maximizes the fee potential $p_G$ of the network.
The solution to this maximization problem can be found with the same linear program that is used to decide if a wealth distribution is feasible.

When generating random networks with random fee functions we can for a feasible wealth distribution (e.g. the center of the polytope of feasible wealth distributions) compute the liquidity state that maximizes the fee potential.
Doing so we observer that most channels will be depleted:

\begin{figure}[h]
\centering
 \includegraphics[width=0.41\textwidth]{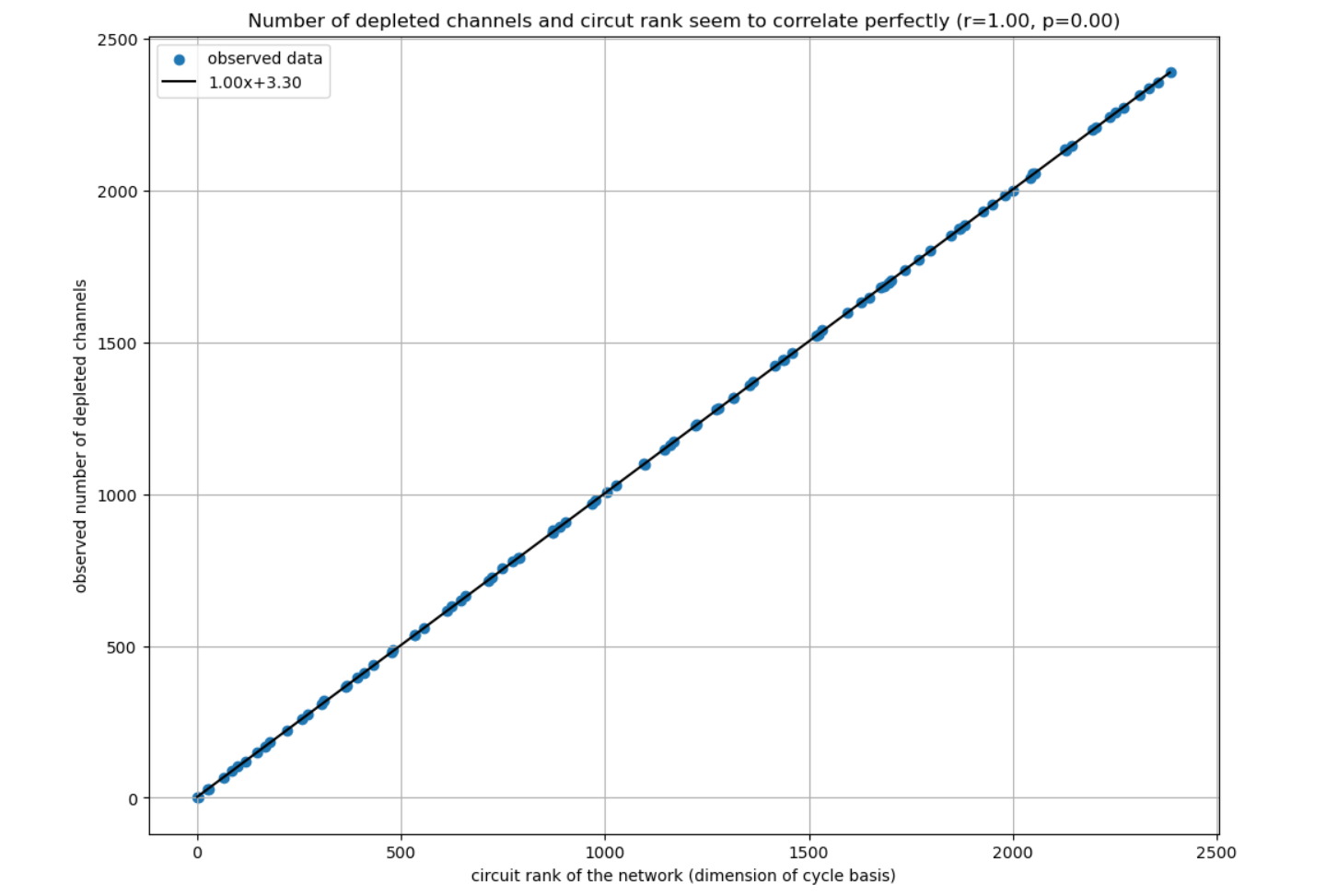}
\caption{The number of depleted channels correlates strongly with the cicruit rank of the network}
\label{fig:circuitrank}
\end{figure}

This diagram emperically underscores the results by \cite{guidi2019steadystate}.
In their research they assumed that the steady state of flows on channels across the network will let channels channels deplete besides a spanning tree.
In the following we will show how the same result can be derrived assuming the above mentioned linear program resulting from rational economic behavior of node operators.

\begin{lemma}\footnote{Before publication of this Lemma and proof Anastasios Sidiropoulos has shared with me the core argument of this lemma in a private conversation when I shared figure \ref{fig:circuitrank} in a mail discussion with him.}
  \label{lem:differenceOfFeesDepletion}
  Assume $\lambda\in L_G$ is a feasible liquidity state that satisfies a wealth distribution $\omega\in W_G$ and maximizes the fee potential $p_G$.
  Let $C=\{v_0,\dots,v_n=v_0\}$ be a non depleted cycle of channels on the network. This means $\forall i\in\{0,\dots,n-1\}$:
  \begin{enumerate}
  \item $(v_i,v_{i+1})\in E $.
  \item $\lambda((v_i,v_{i+1}),v_i)\neq 0 \neq \lambda((v_{i},v_{i+1}),v_{i+1})$
  \end{enumerate}
  Then:
  \[
  \sum_{i=0}^{n} ( f((v_i, v_{i+1}),v_i) - f((v_i, v_{i+1}),v_{i+1}) = 0
  \]
  This means the objective function does not change if money is rebalanced along the cycle.
\end{lemma}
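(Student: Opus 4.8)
The plan is a standard exchange (first-order optimality) argument: exhibit a feasibility-preserving perturbation of $\lambda$ supported on the cycle $C$, compute the induced change in the fee potential $p_G$, and then use maximality of $p_G$ at $\lambda$ to sandwich the cyclic fee sum between $0$ and $0$.

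First I would construct the perturbation. Write $e_i:=(v_i,v_{i+1})$ for the $i$-th channel of $C$ and define $\lambda'$ by decreasing $\lambda(e_i,v_i)$ by $1$ and increasing $\lambda(e_i,v_{i+1})$ by $1$ for every channel of $C$, leaving all other channels untouched; this is exactly routing one unit of flow once around $C$ in the forward direction, i.e.\ a strict circulation in the sense of Section~\ref{sec:rebalancing}. I would then check three things: (i) conservation of liquidity on each $e_i$ is preserved because the two changes cancel; (ii) the new values stay in $\{0,\dots,c_{e_i}\}$, using hypothesis~(2) that $\lambda(e_i,v_i)\ge 1$ and hence $\lambda(e_i,v_{i+1})=c_{e_i}-\lambda(e_i,v_i)\le c_{e_i}-1$; and (iii) the wealth distribution is unchanged, since every node $v_i$ on $C$ is the tail of one cycle-channel and the head of another, so the $+1$ and $-1$ contributions to $\omega(v_i)$ cancel (and nodes off $C$ are untouched). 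Hence $\lambda'\in L_G$ with $\pi(\lambda')=\pi(\lambda)=\omega$. Symmetrically, routing the unit the other way around $C$ yields $\lambda''\in L_G$ with $\pi(\lambda'')=\omega$, now using hypothesis~(2) for the other endpoint, $\lambda(e_i,v_{i+1})\ge 1$.

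Next I would compute the change in fee potential. Only the terms of $p_G=\sum_e\sum_{v\in e}f(e,v)\lambda(e,v)$ attached to cycle channels move, so
\[
p_G(\lambda')-p_G(\lambda)\;=\;\sum_{i}\bigl(-f(e_i,v_i)+f(e_i,v_{i+1})\bigr)\;=\;-\sum_{i}\bigl(f(e_i,v_i)-f(e_i,v_{i+1})\bigr),
\]
and likewise $p_G(\lambda'')-p_G(\lambda)=+\sum_{i}\bigl(f(e_i,v_i)-f(e_i,v_{i+1})\bigr)$, where $i$ ranges over the channels of the cycle. Since $\lambda$ maximizes $p_G$ over all feasible states with wealth distribution $\omega$ and both $\lambda',\lambda''$ are such states, both differences are $\le 0$, which forces $\sum_{i}\bigl(f(e_i,v_i)-f(e_i,v_{i+1})\bigr)=0$, i.e.\ the claim.

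The argument has no genuinely hard step; the only care needed is bookkeeping. The perturbation must be integral, so I use a unit of flow rather than an infinitesimal, which is legitimate precisely because the non-depletion hypothesis provides slack of at least $1$ on every cycle edge in both directions. One must also verify that a node lying on the cycle contributes a matched $+1/-1$ pair to its wealth: immediate for a simple cycle, and still true if $C$ revisits a vertex provided incidences are counted with multiplicity. If one prefers, the closed walk may first be reduced to a simple cycle; either way the conclusion is unaffected.
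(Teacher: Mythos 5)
Your proof is correct and follows essentially the same exchange argument as the paper: perturb the liquidity state by a unit circulation around the non-depleted cycle in both directions, observe that the fee potential changes by $\mp\sum_i(f(e_i,v_i)-f(e_i,v_{i+1}))$, and use maximality to force that sum to zero. Your write-up is in fact somewhat more careful than the paper's, since you explicitly verify that the perturbed states remain in $L_G$ and preserve the wealth distribution, which the paper only asserts.
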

\begin{proof}
  The contribution of the cycle to the objective function is given through:
    \begin{align*}
      p_C = & \sum_{i=0}^{n} f((v_i, v_{i+1}),v_i) \cdot \lambda((v_i,v_{i+1}),v_i)  \\
      + & \sum_{i=0}^{n} f((v_{i+1}, v_{i}),v_{i+1}) \cdot  \lambda((v_i,v_{i+1}),v_{i+1})
    \end{align*}
    Since all channels are non depleted we can send at least $x$ units with $|x|\geq 1$ along the cycle.
    This does not change the wealth distribution.
    However the new objective function computes to:
    \begin{align*}
      p_C' = & \sum_{i=0}^{n} f((v_i, v_{i+1}),v_i) \cdot (\lambda((v_i,v_{i+1}),v_i)-x)  \\
      + & \sum_{i=0}^{n} f((v_{i+1}, v_{i}),v_{i+1}) \cdot  (\lambda((v_i,v_{i+1}),v_{i+1})+x)
    \end{align*}
    Thus the difference $\delta = p_C - p_C'$ computes to:
    \begin{align*}
      \delta = & \sum_{i=0}^{n} f((v_i, v_{i+1}),v_i) \cdot x  \\
      + & \sum_{i=0}^{n} f((v_{i+1}, v_{i}),v_{i+1}) \cdot (-x)
    \end{align*}
    This is the same as:
    \begin{align*}
      \delta = x \cdot \underbrace{\sum_{i=0}^{n} f((v_i, v_{i+1}),v_i) - f((v_{i+1}, v_{i}),v_{i+1})}_{const}
    \end{align*}
    We note that $x$ can be positive or negative as none of the channels were depleted.
    Furthermore the sum of differences of fees is constant.
    Thus the change $\delta$ in the network's fee potential $p_G$ after rebalancing has to be $0$ otherwise $p_G$ was not maximal.
    Since $x$ was non zero we conclude that this is only possible if the constant sum of differences of fees is $0$ which concludes the proof
\end{proof}

The careful reader will note that there is an alternative condition such that the fee potential could be maximal but the sum of differences in fees along the circle is non $0$.
This happens exactly if the circle contains depleted channels and it is not possible to further rebalance in this direction.

\begin{corollary}
  Let the fee potential $p_G$ be maximal and $\delta = x \cdot \underbrace{\sum_{i=0}^{n} f((v_i, v_{i+1}),v_i) - f((v_{i+1}, v_{i}),v_{i+1})}_{const}$ be non zero for some $x\neq 0$.
  Then some channel along the cycle must be depleted e.g.:
  \begin{itemize}
  \item $\lambda((v_i,v_{i+1},v_i)) = 0$ or
  \item $\lambda((v_i,v_{i+1}),v_{i+1})=0$
  \end{itemize}
\end{corollary}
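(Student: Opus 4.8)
The plan is to prove the corollary simply as the contrapositive of Lemma~\ref{lem:differenceOfFeesDepletion}, which makes precise the remark immediately preceding it. First I would rewrite the hypothesis: since $\delta = x\cdot\big(\sum_{i=0}^{n} f((v_i,v_{i+1}),v_i) - f((v_{i+1},v_i),v_{i+1})\big)$ is linear in $x$, the statement ``$\delta\neq 0$ for some $x\neq 0$'' holds if and only if the constant factor $\sum_{i=0}^{n}\big(f((v_i,v_{i+1}),v_i) - f((v_{i+1},v_i),v_{i+1})\big)$ is itself nonzero. So the hypothesis of the corollary amounts to: $p_G$ is maximal and the signed sum of directional fees around the cycle $C$ does not vanish.

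Next I would argue by contradiction. Suppose no channel of $C$ is depleted, i.e.\ for every $i\in\{0,\dots,n-1\}$ both $\lambda((v_i,v_{i+1}),v_i)\neq 0$ and $\lambda((v_i,v_{i+1}),v_{i+1})\neq 0$. Together with $(v_i,v_{i+1})\in E$, these are exactly conditions (1) and (2) of Lemma~\ref{lem:differenceOfFeesDepletion}, so $C$ is a non-depleted cycle in its sense; and $p_G$ is assumed maximal. The lemma then gives $\sum_{i=0}^{n}\big(f((v_i,v_{i+1}),v_i) - f((v_{i+1},v_i),v_{i+1})\big)=0$, contradicting the reformulated hypothesis. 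Hence at least one channel along $C$ is depleted: there exists $i$ with $\lambda((v_i,v_{i+1}),v_i)=0$ or $\lambda((v_i,v_{i+1}),v_{i+1})=0$, which is the claimed conclusion.

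There is no deep obstacle here; the only thing to be careful about is the bookkeeping. I need to confirm that the logical negation of ``$C$ is a non-depleted cycle'' in the sense used by Lemma~\ref{lem:differenceOfFeesDepletion} is literally the disjunction appearing in the corollary's conclusion, so that the contrapositive lands exactly where claimed. I would also note explicitly that the lemma only rebalances by $|x|\ge 1$ and only requires each of the two liquidity values on each channel of $C$ to be at least $1$, so ``non-depleted'' (rather than some stronger capacity condition) is genuinely the right hypothesis to negate. With these remarks in place the corollary follows immediately.
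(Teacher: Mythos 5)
Your proof is correct and is exactly the argument the paper intends: the corollary is the contrapositive of Lemma~\ref{lem:differenceOfFeesDepletion}, and the paper itself offers no separate proof beyond the preceding remark. Your added care in checking that ``$\delta\neq 0$ for some $x\neq 0$'' is equivalent to the constant sum being nonzero, and that the negation of ``non-depleted cycle'' is precisely the stated disjunction, is sound and fills in the bookkeeping the paper leaves implicit.
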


Of course in an arbitrary network the condition from the lemma and corollary are knife edge conditions and unlikely to take place.
Thus depletion of channels is expected to happen and the norm rather than the exception unless for a spanning forest whose exact location may vary with the current wealth distribution.

\section{Mitigation strategies for Channel Depletion}
As we have seen in the last section depletion of channels seems to be an inherent phenomenon given the design of the protocol and resulting incentives.
In this chapter we will utilize the previous results to describe $3$ different approaches to mitigate depletion by chaning the incentives through an upgrade of the protocol.
We sepecifically do not recommend any of those approaches as they yield consequences beyond the scope of this paper. We just present working solutions and explainations why they would work if adopted.

\subsection{Introducing symmetric fees on channels}
The technically easiest approach to mitigate depletion is to introduce and force symmetric routing fees for channels.
The previous section demonstrated that even within a circular economy and net 0 flow on all peers most channels are expected to deplete.
This is due to the fact that payments from a node $v$ to another node $u$ may take a different flow than the reverse payment from node $u$ to node $v$.
One very obvious way to mitigate this problem is by looking at lemma \ref{lem:differenceOfFeesDepletion}.
We recognized that depletion takes only place if the sum of fee differences along the channels in a cycle is not $0$.
Were the Lighting Network community to upgrade the protocol and force node operators to publish a single fee rate for a channel which would be charged for forwarding an HTLC in both directions the condition of the lemma would hold and there would not be any depletion presure emerging from the economically selfish behavior of sending nodes.
While technically easily to change the protocol in this way we assume that node operators would not be willing to accept such a change as they would have to agree with their channel peer on a fee that would be charged in both directions. 

\subsection{Convex (tiered) fees: discrete model and interior--boundary dynamics}
\label{subsec:convex-discrete}

Operators typically raise the forwarding fee on a channel $e=\{u,v\}$ as the local liquidity at an endpoint falls ($\lambda(e,u)\!\approx\!0$) and lower it as the channel fills ($\lambda(e,u)\!\approx\!c_e$). This scarcity pricing aims to throttle demand and avoid boundary hits. A natural abstraction is a tiered (discretely convex) schedule that becomes more expensive as remote liquidity increases (equivalently, as local liquidity decreases). In practice, such schedules are hard to deploy with source routing, because fees would need to change with state; fee rate cards~\cite{Neigut2022} emulate convexity with piecewise-linear tiers but do not solve the sender's lack of state visibility.

For each channel $e=(u,v)$ the fees are charged by the sending endpoint of a hop. Thus when using scarcity pricing the tier applied on hop $u\!\to\!v$ is a nonincreasing function $p_{e,u}$ of the sender’s local liquidity $\lambda(e,u)$\footnote{we could see it as increasing in the remote liquidity which is the total net liquidity that has been sent. In this sense the function can be seen as convex.}:
\[
p_{e,u}:\{0,1,\dots,c_e\}\to\mathbb{Z}_{\ge 0}
\]

Along a simple cycle $C=(v_0,\dots,v_\ell=v_0)$ with canonical orientation $v_i\!\to\!v_{i+1}$ and edges $e_i=(v_i,v_{i+1})$,
the strict circulation $\lambda\mapsto\lambda_x$ updates local endpoint liquidity by
\[
\lambda_x(e_i,v_i)=\lambda(e_i,v_i)-x
\]
and the remote liquidity of each channel by
\[
\lambda_x(e_i,v_{i+1})=\lambda(e_i,v_{i+1})+x
\]
with
\[x\in[x_{\min},x_{\max}]=\big[-\min_i \lambda(e_i,v_{i+1}),\ \min_i \lambda(e_i,v_i)\big]\cap\mathbb{Z}\]
Using the discrete potential
\[\Phi(\lambda)=\sum_{e=(u,v)}\sum_{t=1}^{\lambda(e,u)}p_{e,u}(t)+\sum_{t=1}^{\lambda(e,v)}p_{e,v}(t)
\]
the net one–unit gain in the cycle direction is
  \begin{align*}
\Delta_C(x)\ & :=\ \Phi(\lambda_{x+1})-\Phi(\lambda_x) \\
& =\ \sum_{i=0}^{\ell-1}\Big[p_{e_i,v_{i+1}}\!\big(\lambda_x(e_i,v_{i+1})+1\big)\;-\;p_{e_i,v_i}\!\big(\lambda_x(e_i,v_i)\big)\Big]
\label{eq:deltaC-orient}
\end{align*}
This is next-unit price at each receiver (its local liquidity increased by $1$) minus refunded price at each sender. Mor importantly $\Delta_C(x)$ is monoton as we will see in the following lemma: 

\begin{lemma}
If every $p_{e,u}$ is nonincreasing in local liquidity $\lambda(e,u)$, then $\Delta_C(x)$ is nonincreasing in $x$ on $[x_{\min},x_{\max}-1]\cap\mathbb{Z}$.
\end{lemma}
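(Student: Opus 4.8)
The plan is to prove the claim term-by-term: substitute the circulation update into the explicit formula for $\Delta_C$, and show that each of the $\ell$ summands is already nonincreasing in $x$, so that their sum is as well.

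First I would insert $\lambda_x(e_i,v_i)=\lambda(e_i,v_i)-x$ and $\lambda_x(e_i,v_{i+1})=\lambda(e_i,v_{i+1})+x$ into the displayed expression for $\Delta_C(x)$ to get
\[
\Delta_C(x)=\sum_{i=0}^{\ell-1}\Bigl[p_{e_i,v_{i+1}}\bigl(\lambda(e_i,v_{i+1})+x+1\bigr)-p_{e_i,v_i}\bigl(\lambda(e_i,v_i)-x\bigr)\Bigr],
\]
which is well-defined precisely for $x$ with $x,x+1\in[x_{\min},x_{\max}]$, since the range $[x_{\min},x_{\max}]$ is exactly what keeps every argument inside the admissible domain $\{0,\dots,c_{e_i}\}$ of $p_{e_i,\cdot}$. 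Then for fixed $i$ I let $g_i(x)$ denote the $i$-th bracket and compare $g_i(x)$ with $g_i(x+1)$ on $[x_{\min},x_{\max}-1]\cap\mathbb{Z}$. In the first term the argument $\lambda(e_i,v_{i+1})+x+1$ increases by one when $x\mapsto x+1$, so since $p_{e_i,v_{i+1}}$ is nonincreasing that term does not increase; in the second term the argument $\lambda(e_i,v_i)-x$ decreases by one, so $p_{e_i,v_i}$ evaluated there does not decrease, and since it appears with a minus sign its contribution does not increase either. Hence $g_i(x+1)\le g_i(x)$ for each $i$, and summing over $i=0,\dots,\ell-1$ yields $\Delta_C(x+1)\le\Delta_C(x)$, which is the asserted monotonicity. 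Equivalently one may observe that $x\mapsto\sum_{t=1}^{\lambda(e_i,v_i)-x}p_{e_i,v_i}(t)$ and $x\mapsto\sum_{t=1}^{\lambda(e_i,v_{i+1})+x}p_{e_i,v_{i+1}}(t)$ are each discretely concave when $p$ is nonincreasing, so $\Phi(\lambda_x)$ is concave along the cycle direction and its forward difference $\Delta_C$ is therefore nonincreasing.

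The step needing attention is bookkeeping rather than a genuine obstacle: because each hop's tier is levied at the \emph{sending} endpoint, along the canonical orientation the relevant price on $e_i$ is $p_{e_i,v_i}$ for the sender's refund and $p_{e_i,v_{i+1}}$ for the charge at the receiver's endpoint, whose local liquidity rises by one; one must also track that it is $\lambda(e_i,v_i)$ that decreases and $\lambda(e_i,v_{i+1})$ that increases under the cycle flow. Getting these signs and which endpoint moves which way straight is the only place a slip could occur, but once the substitution above is written out the conclusion follows immediately by applying "$p$ nonincreasing" separately to the two halves of each summand, with no further computation.
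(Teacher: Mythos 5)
Your proof is correct and follows essentially the same route as the paper's: a term-by-term decomposition of $\Delta_C$ into per-edge contributions, applying the nonincreasing property of $p_{e_i,v_{i+1}}$ to the receiver argument (which rises by one) and of $p_{e_i,v_i}$ to the sender argument (which falls by one), then summing. The paper merely packages the substitution via the shorthand $R_i(x)$ and $S_i(x)$; the bookkeeping and the conclusion are identical.
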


\begin{proof}
Let $R_i(x)=\lambda_x(e_i,v_{i+1})$ and $S_i(x)=\lambda_x(e_i,v_i)$.
We define the per-edge contribution
\[
\Delta_i(x)\;:=\;p_{e_i,v_{i+1}}\big(R_i(x)+1\big)\;-\;p_{e_i,v_i}\big(S_i(x)\big)\]
so that
\[
\Delta_C(x)=\sum_{i=0}^{\ell-1} \Delta_i(x).
\]
When we increase the cycle push by one unit ($x\mapsto x+1$), the receiver’s local liquidity increases and the sender’s local liquidity decreases:
\[
R_i(x+1)=R_i(x)+1,\qquad S_i(x+1)=S_i(x)-1.
\]
Hence the one-step change of the $i$-th contribution is
\begin{align*}
& \Delta_i(x+1)-\Delta_i(x) \\
&=\Big[p_{e_i,v_{i+1}}\big(R_i(x+1)+1\big)-p_{e_i,v_{i+1}}\big(R_i(x)+1\big)\Big]\\
&\quad-\Big[p_{e_i,v_i}\big(S_i(x+1)\big)-p_{e_i,v_i}\big(S_i(x)\big)\Big]\\
  &=\underbrace{p_{e_i,v_{i+1}}\big(R_i(x)+2\big)-p_{e_i,v_{i+1}}\big(R_i(x)+1\big)}_{\text{receiver term} \leq 0}\\
& -
\underbrace{\big(p_{e_i,v_i}(S_i(x)-1)-p_{e_i,v_i}(S_i(x))\big)}_{\text{sender term}\geq 0}.
\end{align*}
Since we used scarcity pricing $p$ is nonincreasing in local liquidity. This means for any $x$ we have $p(x+1)\le p(x)$ and $p(x-1)\ge p(x)$.

Therefore
\[
\underbrace{p_{e_i,v_{i+1}}(R_i(x)+2)-p_{e_i,v_{i+1}}(R_i(x)+1)}_{\le 0}\ \le\ 0
\]
and
\[
\underbrace{p_{e_i,v_i}(S_i(x)-1)-p_{e_i,v_i}(S_i(x))}_{\ge 0}\ \ge\ 0,
\]
and thus
\[
\Delta_i(x+1)-\Delta_i(x)\ \le\ 0-0\ =\ 0.
\]
Summing over $i$ gives $\Delta_C(x+1)-\Delta_C(x)=\sum_i\big(\Delta_i(x+1)-\Delta_i(x)\big)\le 0$.
\end{proof}

Hence $\Delta_C$ changes its sign at most once across its feasible integers which can be used to proof the following:

\begin{lemma}
\label{lem:cycle-optimality}
Fix a simple cycle $C$ and the feasible integer interval $[x_{\min},x_{\max}]$.
If each tier $p_{e,u}$ is nonincreasing in local liquidity, then $\Delta_C(x)$ is nonincreasing in $x$.
Consequently:
\begin{enumerate}
\item (\emph{Interior balance}) If there exists $x^\star\in[x_{\min},x_{\max}-1]$ with
\[
\Delta_C(x^\star)\ \ge\ 0\ \ge\ \Delta_C(x^\star\!+\!1),
\]
then every minimizer of $x\mapsto \Phi(\lambda_x)$ on $[x_{\min},x_{\max}]$ lies in $\{x^\star,x^\star\!+\!1\}\subset (x_{\min},x_{\max})$.
\item (\emph{Boundary depletion}) If no such sign change exists, then the minimizer is at a boundary:
$\arg\min \Phi(\lambda_x)=\{x_{\min}\}$ when $\Delta_C>0$ everywhere, and
$\arg\min \Phi(\lambda_x)=\{x_{\max}\}$ when $\Delta_C<0$ everywhere.
\end{enumerate}
\end{lemma}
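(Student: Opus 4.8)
The plan is to collapse the statement to a one–dimensional fact about the integer function $g(x):=\Phi(\lambda_x)$ on $[x_{\min},x_{\max}]\cap\mathbb{Z}$ and then read off the two cases from the sign pattern of its forward difference. The key observation is that, by construction, $\Delta_C(x)=\Phi(\lambda_{x+1})-\Phi(\lambda_x)=g(x+1)-g(x)$ is exactly the forward difference of $g$, so the preceding lemma (which says $\Delta_C$ is nonincreasing on $[x_{\min},x_{\max}-1]\cap\mathbb{Z}$) is precisely the statement that the second forward difference of $g$ is $\le 0$; equivalently $g$ has nonincreasing increments and is therefore discretely unimodal. No further structural input is needed; the rest is telescoping and a case split.

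First I would record the sign dichotomy forced by monotonicity of $\Delta_C$: as $x$ runs over its domain, $\Delta_C(x)$ passes through a (possibly empty) initial stretch of nonnegative values followed by a (possibly empty) terminal stretch of nonpositive values, with no return. Hence exactly one of the following holds: (a) there is an $x^\star$ with $\Delta_C(x^\star)\ge 0\ge\Delta_C(x^\star+1)$ (the sign–change case); (b) $\Delta_C>0$ throughout its domain; (c) $\Delta_C<0$ throughout. The fully degenerate case $\Delta_C\equiv 0$ I would treat as a sub-case of (a).

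Then I would translate each case into the behaviour of $g$ by telescoping. In case (a): for $x\le x^\star$ one has $\Delta_C(x)\ge\Delta_C(x^\star)\ge 0$, so $g$ is nondecreasing on $[x_{\min},x^\star+1]$; for $x\ge x^\star+1$ one has $\Delta_C(x)\le\Delta_C(x^\star+1)\le 0$, so $g$ is nonincreasing on $[x^\star+1,x_{\max}]$; consequently the push $x$ selected by the fee–potential objective along the cycle lies in $\{x^\star,x^\star+1\}$, and since $x^\star\ge x_{\min}$ and $x^\star+1\le x_{\max}$ this pair sits inside the feasible interval, i.e. away from the depleted boundary, which is statement (i). In case (b), telescoping $\Delta_C>0$ shows $g$ is strictly increasing, so its minimizer over $[x_{\min},x_{\max}]$ is the single endpoint $x_{\min}$; in case (c), $g$ is strictly decreasing, so its minimizer is $x_{\max}$; this is statement (ii). Finally, pulling $x=x_{\min}$ (resp. $x=x_{\max}$) back through the circulation $\lambda\mapsto\lambda_x$ sets $\lambda_x(e_i,v_{i+1})=0$ for the index $i$ attaining $\min_i\lambda(e_i,v_{i+1})$ (resp. $\lambda_x(e_i,v_i)=0$ for the $i$ attaining $\min_i\lambda(e_i,v_i)$), so a channel on the cycle is depleted, which is the content of the boundary case.

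The telescoping estimates and the endpoint bookkeeping are routine; the parts that need care are (1) fixing the convention that ties the sender–driven routing/rebalancing dynamics to an optimization of $\Phi$ — i.e. which extremum of $g$ is actually realised, hence whether case (a) genuinely yields an \emph{interior} equilibrium and cases (b)/(c) genuinely yield boundary depletion — and (2) the degenerate sub-cases in which $\Delta_C$ vanishes at the transition (or identically), where a whole sub-interval of pushes is optimal and one must still check it avoids the boundary to keep (i) exact, and where the claimed strict inclusion $\{x^\star,x^\star+1\}\subset(x_{\min},x_{\max})$ should be relaxed to a weak one when $x^\star=x_{\min}$ or $x^\star+1=x_{\max}$. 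Pinning down (1) and (2) cleanly is the main obstacle; everything else is mechanical given the monotone–difference property.
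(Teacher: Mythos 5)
Your route is the same as the paper's: read $\Delta_C(x)=\Phi(\lambda_{x+1})-\Phi(\lambda_x)$ as the forward difference of $g(x):=\Phi(\lambda_x)$, invoke the preceding lemma for monotone differences, and telescope through the three-way sign case split. The telescoping is correct, and your two caveats are genuine refinements the paper omits: the tie/degenerate sub-cases, and the fact that the claimed strict inclusion $\{x^\star,x^\star+1\}\subset(x_{\min},x_{\max})$ must be weakened when $x^\star=x_{\min}$ or $x^\star+1=x_{\max}$.

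However, what you park as ``obstacle (1)'' is not a convention to be settled later; it is the crux, and neither your argument nor the paper's resolves it. Since $\Delta_C$ is nonincreasing, $g$ is discrete \emph{concave} (the paper's proof asserts ``discrete convex,'' which is the wrong sign). Your own case-(a) telescoping shows $g$ nondecreasing up to $x^\star+1$ and nonincreasing thereafter, so $\{x^\star,x^\star+1\}$ brackets the \emph{maximizers} of $g$, while its \emph{minimizers} sit at the endpoints $x_{\min}$ or $x_{\max}$. Hence part (i) as literally stated --- about minimizers --- does not follow from this route and is generically false; your sentence ``the push selected by the fee-potential objective lies in $\{x^\star,x^\star+1\}$'' silently switches from minimizer to maximizer at exactly this point. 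The lemma is only coherent if ``minimizer'' is read as ``maximizer'' throughout (in which case the endpoints in part (ii) must also be swapped: $\Delta_C>0$ everywhere drives the maximizer to $x_{\max}$, not $x_{\min}$), or equivalently if $\Phi$ is negated or the tiers assumed nondecreasing so that $g$ is genuinely discrete convex. Until that orientation is pinned down, the proof of (i) has a real gap --- one that the paper's own proof shares by mislabeling a concave function as convex.
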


\begin{proof}
$\Delta_C$ nonincreasing means the discrete differences of $\Phi(\lambda_x)$ decrease with $x$, i.e.\ $\Phi(\lambda_x)$ is discrete convex on $[x_{\min},x_{\max}]$.
For any discrete convex function, the forward difference crossing through $0$ brackets all minimizers; if there is no crossing, the minimum occurs at an endpoint in the direction of the sign.
\end{proof}

We can interpret the sign change of $\Delta_C$ is a stable attactor: When it occurs, the cycle admits an interior operating equilibrium with no direction depleted. If there is no sign change that attractive equilibrium is outside the feasibility region and the channel will deplete to the boundary of the feasibility region i.e. some directed local liquidity hits zero. With strong enough convexity depletion should not happen. In the current case of a linear fee function the attraction point is at infinity and always outside the feasibility region. Thus depletion under linear fees is the norm.

Tiered (discretely convex) schedules can act as flow control: Opposing price tiers meet in the interior on many cycles, so opportunistic cycle pushes stall before any direction hits zero. However, utilization-responsive tiers conflict with source routing: The path cost depends on hidden, instantaneous $\lambda(e,\cdot)$ at every hop. Without state disclosure or probing, the sender cannot price paths correctly. Thus, while convex tiers can stabilize liquidity and mitigate depletion, broad deployment likely requires protocol support beyond static gossip of fee rates.

\subsubsection{Experiment: Depletion with linear vs.\ quadratic fees}
\label{subsubsec:exp-convex-vs-linear}

To demonstrate the effect we implemented a small synthetic network with fixed capacities, balanced initialization, and a net-zero (circular economy) demand model with unit-sized payments and selfish, source-based path selection. Fees were evaluated on disclosed schedules; attempts succeeded iff all liquidity along the chosen path were sufficient; successful payments updated endpoint liquidity; no explicit rebalancing was performed during a run.

Per endpoint $(e,v)$ we compared two different fee schedules.
\begin{itemize}
\item \textbf{Linear (today’s proportional fee intuition):}
Define a total fee on an edge as
\(F(x) = ppm\cdot x\)
Sending one additional sat on an a channel will always cost you \emph{ppm}.

\item \textbf{Quadratic (ppm as coefficient of a quadratic total-fee):}
Define a total fee on a channel $e=(u,v)$ edge as 
\(F(x) = ppm\cdot \frac{x^2}{c_e}\)
In this case the routing becomes more expensive for every sat that is routed on the channel. In particular the cost depends on $\lambda$. To see this we compute\begin{equation}
\begin{aligned}
\Delta F & = F(\lambda(e,v) + 1) - F(\lambda(e,v)) \\
 & = \frac{ppm}{c_e}\cdot(2\lambda(e,v)+1) \approx \frac{ppm}{c_e}2\cdot\lambda(e,v)
\end{aligned}
\end{equation}
In particular if $\lambda(e,v)=c_e/2$ we have $\Delta F = ppm$ which is the same unit cost as the linear fee schedule. If more liquidity is on $u$'s side of the channel the cost to rout a satoshi is cheaper than \emph{ppm}, otherwise it is  more expensive. 
This shows why we devide the feerate by the total capacity of the channel. We wanted to stay comparable with the \emph{ppm} values of today's proportinal \emph{ppm} interpretation.
\end{itemize}

\begin{figure}[h]
\centering
\includegraphics[width=0.5\textwidth]{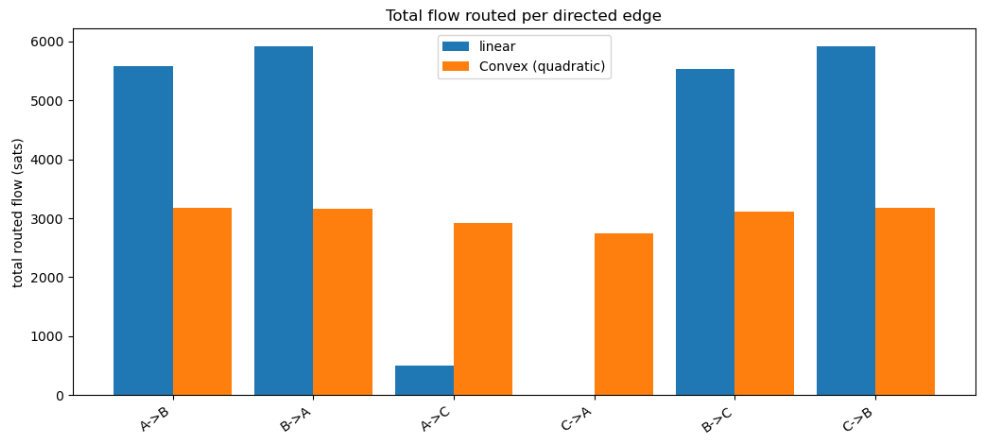}
\caption{We can see that with a convex fee schedule all channels attract a similar amount of flow while some channels do not attract flow in the linear fee setting.}
  \label{fig:convexFlowControl}
\end{figure}

We crated an artificial sample network consisting of 3 nodes that formed a single cycle to demonstrate the effect of a convex fee schedule. In figure \ref{fig:convexFlowControl} we can see nicely that all channel have roughly the same amount of flow if a convex fee schedule is used while some channels attract no flow if a linear fee schedule is being used. This emperically confirms that flow controll via perfectly convex fee schedules is possible. 

\begin{figure}[h]
\centering
\includegraphics[width=0.5\textwidth]{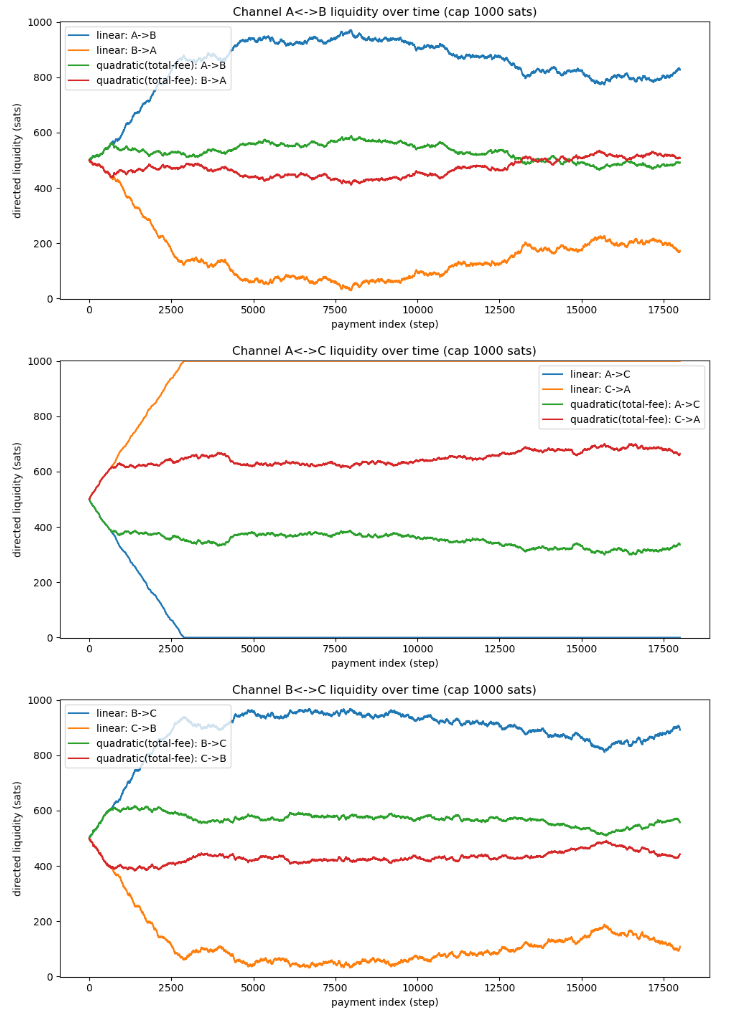}
\caption{On the time series we can see how the channel's liquidity depletes witha linear fee schedule but stays relatively balanced in the convex fee schedule.}
  \label{fig:convexTimeSeries}
\end{figure}

\begin{figure}[h]
\centering
\includegraphics[width=0.5\textwidth]{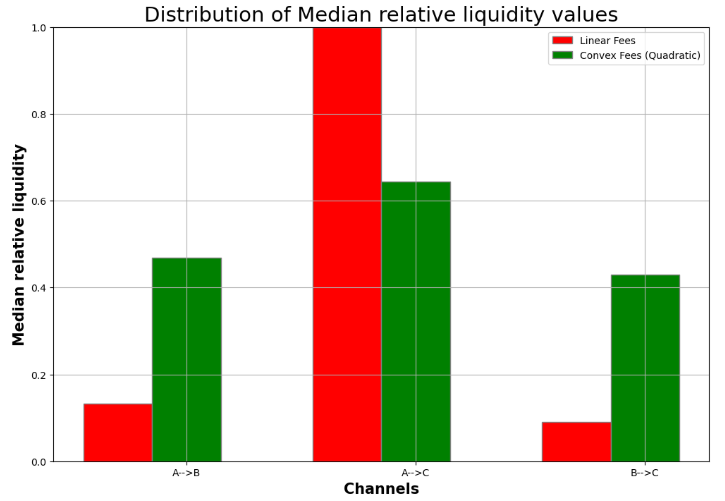}
\caption{The median relative liquidity summarizes the time series and shows how convex fee scheduls lead to more balanced liquidity on average.}
\label{fig:convexMedianLiquidity}
\end{figure}

In Figure \ref{fig:convexTimeSeries} we can see that the steady state of the linear fee protocol contains depleted channels. In contrast the steady state under a convex fee regime is within the interior of channels. Thus sufficient liquidity for feasibile payment requests is always  available. The time seriese are well summarized when looking at the median values from figure \ref{fig:convexMedianLiquidity}

\begin{figure}[h]
\centering
\includegraphics[width=0.5\textwidth]{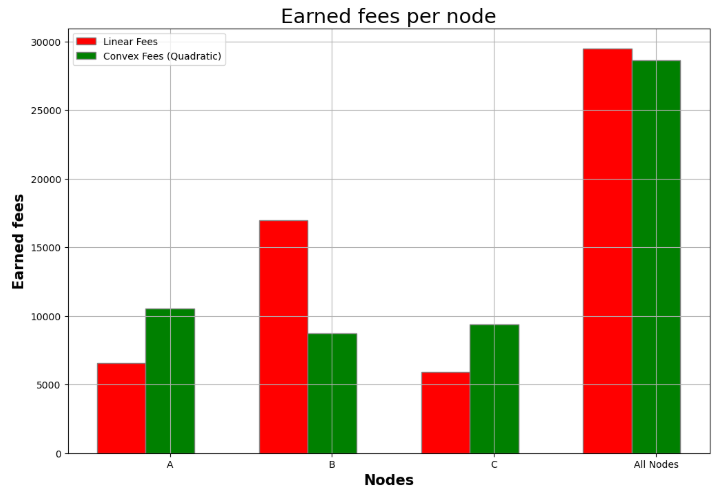}
\caption{Total network fees hardly change when going from linear to convex fees. However the earnings per node can change substatially. With a network wide convex fee schedule the earnings per node seem to equalize}
  \label{fig:convexFees}
\end{figure}

Finally in figure \ref{fig:convexFees} Despite the stark difference in depletion, aggregate routing fees over not change substantially across the two schedules. However it did change substentially for single nodes. Intuitively, convexity reshapes where flow runs (away from boundaries) rather than how much fee per unit is collected on average.

These observations support the view that convex, tiered fees act as flow control: They create interior stall points on cycles, preventing channels from hitting the boundary under persistent directional demand. However, fully utilization-responsive tiers complicate source routing: The effective end-to-end price depends on instantaneous, hidden liquidity $\lambda(e,\cdot)$ at each hop. Without additional mechanisms (coarse tier announcements with hysteresis, quote-based routing, or limited state disclosure), senders cannot price paths accurately. Exploring such protocol options lies beyond the scope of this experiment, but the qualitative stabilization result is robust across the small-network scenarios we tested.

\subsection{Optimal collaborative channel replenishment}
We have seen that for a given wealth distribution $\omega$ there may be several feasible liquidity distributions in the fiber $\pi^{-1}(\{\omega\})$. In a later chapter we will see that the mainly linear structure of the fee function and the economic rational behavior of sending nodes will lead to channel delpetion. This a burden for the network as feasible payments may take time and several attempts to be delivered which effectively reduces reliability and time to complete payments.

It would be desireable if channels could reprovide liquidity without doing expensive on chain transactions as those need to be saved to adress the infeasible payments.
Node operators often discuss the idea of circular rebalancings. However given the fact that the number of strict circulations are equivalent to the number of equivalent liquidity states we can ask if nodes may collaboratively select a circulation that is benefitial to their own liquidity desires and interests?
This can be done using the geometric properties:

The channel replenishment problem is equivalent to finding a circulation that shifts the liquidity to a more desireable state (e.g. less depletion). This problemen reduces to selecting an optimal element from $[\lambda]$. As demonstrated previously, feasible solutions can be obtained by solving a system of linear Diophantine equations subject to two constraint classes \textbf{Conservation of Liquidity} and \textbf{Conservation of Wealth}.
The solution space forms a convex polytope $P \subset \mathbb{Z}^{2m} \cap [0, c(u,v)]^{2m}$, which may be empty for certain $(\omega, G)$ pairs due to the intersection with the capacity hypercube.

Our optimization approach introduces a target liquidity state $x_0 \in \mathbb{Z}^{2m}$, representing the desired channel configuration. While we typically select $x_0$ where each node holds exactly half of each channel's capacity ($c(u,v)/2$), the framework accommodates arbitrary target states. The optimal replenishment problem then becomes:

Minimize $||x - x_0||_2$ subject to $x \in P \cap \mathbb{Z}^{2m}$.

Unfortunately it is known that optimizing a quadratic function over integers is np-hard \cite{delpia2014mixedintegerquadraticprogrammingnp}. However we present a very practical heuristic that delivers an exact solution that is close enough to the optimal solution. The important point is that we need an exact solution and not just an approximate solution. Thus we approach the problem in two phases: We start with a \textbf{Continuous relaxation} by computing $x_{\rho} = argmin\left(||x - x_0||_2\right)$ over $P \subset \mathbb{R}^{2m}$ using quadratic programming. In a second step we create a $\delta$ cube around $x_{\rho}$ in which we find some feasible \textbf{integer approximation}:

\[B_\delta(x_\rho) := \left\{x \in \mathbb{Z}^{2m} : ||x - x_\rho||_\infty \leq \delta\right\}\]

We suggest to choose $\delta = \left|\sqrt{\frac{||x_\rho - x_0)||_2}{m}))}+1\right|$ though one could search for $\delta$ with step size control. However our choice of $\delta$ ensures the neighborhood contains feasible integer solutions while maintaining computational tractability. 

We implemented the protocol and run it on a network with the topoology of a random graph with 100 nodes and 1556 channels. We run payments until the network went into a state with most channel's being depleted. In that state $11.31\%$ of all channels have between have liquidity between $0.4$ and $0.6$
and $41.52\%$ of all channels have between have liquidity between $0.1$ and $0.9$.

\begin{figure}[h]
\centering
 \includegraphics[width=0.41\textwidth]{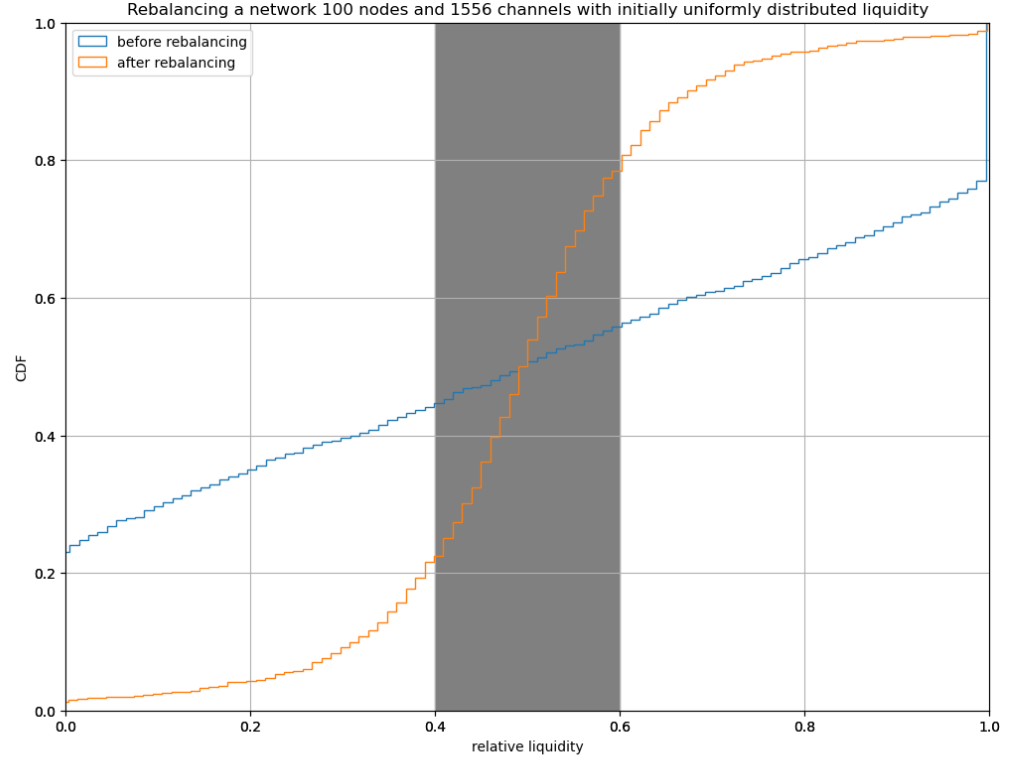}
\caption{We can see the stark shift of the shape of the liquidity distribution when using the geometric channel replenishment protocol}
\label{fig:replenishment}
\end{figure}

Figure \ref{fig:replenishment} indicates that after running the proposed protocool to find the ciruclation that moves the liquidity - given the wealth distribution closest to the center of the liquidity poloytope we can see that $56.30\%$ of all channels have their liquidity between $0.4$ and $0.6$ and $95.24\%$ of all channels have their liquidity values between $0.1$ and $0.9$.
The shape of the distribution went from the typical bimodal distribution to a normal distribution with moste liquidity centered around the middle of the channel. This would allow for quick path selection of feasible payments.

In order to achieve this rebalancing $33.23\%$ of the networks capacity had to be allocated to htlcs and rebalanced. We have not studied how this optimal geometric approach outperforms greedy heuristics\cite{pickhardt2019imbalancemeasureproactivechannel} that we have proposed in the past.

As we took an exact integer approximation within the $\delta$-environment of the real solution $x_\rho$ it is interesting to compare how far the integer and real solutions are away from the desired state $x_0$. The optimal solution over the real numbers has in this case a distance of $91092.32$ where the non optimal but feasible integer solution had a distance of $91096.86$. We note that the difference seems neglectible though a formal proof of an approximation guarantee of the proposed method would be desireable. 

It is important to note that a chanenl replenishment protocol like this yields severe practical issues:
\begin{enumerate}
\item A centralized coordinator is needed.
\item Nodes need to share their wealth information.
\item If individual target states are allowed (instead of just aiming for a 50/50 channel as we did in our simulation) peers must not contradict each other.
  \item It may be hard to run the protocol while the network is active.
\end{enumerate}
For those reasons we assume that this result is just of theoretical interest or best to be implemented by a lighting network service provider who could offer this kind of replenishment protocol as a service to the customers of that provider.

\section{Discussion and Conclusion}

Our geometric lens makes two shortcomings of today’s two–party, source–routed LN explicit. First, many desired target wealth vectors lie outside the feasible region $W_G$ because sparse topologies have narrow cut intervals and the conservation of liquidity heavily constraints the ILP that defines $W_G$. This inflates the expected infeasible rate $\rho$ and, by $S=\zeta/\rho$, caps the sustainable off-chain payment bandwidth. Second, under linear, asymmetric fees the dynamics inside a wealth fiber tend to deplete channels. This reduces the reliability of otherwise feasible payments. Given the current incentives in the network depletion is an emergent phenomenon even in a circular economy when all payments are net-zero. The existence of infeasible payments and depletion are consequences of the geometry of cuts and cycles under the given current protocol design. They are neither software bugs nor necessarily the consequence of poor node operator behavior. 

The presented theory is constructive as it identifies tunable levers:
\begin{itemize}
  \item \textbf{Grow $W_G$ (reduce $\rho$).} Favor multi-party primitives like coin-pools or channel factories that widen cuts in expectation. Improve the inter-pool connectivity and place capital with cut widths in mind.
  \item \textbf{Raise effective $\zeta$.} Batch settlements via factories or other means like splices. Move recurring flows on-chain in larger units andexplore designs that amortize on-chain bandwidth across many off-chain transfers. Also here multi-party primitives seem desireable.
  \item \textbf{Tame depletion.} (i) Symmetric fees remove directional imbalances (simple, though likely unpopular) (ii) Convex fees create interior stall points on cycles (effective locally, but they complicate source routing unless channel state is broadcasted (iii) Coordinated replenishment (choose an optimal circulation within a fiber) recenters liquidity without changing wealth.
\end{itemize}

To isolate first principles, we abstracted from base fees, upfront fees, CLTV deltas, HTLC count/size limits, liquidity ads, and jamming countermeasures. These details matter for engineering targets and constants. These details - while important in practise - do not alter the structural facts that (i) feasibility is governed by cut intervals (hence $W_G$) and (ii) linear asymmetric fees drive cycle-wise boundary optima (hence depletion).

A key open design choice is whether multi-party channels should primarily serve as channel factories (periodically batching updates and effectively increasing $\zeta$), or as persistent payment venues that also enlarge $W_G$ between batches (e.g., cross-pool swaps). The former is operationally simpler and immediately helps $S$ via $\zeta$; the latter promises larger feasible regions and lower $\rho$ but requires careful incentives, privacy, and failure handling. Our results suggest both directions are fruitful. Future research is needed to decide the best mix which likely depends on demand patterns and governance constraints.

The geometric viewpoint turns system evolution into measurable goals: track the infeasible payment rate $\rho$, track depletion, and act on the levers above. In practice, widen cuts with multi-party primitives (channel factories/coin-pools), keep operation near interior fibers with convex pricing or coordinated replenishment, and amortize on-chain costs with batching. It is up to the community to judge how severe these theoretical limits are in practice and which levers to integrate into the protocol. For depletion, some levers seem technically easier (symmetric fees) and others harder (utilization-responsive fees under source routing or off-chain replenishment). By contrast, infeasibility appears to require deeper protocol upgrades: with two-party channels, $W_G$ is likely too small and available on-chain bandwidth insufficient for scale. With targeted support, however, payment channel networks can spend more time inside the feasible region and deliver cheaper, more reliable payments.

The machinery we propose to widen feasibility is necessary precisely because we insist on a trust-minimised, non-custodial system. As shown in Section~\ref{sec:credit}, allowing credit would make many liquidity constraints evaporate at the cost of the very property that motivates this line of work. As Nakamoto noted:
\begin{displayquote}
\emph{“The root problem with conventional currency is all the trust that’s required to make it work. The central bank must be trusted not to debase the currency, but the history of fiat currencies is full of breaches of that trust.”}\footnote{Satoshi Nakamoto (2009)\url{https://satoshi.nakamotoinstitute.org/posts/p2pfoundation/1/}} 
\end{displayquote}
Our results delineate the engineering frontier within those constraints and show what is possible to be achieved: Channel factories and coin-pools measurably enlarge $W_G$ without reintroducing trusted intermediaries. The remaining task is to scale these levers into deployable, privacy-preserving protocols.

\section{Acknowledgements}
This work was funded by Open Sats Inc. and various Patreons. I would like to thank the reviewers of earlier versions. In particular Stefan Richter, Christian Decker, Christian Kuemmerle, Anastasios Sidiropoulos (who pointed out an error in my initial description of the wealth distributions), Elias Roher and bitromortac. Thanks to Rene Treffer for sharing an example with me that indicated how dropping the assumption that liquidity in channels is distributed independently can be used to show how bimodal liquidity distributions emerge.
I am not a native English speaker. The first full draft was copy-edited with the help of ChatGPT. I reviewed and revised all suggested changes, and the original text is preserved as comments in the LaTeX source. In subsequent revisions and extensions I adopted a more vibe-writing approach, so sentence-level attribution of AI-assisted phrasing is no longer practical. Conceptually, while I was working to formalize results on the feasibility region of multi-party channels, the AI—building on my initial ideas suggested the perspective of analyzing expected cut widths, which informed parts of this manuscript.
\bibliography{references}
\bibliographystyle{plain}

\appendix

  \section{Low Dimensional Example}
\label{sec:paymentsExample}
We have already seen that for different topologies the feasible region and thus $r(G)$.
For our example we set:
\begin{equation*}
  \begin{split}
    V   & =\{x,y,z\} \\
    E=\{e=(x,y),f & =(y,z), g=(x,z)\}\\
    c_e=3, c_f & =7, c_g=11
  \end{split}
\end{equation*}

We want to see if the wealth vector $w=(5,6,10)$ is feasible in this network.
thus we construct our system of linear equations as.

We have $m$ constraints related to conservation of liquidity.
\begin{align*}
  e_x + e_y  & = c_e \\
  f_y + f_z & = c_f \\
  g_x + g_z & = c_g \\
\end{align*}
and $n$ constraints related to the wealth vector.
\begin{align*}
  e_x +  g_x & = w_x \\
  e_y +  f_y & = w_y \\
  f_z +  g_z & = w_z
\end{align*}
Together this results in the following system of linear equations.
\[
  \left[
    \begin{array}{cccccc}
      1 & 1 & 0 & 0 & 0 & 0 \\
      0 & 0 & 1 & 1 & 0 & 0 \\
      0 & 0 & 0 & 0 & 1 & 1 \\
      1 & 0 & 0 & 0 & 1 & 0 \\
      0 & 1 & 1 & 0 & 0 & 0 \\
      0 & 0 & 0 & 1 & 0 & 1 
    \end{array}
    \right]
\begin{bmatrix}
  e_{x}  \\
  e_{y}  \\
  f_{y}  \\
  f_{z}  \\
  g_{x}  \\
  g_{z}  \\
\end{bmatrix} 
  = \begin{bmatrix}
  c_e\\
  c_f\\
  c_g\\
  w_x\\
  w_y\\
  w_z
\end{bmatrix}
\]
Again we ca replace the variables $e_y, f_z$ and $g_z$ with $e_x, f_y$ and $g_x$ respectively. 
\[
  \left[
    \begin{array}{ccc}
      1 & 0 & 1 \\
      - 1& 1 & 0 \\
      0 & -1 & -1 
    \end{array}
    \right]
\begin{bmatrix}
  e_{x}  \\
  f_{y}  \\
  g_{x}  \\
\end{bmatrix} 
  = \begin{bmatrix}
  w_x\\
  w_y\\
  w_z
  \end{bmatrix}
  - \begin{bmatrix}
  0\\
  c_e\\
  c_f + c_g
  \end{bmatrix}
\]

The solution space is $1$ dimensional and can be parmeterized via:

$$\sigma = \{\begin{pmatrix}0\\w_y-c_e\\w_x\end{pmatrix} + t\cdot\begin{pmatrix}1\\1\\-1\end{pmatrix}\in\mathbb{Z}^3| t\in\mathbb{Z}\}$$

Plugging in the wealth vector $w=(5,6,10)^t$ and the capacities we get:

$$\sigma = \{\begin{pmatrix}0\\3\\5\end{pmatrix} + t\cdot\begin{pmatrix}1\\1\\-1\end{pmatrix}\in\mathbb{Z}^3| t\in\mathbb{Z}\}$$

\begin{figure}[h]
  \label{fig:preimage}
\centering
\includegraphics[width=0.5\textwidth]{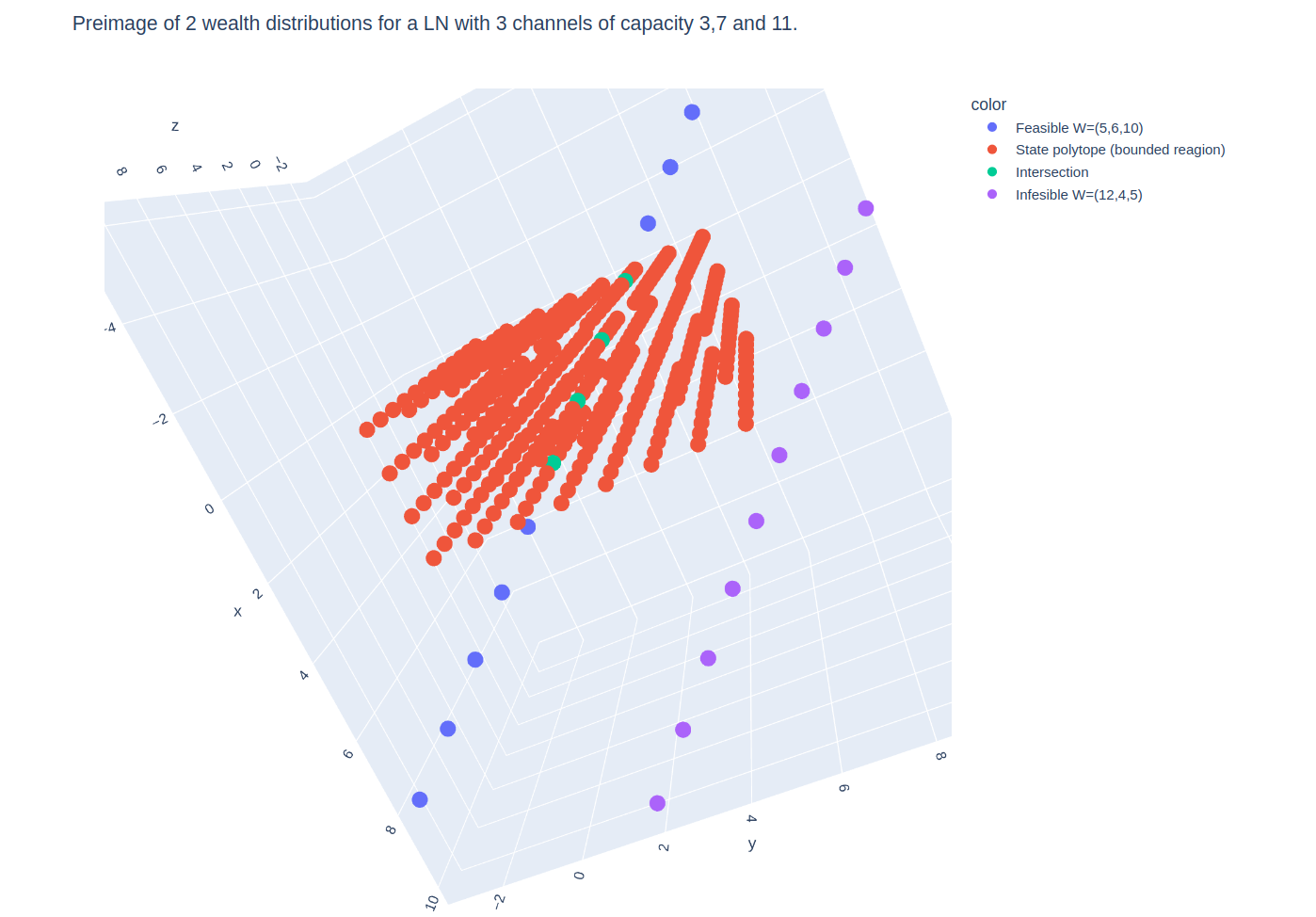}
\caption{We see the preimage of $\pi|_{\mathbb{Z}^m}$ for $2$ wealth distributions. One does have an empty intersection with the bounding box of the state polytope and is thus infeasible on the example network.}
\end{figure}

Finally we racall that our state space $S=\{0,\dots,3\}\times\{0,\dots,7\}\times\{0,\dots,11\}$.
We define $F = \sigma\cap S$ to be the set of feasible solutions.
We see that the parameter $t$ cannot be negative. Thus we get the following table of feasible solutions.

\begin{table}[]
\begin{tabular}{|l|c|c|c|c|}
\hline
$t$                                                & $0\leq e_x\leq 3$   & $0\leq f_y\leq 7$   & $0\leq g_x\leq 11$   & \textbf{feasable}        \\ \hline
\rowcolor[HTML]{34FF34} 
\cellcolor[HTML]{FE0000}-1                       & \cellcolor[HTML]{FE0000}-1  & 2                        & 6                        & \cellcolor[HTML]{FE0000}N \\ \hline
\rowcolor[HTML]{34FF34} 
{\color[HTML]{000000} 0}                         & {\color[HTML]{000000} 0}    & {\color[HTML]{000000} 3} & {\color[HTML]{000000} 5} & {\color[HTML]{000000} Y}  \\ \hline
\rowcolor[HTML]{34FF34} 
1                                                & 1                           & 4                        & 4                        & Y                         \\ \hline
\rowcolor[HTML]{34FF34} 
2                                                & 2                           & 5                        & 3                        & Y                         \\ \hline
\rowcolor[HTML]{34FF34} 
3                                                & 3                           & 6                        & 2                        & Y                         \\ \hline
\rowcolor[HTML]{34FF34} 
\cellcolor[HTML]{FE0000}{\color[HTML]{000000} 4} &  \cellcolor[HTML]{FE0000}4  & 7                        & 1                        & \cellcolor[HTML]{FE0000}N \\ \hline
\end{tabular}
\end{table}

If $y$ wants to make a payment of $2$ coins to $z$ we take the wealth vector $w=(5,6,10)^t$ and compute:

\[w'=\begin{pmatrix}5\\6\\10\end{pmatrix}-2\cdot b_y + 2\cdot b_z = \begin{pmatrix}5\\4\\12\end{pmatrix}\]

  We get:
\[\sigma' = \{\begin{pmatrix}0\\1\\5\end{pmatrix}\ + t\cdot \begin{pmatrix}1\\1\\-1\end{pmatrix}\}\]

  For $t=0$ we have $(0,1,5)^t\in S$.
  Thus the wealth distribution $w'\in W$ and the payment was feasible.

  Assuming we have the wealth distribution $w'$ and user $z$ wants to pay $10$ coins to $x$ then we had the wealth distribution $\omega=(15,4,2)$.
  As we have alredy seen in figure \ref{fig:preimage} that the solution space of $\omega$ does have an empty intersection with $S$.
  Thus the payment of $10$ coins is infeasible.

\end {document}